\newcommand{\cceq}{\mathop{::=}}
\renewcommand{\epsilon}{\varepsilon}
\renewcommand{\phi}{\varphi}
\newcommand{\pow}[1]{2^{#1}}
\newcommand{\nats}{\mathbb{N}}
\newcommand{\size}[1]{|#1|}
\newcommand{\set}[1]{\{ #1 \}}
\newcommand{\ap}[0]{\mathrm{AP}}
\newcommand{\F}{{\mathbf{F\,}}}
\newcommand{\G}{{\mathbf{G\,}}}
\newcommand{\U}{{\mathbf{\,U\,}}}
\newcommand{\X}{{\mathbf{X\,}}}
\newcommand{\K}{{\mathcal{K}}}
\newcommand{\ltl}{{LTL}\xspace}
\newcommand{\hyltl}{{HyperLTL}\xspace}
\newcommand{\hyctlstar}{{HyperCTL$^*$}\xspace}
\newcommand{\hyltlfgone}{{HyperLTL$^1(\mathbf{F},\mathbf{G})$}\xspace}
\newcommand{\hyltlfgxone}{{HyperLTL}$^1(\mathbf{F},\mathbf{G},\mathbf{X}^*)$\xspace}
\newcommand{\qptl}{QPTL\xspace}
\newcommand{\qbf}{QBF\xspace}
\newcommand{\suffix}[2]{#1[#2,\infty)}
\newcommand{\var}{\mathcal{V}}
\newcommand{\merge}{\mathrm{mrg}}
\newcommand{\M}{\mathcal{M}}
\newcommand{\np}{\textsc{NP}\xspace}
\newcommand{\pspace}{\textsc{PSpace}\xspace}
\newcommand{\nexpt}{\textsc{NExpTime}\xspace}
\newcommand{\exps}{\textsc{ExpSpace}\xspace}
\newcommand{\ntwoexpt}{\textsc{N2ExpTime}\xspace}
\newcommand{\tower}{\textsc{Tower}\xspace}
\newcommand{\TD}{\mathrm{td}}
\newcommand{\AD}{\mathrm{ad}}
\newcommand{\lord}{<}
\newcommand{\myquot}[1]{``#1''}
\newcommand{\traces}[1]{\mathrm{T}(#1)}
\def\rdots{\rotatebox[origin=l]{29}{$\scriptscriptstyle\ldots\mathstrut$}}
\newcommand{\pad}[1]{\#(#1)}
\newcommand{\window}[2]{\#_{#2}(#1)}
\newcommand{\barsig}{\overline{\Sigma}}
\newcommand{\witness}[1]{\mathrm{wtns}(#1)}
\newcommand{\twr}{\mathrm{twr}}
\newcommand{\otherone}[1]{\overline{#1}}
\newcommand{\rk}[2]{\mathrm{rk}_{#1}(#2)}
\newcommand{\spaceprop}{\mathrm{spc}}
\newcommand{\timeprop}{\mathrm{tm}}
\newcommand{\rankprop}{\mathrm{rk}}
\newcommand{\tmark}{{m_t}}
\newcommand{\smark}{{m_s}}
\newcommand{\inmark}{in}
\newcommand{\nimark}{ni}
\newcommand{\ltmark}{lt}
\newcommand{\suc}{sc}
\newcommand{\gtmark}{gt}
\newcommand{\incop}{\texttt{++}}
\newcommand{\decop}{\texttt{-\hspace{0pt}-}}
\newcommand{\testop}{\texttt{=0?}}
\title{The Keys to Decidable HyperLTL Satisfiability:\\ Small Models or Very Simple Formulas}\author{Corto Mascle\inst{1} \and Martin Zimmermann\inst{2}}
\institute{ENS Paris-Saclay, Cachan, France\\ \email{corto.mascle@ens-paris-saclay.fr} \and
University of Liverpool, Liverpool, United Kingdom\\
  \email{martin.zimmermann@liverpool.ac.uk}}
\begin{document}

\maketitle

\begin{abstract}
HyperLTL, the extension of Linear Temporal Logic by trace quantifiers, is a uniform framework for expressing information flow policies by relating multiple traces of a security-critical system.
HyperLTL has been successfully applied to express fundamental security policies like noninterference and observational determinism, but has also found applications beyond security, e.g., distributed protocols and coding theory.
However, HyperLTL satisfiability is undecidable as soon as there are existential quantifiers in the scope of a universal one.
To overcome this severe limitation to applicability, we investigate here restricted variants of the satisfiability problem to pinpoint the decidability border.

First, we restrict the space of admissible models and  show decidability when restricting the search space to models of bounded size or to finitely representable ones.
Second, we consider formulas with restricted nesting of temporal operators and show that nesting depth one yields decidability for a slightly larger class of quantifier prefixes.
We provide tight complexity bounds in almost all cases.
\end{abstract}

\section{Introduction}
\label{sec_intro}
The introduction of temporal logics for the specification of information flow policies~\cite{DBLP:conf/post/ClarksonFKMRS14} was a significant milestone in the long and successful history of applying logics in computer science~\cite{journals/bsl/HalpernHIKVV01}.
Probably the most important representative of these logics is \hyltl~\cite{DBLP:conf/post/ClarksonFKMRS14}, which extends Linear Temporal Logic (\ltl)~\cite{Pnueli/1977/TheTemporalLogicOfPrograms} by trace quantifiers.
This addition allows to express properties that relate multiple execution traces, which is typically necessary to capture the flow of information~\cite{Clarkson+Schneider/10/Hyperproperties}.
In contrast, \ltl, currently the most influential specification language for reactive systems, is only able to express properties of single traces.

\hyltl provides a uniform framework for expressing information flow policies in a formalism with intuitive syntax and semantics, and for the automated verification of these policies:
A wide range of policies from the literature~\cite{DBLP:conf/sp/GoguenM82a,DBLP:conf/sp/McCullough88,DBLP:journals/tse/McCullough90,DBLP:conf/sp/McLean94,DBLP:journals/jcs/Millen95,DBLP:conf/csfw/ZdancewicM03} with specialized verification algorithms is expressible in \hyltl, i.e., universal \hyltl verification algorithms are applicable to all of them.

As an example, consider a system with a set~$I$ of inputs, which contains a hidden input~$h \in I$, and an output~$o$. 
Now, noninterference~\cite{DBLP:conf/sp/GoguenM82a} between $h$ and $o$ requires that no information about $h$ is leaked via $o$, i.e., for all execution traces~$\pi$ and $\pi'$, if the inputs in $\pi$ and $\pi'$ only differ in $h$, then they have the same output at all times.
Formally, this is captured by the \hyltl formula
\[ \forall \pi. \forall \pi'. \left(\G \bigwedge \nolimits_{i \in I \setminus\set{h}} (i_\pi \Leftrightarrow i_{\pi'}) \right) \Rightarrow  \G (o_\pi \Leftrightarrow o_{\pi'}) .
\]

Today, there are tools for model checking \hyltl properties~\cite{hyperliveness,DBLP:conf/cav/FinkbeinerRS15}, for checking satisfiability of \hyltl properties~\cite{DBLP:conf/atva/FinkbeinerHH18,DBLP:conf/cav/FinkbeinerHS17}, for synthesizing reactive systems from \hyltl properties~\cite{DBLP:conf/cav/FinkbeinerHLST18}, and for runtime monitoring of \hyltl properties~\cite{AgrawalB16,DBLP:conf/rv/BonakdarpourF16,DBLP:conf/tacas/FinkbeinerHST18}.
Furthermore, the extraordinary expressiveness of \hyltl has been exhibited~\cite{DBLP:conf/stacs/Finkbeiner017} and connections to first and second-order predicate logics have been established~\cite{hierarchy,DBLP:conf/stacs/Finkbeiner017}.

The major drawback of \hyltl is the usual price one has to pay for great expressiveness: prohibitively high worst-case complexity.
In particular, model checking finite Kripke structures against \hyltl formulas is nonelementary~\cite{DBLP:conf/post/ClarksonFKMRS14} and satisfiability is even undecidable~\cite{DBLP:conf/concur/FinkbeinerH16}.
These results have to be contrasted with model checking and satisfiability being $\pspace$-complete for \ltl~\cite{Sistla:1985:CPL:3828.3837}, problems routinely solved in real-life applications~\cite{Kurshan2018}.

Due to the sobering state of affairs, it is imperative to find fragments of the logic with (more) tractable complexity. 
In this work, we focus on the satisfiability problem, the most fundamental decision problem for a logic.
Nevertheless, it has many applications in verification, e.g., checking the equivalence and implication of specifications can be reduced to satisfiability.
Finally, the question whether a property given by some \hyltl formula is realizable by some system is also a satisfiability problem.

A classical attempt to overcome the undecidability of the satisfiability problem is to restrict the number of quantifier alternations of the formulas under consideration.
In fact, the alternation depth is the measure underlying the nonelementary complexity of the \hyltl model checking problem~\cite{DBLP:conf/post/ClarksonFKMRS14}.
However, the situation is different for the satisfiability problem: It is undecidable even when restricted to $\forall\exists^*$ formulas, i.e., formulas starting with one universal quantifier followed by existential ones~\cite{DBLP:conf/concur/FinkbeinerH16}.
All remaining prefix classes are decidable by reductions to the \ltl satisfiability problem, e.g., the satisfiability problem is $\pspace$-complete for the alternation-free prefix classes~$\exists^*$ and $\forall^*$ and $\exps$-complete for the class~$\exists^*\forall^*$~\cite{DBLP:conf/concur/FinkbeinerH16}. 

However, there are more complexity measures beyond the alternation depth that can be restricted in order to obtain tractable satisfiability problems, both on formulas and on models.
The latter case is of particular interest, since it is known that not every satisfiable \hyltl has a \myquot{simple} model, for various formalizations of \myquot{simple}~\cite{DBLP:conf/stacs/Finkbeiner017}.
Thus, for those formulas, such a restriction could make a significant difference. 
Furthermore, from a more practical point of view, one is often interested in whether there is a, say, finite model while the existence of an intricate infinite model may not be useful. 

We study the satisfiability problem for formulas with restricted quantifier prefixes and restricted temporal depth~\cite{DBLP:journals/iandc/DemriS02}, which measures the nesting of temporal operators. 
Our main result here shows that satisfiability is even undecidable for formulas of the form $\forall^2\exists^*\phi$, where $\phi$ has temporal depth one and only uses eventually~$\F$ and always~$\G$,  i.e., it is a Boolean combination of formulas~$\F \phi'$ with propositional $\phi'$.
Thereby, we strengthen the previous undecidability result for $\forall\exists^*$ by bounding the temporal depth to one, but at the price of a second universal quantifier.
Moreover, we clarify the border between decidability and undecidability at temporal depth two:
Using only one universally quantified variable, temporal depth one, and only $\F$, $\G$, and nested applications of next~$\X$ leads to decidability. 
Finally, we show that every \hyltl formula can be transformed into an equisatisfiable $\forall^2\exists^*$ formula of temporal depth two, i.e., this fragment already captures the full complexity of the satisfiability problem.

Thus, the overall picture is still rather bleak: if one only restricts the formula then the islands of decidability are very small.
Phrased differently, even very simple formulas are extremely expressive and allow to encode computations of Turing-complete devices in their models.
However, note that such models are necessarily complex, as they need to be able to encode an unbounded amount of information.

Thus, we also consider satisfiability problems for arbitrary formulas, but with respect to restricted models which do not allow to encode such computations.
In particular, we consider three variants of increasing complexity:
Checking whether a given \hyltl formula has a model of a given cardinality~$k$ is $\exps$-complete, whether it has a model containing only ultimately periodic traces of length at most $k$ is $\ntwoexpt$-complete, and checking whether it has a model induced by a Kripke structure with $k$ states is $\tower$-complete. 
The last result is even true for a fixed Kripke structure, which therefore has implications for the complexity of the model checking problem as well.
Thus, the situation is more encouraging when checking for the existence of small models: 
satisfiability becomes decidable, even with (relatively) moderate complexity in the first two cases.

However, as argued above, all three approaches are (necessarily) incomplete:
There are satisfiable formulas that have only infinite models, satisfiable formulas that have only non-ultimately periodic models, and satisfiable formulas that have no $\omega$-regular models \cite{DBLP:conf/stacs/Finkbeiner017}, a class of models that includes all those that are induced by a finite Kripke structure.

All in all, our work shows that \hyltl satisfiability remains a challenging problem, but we have provided a complete classification of the tractable cases in terms of alternation depth, temporal depth, and representation of the model (for formulas without until).

All proofs omitted due to space restrictions can be found in the appendix.

\section{Definitions}
\label{sec_hyperltl}

Fix a finite set~$\ap$ of atomic propositions. A \emph{valuation} is a subset of $\ap$. A \emph{trace} over $\ap$ is a map $t \colon \nats \rightarrow \pow{\ap}$, denoted by $t(0)t(1)t(2) \cdots$, i.e., an infinite sequence of valuations. The set of all traces over $\ap$ is denoted by $(\pow{\ap})^\omega$. The \emph{projection} of $t$ to $\ap'$ is the trace~$(t(0) \cap \ap') (t(1) \cap \ap') (t(2) \cap \ap') \cdots $ over $\ap'$. A trace~$t$ is \emph{ultimately periodic}, if $t = x \cdot y^\omega$ for some $x,y \in (\pow{\ap})^+$, i.e., there are $s,p>0$ with $t(n) = t(n+p)$ for all $n \ge s$.

The formulas of \hyltl are given by the grammar
\begin{align*}
\phi & {} \cceq {}  \exists \pi. \phi \mid \forall \pi. \phi \mid \psi \\
\psi & {}  \cceq {}  a_\pi \mid \neg \psi \mid \psi \vee \psi \mid \X \psi \mid \psi \U \psi 
\end{align*}
where $a$ ranges over atomic propositions in $\ap$ and where $\pi$ ranges over a fixed countable set~$\var$ of \emph{trace variables}. Conjunction, implication, equivalence, and exclusive disjunction~$\oplus$, as well as the temporal operators eventually~$\F$ and always~$\G$ are derived as usual. A \emph{sentence} is a closed formula, i.e., a formula without free trace variables. The \emph{size} of a formula $\phi$, denoted by $\size{\phi}$, is its number of distinct subformulas.

The semantics of \hyltl is defined with respect to a \emph{trace assignment}, a partial mapping~$\Pi \colon \var \rightarrow (\pow{\ap})^\omega$. The assignment with empty domain is denoted by $\Pi_\emptyset$. Given a trace assignment~$\Pi$, a trace variable~$\pi$, and a trace~$t$ we denote by $\Pi[\pi \rightarrow t]$ the assignment that coincides with $\Pi$ everywhere but at $\pi$, which is mapped to $t$. We also use shorthand notation like~$[\pi_1 \rightarrow t_1,\ldots ,\pi_n \rightarrow t_n]$ and $[(\pi_i \rightarrow t_i)_{1 \le i \le n}]$ for $\Pi_\emptyset[\pi_1 \rightarrow t_1]\ldots[\pi_n \rightarrow t_n]$, if the $\pi_i$ are pairwise different. Furthermore, $\suffix{\Pi}{j}$ denotes the trace assignment mapping every $\pi$ in $\Pi$'s domain to $\Pi(\pi)(j)\Pi(\pi)(j+1)\Pi(\pi)(j+2) \cdots $.

For sets~$T$ of traces and trace assignments~$\Pi$ we define 
\begin{itemize}
	\item $(T, \Pi) \models a_\pi$, if $a \in \Pi(\pi)(0)$,
	\item $(T, \Pi) \models \neg \psi$, if $(T, \Pi) \not\models \psi$,
	\item $(T, \Pi) \models \psi_1 \vee \psi_2 $, if $(T, \Pi) \models \psi_1$ or $(T, \Pi) \models \psi_2$,
	\item $(T, \Pi) \models \X \psi$, if $(T,\suffix{\Pi}{1}) \models \psi$,
	\item $(T, \Pi) \models \psi_1 \U \psi_2$, if there is a $j \ge 0$ such that $(T,\suffix{\Pi}{j}) \models \psi_2$ and for all $0 \le j' < j$: $(T,\suffix{\Pi}{j'}) \models \psi_1$, 
	\item $(T, \Pi) \models \exists \pi. \phi$, if there is a trace~$t \in T$ such that $(T,\Pi[\pi \rightarrow t]) \models \phi$, and 
	\item $(T, \Pi) \models \forall \pi. \phi$, if for all traces~$t \in T$: $(T,\Pi[\pi \rightarrow t]) \models \phi$. 
\end{itemize}
We say that $T$ \emph{satisfies} a sentence~$\phi$ if $(T, \Pi_\emptyset) \models \phi$. In this case, we write $T \models \phi$ and say that $T$ is a \emph{model} of $\phi$. Conversely, satisfaction of quantifier-free formulas does not depend on $T$. Hence, we say that $\Pi$ \emph{satisfies} a quantifier-free~$\psi$ if $(\emptyset,\Pi) \models\psi$ and write $\Pi \models \psi$ (assuming $\Pi$ is defined on all trace variables that appear in $\psi$). 

The \emph{alternation depth} of a \hyltl formula $\phi$, denoted by $\AD (\phi)$, is defined as its number of quantifier alternations. Its \emph{temporal depth}, denoted by $\TD (\phi)$, is defined as the maximal depth of the nesting of temporal operators in the formula. Formally, $\TD$ and $\AD$ are defined as follows :
\begin{multicols}{2}
\begin{itemize}
	\item $\TD(a_\pi) = 0$
	\item $\TD(\neg \psi) = \TD(\psi)$
	\item $\TD(\psi_1 \vee \psi_2) = \max(\TD(\psi_1), \TD(\psi_2))$,
	\item $\TD(\X \psi) = 1+\TD(\psi)$,
	\item $\TD(\psi_1 \U \psi_2) = 1+\max(\TD(\psi_1), \TD(\psi_2))$, 
 	\item $\TD(\exists \pi. \phi) = \TD(\phi)$
	\item $\TD(\forall \pi. \phi) = \TD(\phi)$. 
\end{itemize}

\columnbreak

\begin{itemize}
	\item $\AD(\exists \pi_1 \ldots \exists \pi_n.\phi) = 0$ for quantifier-free~$\phi$
	\item $\AD(\forall \pi_1 \ldots \forall \pi_n.\phi) = 0$ for quantifier-free~$\phi$
	\item $\AD(\exists \pi_1\ldots\exists \pi_n. \forall \tau. \phi) = 1+\AD(\forall \tau .\phi)$
	\item $\AD(\forall \pi_1\ldots\forall \pi_n. \exists \tau. \phi) = 1+\AD(\exists \tau .\phi)$
\end{itemize}
\end{multicols}

Although \hyltl sentences are required to be in prenex normal form, they are closed under Boolean combinations, which can easily be seen by transforming such formulas into prenex normal form. Note that this transformation can be implemented such that it changes neither the temporal nor alternation depth, and can be performed in polynomial time.

The fragment \hyltlfgone contains formulas of temporal depth one using only $\F$ and $\G$ as temporal operators, and \hyltlfgxone contains formulas using only $\F$, $\G$, and $\X$ as temporal operators and of temporal depth one, however we allow iterations of the $\X$ operator. Formally, \hyltlfgxone formulas are generated by the grammar
\begin{align*}
\phi & {} \cceq {}  \exists \pi. \phi \mid \forall \pi. \phi \mid \psi \\
\psi & {}  \cceq {}  \neg \psi  \mid \psi \vee \psi \mid \psi \wedge \psi \mid \X^n \psi' \mid \F \psi' \mid \G \psi' \mid \psi'  \\
\psi' & {} \cceq {} a_\pi \mid \neg \psi' \mid  \psi' \lor \psi' \mid \psi' \wedge \psi'
\end{align*}
where $n$ ranges over the natural numbers. The grammar for \hyltlfgone is obtained by removing $\X^n\psi'$ from the grammar above.

Also, we use standard notation for classes of formulas with restricted quantifier prefixes, e.g., $\forall^2\exists^*$ denotes the set of \hyltl formulas in prenex normal form with two universal quantifiers followed by an arbitrary number of existential quantifiers, but no other quantifiers.

Finally, we encounter various complexity classes, classical ones from \np to \ntwoexpt, as well as \tower (see, e.g., \cite{DBLP:journals/toct/Schmitz16}). 
Intuitively, \tower is the set of problems that can be solved by a Turing machine that, on an input of size $n$, stops in time 
$2^{2^{\rdots^2}}$, with the height of the tower of exponents bounded by $b(n)$, where $b$ is a fixed elementary function. The reductions presented in this work are polynomial time reductions unless otherwise stated.

\section{Satisfiability for Restricted Classes of Models}
\label{sec_models}
As the satisfiability problem \myquot{Given a \hyltl sentence $\varphi$, does $\varphi$ have a nonempty model?} is undecidable, even when restricted to finite models~\cite{DBLP:conf/concur/FinkbeinerH16}. Hence, one has to consider simpler problems to regain decidability. In this section, we simplify the problem by checking only for the existence of \emph{simple} models, for the following three formalizations of simplicity, where the bound~$k$ is always part of the input:
\begin{itemize}
    
    \item Models of cardinality at most $k$ (Theorem~\ref{ktraces}).
    
    \item Models containing only ultimately periodic traces~$x y^\omega$ with $\size{x}+\size{y} \leq k$ (Theorem~\ref{kperiod}).

    \item Models induced by a finite-state system with at most $k$ states (Theorem~\ref{kstates}).

\end{itemize}
In every case, we allow arbitrary \hyltl formulas as input and encode $k$ in binary.

With the following result, we determine the complexity of checking satisfiability with respect to models of bounded size. The algorithm uses a technique introduced by Finkbeiner and Hahn \cite[Theorem 3]{DBLP:conf/concur/FinkbeinerH16} that allows us to replace existential and universal quantification by disjunctions and conjunctions, if the model is finite. Similarly, the lower bound also follows from Finkbeiner and Hahn.

\begin{theorem}\label{ktraces}
The following problem is \exps-complete: 
Given a \hyltl sentence~$\phi$ and $k \in \nats$ (in binary), does $\phi$ have a model with at most $k$ traces?
\end{theorem}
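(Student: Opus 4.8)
The plan is to establish matching upper and lower bounds. For the upper bound, I would adapt the Finkbeiner--Hahn quantifier-elimination technique referenced just before the theorem. Given $\phi$ and $k$, observe that any model with at most $k$ traces can be described by choosing $k$ traces $t_1,\dots,t_k$ (possibly with repetitions, so exactly $k$ is fine). Over such a fixed finite candidate model $T = \{t_1,\dots,t_k\}$, every existential quantifier $\exists \pi$ ranges over the $k$ traces, so it becomes a $k$-fold disjunction $\bigvee_{i=1}^k$, and dually $\forall \pi$ becomes a $k$-fold conjunction. Iterating this over the (at most) $q := \size{\phi}$ quantifiers turns $\phi$ into a Boolean combination of $k^q$ quantifier-free \hyltl formulas, each obtained by substituting concrete indices $i_1,\dots,i_q \in \{1,\dots,k\}$ for the trace variables. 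Crucially, after this substitution each atom $a_{\pi_j}$ refers to trace $t_{i_j}$, so the whole thing is an \ltl-like formula over the alphabet $(\pow{\ap})^k$ (one track per trace slot), and $T \models \phi$ reduces to satisfiability of this combined \ltl formula over $(\pow{\ap})^k$-sequences $\langle t_1(n),\dots,t_k(n)\rangle_{n\in\nats}$. Since $k$ is given in binary, $k^q$ is exponential in the input, so the combined \ltl formula has exponential size; \ltl satisfiability is \pspace-complete, so we get an \exps upper bound. (One small point to check: a satisfying sequence over $(\pow{\ap})^k$ yields traces $t_1,\dots,t_k$ that may coincide, giving a model of cardinality $\le k$, as desired; nonemptiness is automatic since $k \ge 1$ — or is handled by a separate trivial case.)

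For the lower bound, I would invoke the \exps-hardness already shown by Finkbeiner and Hahn \cite{DBLP:conf/concur/FinkbeinerH16} for the $\exists^*\forall^*$ prefix class. The key observation is that an $\exists^{k}\forall^*$ formula is satisfiable if and only if it is satisfiable in a model of at most $k$ traces: the $k$ existential witnesses already determine a model on which the universal quantifiers then range, and since the universal quantifiers impose constraints but no witnesses, no further traces are needed. Hence the general $\exists^*\forall^*$-satisfiability instance (with $k$ existential quantifiers written in unary or binary — harmless) reduces in polynomial time to our bounded-model problem by simply also outputting the bound $k$ equal to the number of existential quantifiers. This transfers \exps-hardness. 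If one prefers a self-contained argument, the same reduction from the \ltl/\qptl-based encoding of exponential-space Turing machines used in \cite{DBLP:conf/concur/FinkbeinerH16} can be replayed directly with an explicit bound.

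The main obstacle, and the step deserving the most care, is the upper bound's complexity bookkeeping: one must verify that replacing quantifiers by $k$-fold Boolean connectives really produces only an exponential blow-up, not worse. With $q$ quantifiers each expanded into fan-out $k$, the naive formula tree has size $k^q \cdot \size{\phi}$; since $q \le \size{\phi}$ and $k \le 2^{\text{input size}}$, this is $2^{O(\size{\phi}^2 \cdot \text{input size})}$, which is exponential — fine for \exps, but the argument must be phrased to make clear that the exponent stays polynomial. A secondary subtlety is the alphabet: working over $(\pow{\ap})^k$ rather than $\pow{\ap}$ adds only a factor $k$ to the per-step description length, which is polynomial in the (binary) encoding of $k$, hence harmless. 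Once these two points are pinned down, invoking \pspace-completeness of \ltl satisfiability on the exponential-size instance closes the upper bound, and the reduction above closes the lower bound.
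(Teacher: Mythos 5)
Your proposal is correct and takes essentially the same approach as the paper: expand each quantifier into a $k$-fold conjunction/disjunction over $k$ trace variables and solve the resulting exponential-size instance in polynomial space (the paper delegates this last step to Finkbeiner--Hahn's $\exists^*$ fragment, which is itself settled by exactly the LTL-over-product-alphabet encoding you make explicit), and hardness is obtained by the identical reduction from $\exists^*\forall^*$ satisfiability with $k$ set to the number of existential quantifiers. One minor phrasing nit: a factor of $k$ is exponential, not polynomial, in the binary encoding of $k$, but since the expanded formula is already of exponential size this does not affect the \exps bound.
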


\begin{proof}
For the \exps upper bound, one can check, given $\phi$ and $k$, satisfiability of the sentence $\exists \pi_1\ldots\exists \pi_k.\overline{\phi}$ where $\overline{\phi}$ is defined inductively as follows:
\begin{itemize}

    \item $\overline{\phi} = \phi$ if $\phi$ is quantifier-free.

    \item $\overline{\forall \pi. \phi} = \bigwedge_{i=1}^k \overline{\phi}[\pi \leftarrow \pi_i]$.
    
    \item $\overline{\exists \pi. \phi} = \bigvee_{i=1}^k \overline{\phi}[\pi \leftarrow \pi_i]$.

\end{itemize}
Here, $\overline{\phi}[\pi \leftarrow \pi_i]$ is obtained from $\overline{\phi}$ by replacing every occurrence of $\pi$ by $\pi_i$. This sentence states the existence of at most $k$ traces satisfying $\phi$ by replacing every quantifier by an explicit conjunction or disjunction over the possible assignments.

The resulting sentence is of size at most $\size{\phi}k^{\size{\phi}} + k$, which is exponential in the size of the input and its satisfiability can be checked in polynomial space in the size of the resulting formula~\cite{DBLP:conf/concur/FinkbeinerH16}. As a result, the problem is in \exps as well.

Finkbeiner and Hahn showed that satisfiability is \exps-complete for sentences of the form $\exists^*\forall^*$~\cite{DBLP:conf/concur/FinkbeinerH16}.
This implies \exps-hardness of our problem, as if such a sentence, say with $k$ existential quantifiers, is satisfiable then it has a model with at most $k$ traces. 
\qed
\end{proof}

As the algorithm proceeds by a reduction to the satisfiability problem for $\exists^*$ formulas, which in turn is reduced to \ltl satisfiability, one can show that a \hyltl sentence~$\varphi$ has a model with $k$ traces if and only if it has a model with $k$ ultimately periodic traces.

Next, we consider another variant of the satisfiability problem, where we directly restrict the space of possible models to ultimately periodic ones of the form~$xy^\omega$ with $\size{x} + \size{y} \le k$. As we encode $k$ in binary, the length of those traces is exponential in the input and the cardinality of the model is bounded doubly-exponentially. 
This explains the increase in complexity in the following theorem in comparison to Theorem~\ref{ktraces}.

\begin{theorem}\label{kperiod}
The following problem is \ntwoexpt-complete: 
Given a \hyltl sentence~$\phi$ and $k \in \nats$ (in binary), does $\phi$ have a model whose elements are of the form $xy^\omega$ with $\size{x}+\size{y} \leq k$?
\end{theorem}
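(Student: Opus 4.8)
The plan is to prove matching \ntwoexpt{} bounds. For membership in \ntwoexpt{} I would first note that, since $k$ is given in binary, $k \le 2^q$ for some $q$ polynomial in the input size, so a trace of the form $xy^\omega$ with $\size{x}+\size{y}\le k$ is described by at most $2^q$ valuations and there are at most $2^{2^{O(q)}}$ such traces. The algorithm therefore nondeterministically guesses a nonempty set $T$ of such traces --- doubly exponentially many bits --- and then verifies $T \models \phi$ deterministically. For the verification I would reuse the technique from the proof of Theorem~\ref{ktraces}: replace every $\forall\pi$ in $\phi$ by $\bigwedge_{t\in T}$ and every $\exists\pi$ by $\bigvee_{t\in T}$, substituting the concrete traces of $T$ for the bound variables. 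This produces a Boolean combination of doubly exponentially many \ltl{} formulas, each evaluated on the zipping of at most $\size{\phi}$ explicitly given ultimately periodic traces of $T$; such a zipping is ultimately periodic with period at most $k^{\size{\phi}}\le 2^{O(q)}$, and \ltl{} model checking over an explicitly given ultimately periodic word runs in polynomial time, so each conjunct is checked in exponential time and the whole verification in doubly exponential time. Hence the problem lies in \ntwoexpt.

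\textbf{Lower bound, the encoding.} For \ntwoexpt-hardness I would reduce from the acceptance problem for nondeterministic Turing machines within a doubly-exponential time bound $2^{2^{p(n)}}$, which is \ntwoexpt-complete. Given an input $w$ of length $n$, I would construct $\phi_w$ and $k = 2^{\Theta(p(n))}$, so $k$ has polynomial bit-length, such that $\phi_w$ has a model of the prescribed form iff the machine accepts $w$. Since both the length of a configuration and the number of configurations of an accepting run are at most $2^{2^{p(n)}}$, a cell of the space--time diagram is addressed by a pair of $2^{p(n)}$-bit numbers; I would let each trace of the model encode a single such cell, writing its time address and position address bit-by-bit over its first $\Theta(2^{p(n)})$ positions, together with the $O(1)$ bits naming the cell's content, and advancing an auxiliary $p(n)$-bit counter at every step. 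This counter, together with the constraint $\size{x}+\size{y}\le k$, forces every well-formed trace to be periodic with exactly the intended period, so each such trace encodes exactly one well-formed cell; the traces have exponential length, and since there are $2^{\size{\ap}k}=2^{2^{\Theta(p(n))}}$ available traces there is room for the whole doubly-exponential diagram.

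\textbf{Lower bound, the formula.} The sentence $\phi_w$ would be the prenex form of a conjunction of: $\forall\pi$ asserting that $\pi$ is well-formed and, if its time address is $0$, that its content matches the initial configuration on $w$; $\exists\pi_0$ asserting that the cell $(0,0)$ is present; $\forall\pi\,\exists\pi'\,\exists\pi''$ asserting that every cell not on the right (resp.\ bottom) border has its right (resp.\ downward) neighbour present; $\forall\pi_1\forall\pi_2\forall\pi_3\forall\pi_4$ asserting that whenever these encode cells $(t,p-1),(t,p),(t,p+1),(t+1,p)$, the content of $\pi_4$ is a legal successor of those of $\pi_1,\pi_2,\pi_3$ under the transition relation, with the usual border special cases; and $\exists\pi_{\mathrm{acc}}$ asserting that some cell carries an accepting state. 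All clauses are polynomial-size \hyltl{} formulas: since all well-formed traces share the same counter sequence they advance synchronously, so comparing one $2^{p(n)}$-bit address against another is a polynomial-size \ltl{} assertion over a pair of traces (bitwise equality via $\G$; ``off by one'' via the usual carry pattern, e.g.\ $(\mathit{b}_\pi\wedge\neg\mathit{b}_{\pi'})\U(\neg\mathit{b}_\pi\wedge\mathit{b}_{\pi'}\wedge\G(\mathit{b}_\pi\Leftrightarrow\mathit{b}_{\pi'}))$ suitably guarded by the counter), and locality of the transition relation is captured using $\X$. For correctness, a model of $\phi_w$ is nonempty and contains the cell $(0,0)$, so by induction on $(t,p)$ via the neighbour clause it contains every cell with $t,p<2^{2^{p(n)}}$; the initial-configuration and transition clauses then force these cells to spell out a valid run of the machine, which by the last clause reaches an accepting state. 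Conversely, an accepting run on $w$, padded to length $2^{2^{p(n)}}$ by stuttering in the accepting configuration, yields cell by cell exactly such a model.

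\textbf{Main obstacle.} I expect the bulk of the work to be in the lower bound, in particular (i) making the internal counter and the bound $k$ together pin down the shape of every trace precisely, (ii) expressing the doubly-exponential arithmetic on cell addresses by polynomial-size \ltl{} formulas evaluated synchronously on pairs and quadruples of traces, and (iii) treating the border cases of the transition relation correctly; given these, the inductive argument that any model necessarily realises the entire space--time diagram, and the overall correctness of the reduction, should be routine.
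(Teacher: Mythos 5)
Your upper bound is essentially the paper's: guess a candidate model of doubly-exponential cardinality, expand the quantifiers into conjunctions and disjunctions over the guessed traces as in Theorem~\ref{ktraces}, and evaluate the resulting formula by exploiting ultimate periodicity (the paper bounds the relevant horizon by $k!+k$, you by an lcm bounded by $k^{\size{\phi}}$ per conjunct; both yield \ntwoexpt). Your lower bound, however, takes a genuinely different route: a direct space--time tableau encoding of a doubly-exponential-time nondeterministic machine, one trace per cell, whereas the paper reduces from a doubly-exponentially bounded variant of PCP with its \myquot{type one}/\myquot{type two} traces and binary ranks.

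The soundness direction of your tableau reduction has a genuine gap. First, nothing in your clauses prevents two traces from encoding the same cell address $(t,p)$ with different contents, so \myquot{the content of cell $(t,p)$} is not well defined; you need an explicit clause $\forall\pi\forall\pi'$ (equal addresses imply equal contents). Second, and more seriously, your transition clause constrains only the cell $(t+1,p)$ against the three cells above it. For a \emph{nondeterministic} machine this local four-cell condition does not pin down a single coherent run: if $(q,a)$ has two applicable transitions, the cells $(t+1,p-1)$, $(t+1,p)$ and $(t+1,p+1)$ may each be justified by a \emph{different} transition, so the head can vanish from row $t+1$ or be duplicated into several heads, and a multi-head pseudo-computation sharing one tape can reach $q_f$ even when $\M$ rejects $w$ within the bound. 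Hence the claimed induction that \myquot{the initial-configuration and transition clauses force these cells to spell out a valid run} is not justified as written. The standard repairs are routine but are exactly the missing ingredient: constrain full $2\times 3$ windows (universally quantify over all six cells), or tag every cell of row $t+1$ with the transition applied and force agreement along the row, together with head-uniqueness and same-address consistency clauses in the spirit of $\psi_{cons}$ in the paper's Theorem~\ref{nexp}. With these additions (and your counter-based well-formedness constraints, which are fine), the reduction would go through as a legitimate alternative to the paper's PCP-based argument.
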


\begin{proof}
For the upper bound, given a \hyltl sentence $\phi$ and $k \in \nats$ in binary we start by guessing a model $T \subseteq \set{xy^\omega \mid \size{x}+\size{y}\leq k}$. Let $n = \lceil \log_2(k) \rceil + \size{\phi}$ be the size of the input. We have $\size{T}\leq (2^{\size{\ap}} +1)^{k+1} \leq (2^n+1)^{2^n + 1}$ (as we can assume all atomic propositions appear in the formula, $\size{\ap} \leq \size{\phi} \leq n$), i.e., $\size{T}$ is doubly-exponential in $n$.

Then, we apply to $\phi$ a similar transformation as the one used in the proof of Theorem \ref{ktraces}, i.e., we create a variable~$\pi_{u,v}$ for each trace~$uv^\omega \in T$, and we replace in the sentence every universal (existential) quantifier by a conjunction (disjunction) over every possible trace assignment of the quantified variable over $T$. Formally, we define $\overline{\phi}$ as follows:
\begin{itemize}
    \item $\overline{\phi} = \phi$ if $\phi$ is quantifier-free
    \item $\overline{\forall \pi. \phi} = \bigwedge_{uv^\omega \in T} \overline{\phi}[\pi \leftarrow \pi_{u,v}]$
    \item $\overline{\exists \pi. \phi} = \bigvee_{uv^\omega \in T} \overline{\phi}[\pi \leftarrow \pi_{u,v}]$
\end{itemize}

The size of the formula is multiplied by $\size{T}$ at each new quantifier. In the end, the size of $\overline{\phi}$ is at most $\size{T}^{\size{\phi}} \size{\phi} \leq n((2^n+1)^{2^n + 1})^n = 2^{2^{O(n)}}$. Define $\Pi_T = [ (\pi_{u,v} \rightarrow uv^\omega)_{uv^\omega \in T} ]$.

Observe that as $T \subseteq \set{xy^\omega \mid \size{x}+\size{y}\leq k }$, for all $j>k$, $w(j+k!) = w(j)$ for every $w \in T$. Thus, every quantifier-free formula evaluated over $\Pi_T$ is satisfied at some index if and only if it is satisfied at some index less than $k! +k$. Therefore, one can evaluate $\overline{\phi}$ over $\Pi_T$ by recursively computing for each subformula the set of indices $0 \leq j < k+k!$ such that this subformula is satisfied by $\Pi_T[j,\infty)$. This procedure is polynomial in the size of $\overline{\phi}$, $T$ and $k!$, all three being doubly-exponential in $n$, i.e., it is doubly-exponential in the size of the input.

To prove the lower bound, we reduce from the following bounded variant of Post's correspondence problem (PCP): Given two lists~$u_1, \ldots, u_n $ and $u_1', \ldots, u_n' $ of words, does there exist a word~$s \in \set{1,\ldots, n}^+$ with $\size{s} \le \pow{\pow{n}}$ and $h(s) = h'(s)$? Here $h$ is the homomorphism induced by~$m \mapsto u_m$ for every $m$ and $h'$ is defined analogously. 
We refer to such a word~$s$ as a solution.
This variant is \ntwoexpt-hard, which can be shown by adapting the proof of undecidability of PCP as presented by Hopcroft and Ullman~\cite{DBLP:books/lib/HopcroftU69}. The same construction allows to reduce from the problem of deciding, given a nondeterministic Turing machine $\M$ and a word $w$, if $\M$ halts on $w$ in at most $\pow{\pow{\size{w}}}$ steps. 

Now, given an instance~$u_1, \ldots,u_n, u_1',\ldots,u_n'$ of this problem over an alphabet~$\Sigma$, we construct a sentence~$\phi$ and a $k \in \nats$ in polynomial time in $\sum_m  \size{u_m}+ \size{u_m'}$ such that $\phi$ has a model $T \subseteq \set{xy^\omega \mid \size{x}+\size{y}\leq k}$ if and only if the instance has a solution. 

Let $\ell$ be the maximal length of a word of the instance. For all $1\leq m \leq n$, let $\pad{u_m} = u_m\#^{\ell-\size{u_m}}$ and $\pad{u_m'}= u_m'\#^{l-\size{u_m'}}$, i.e., $\size{\pad{u_m}} = \size{\pad{u_m'}} = \ell$ for all $m$.
For every $a \in \Sigma$, let $\overline{a}$ be a fresh letter and let $\barsig = \set{\overline{a} \mid a \in \Sigma}$. For all $1 \leq m \leq n$ and $0\leq j < \size{u_m}$ let 
\[\window{u_m}{j} = \pad{u_m}(0)\cdots \pad{u_m}(j-1)\overline{\pad{u_m}(j)}\pad{u_m}(j+1)\cdots \pad{u_m}(\ell -1)\]
be the word $\pad{u_m}$ where the letter at position~$j$ is replaced by its associated letter in $\barsig$. Analogously, we define $\window{u_m'}{j}$. 

Further, for all $r\in \nats$, let $b_r \in \set{0,1}^*$ be the binary representation of $r$, with the least significant bit at the beginning and without trailing zeros. Finally, for all $b_r, b_{r'}$ let $[b_r,b_{r'}]$ be the unique word over $\set{0,1}^2$ such that $[b_r,b_{r'}](i) = (b_r(i),b_{r'}(i))$ for all $i$. If $b_r$ and $b_{r'}$ are not of the same length, then we pad the shorter one with $0$'s at the end.

In the following, we work with the set~$\ap=\Sigma \cup \barsig \cup \set{\#,\$,0,1} \cup \set{0,1}^2$ of atomic propositions. In the construction, we ensure that on every trace of the model exactly one proposition is satisfied at any moment in time. Thus, we treat traces as words over $\ap$ without making a distinction between a proposition~$a$ and the singleton $\set{a}$.

We define two types of traces: 
\begin{itemize}
    
    \item Traces of type one are of the form $\pad{u_m} \pad{u'_m} b_r\$^\omega$ for some $1\leq m \leq n$ and some $r \in \nats$, which is called the \emph{rank} of the trace.
    
    \item Traces of type two are of the form $(\window{u_{m}}{j})(\window{u_{m'}'}{j'})[b_r,b_{r'}]\$^\omega$ with $1\leq m,m' \leq n$, $0 \leq j < \size{u_{m}}$, $0 \leq j' < \size{u_{m'}'}$ and $r,r' \in \nats$. Note that we allow $m \neq m'$.
\end{itemize}

A trace~$t_1 = \pad{u_{m_1}} \pad{u'_{m_1}} b_{r_1}\$^\omega$ of type one and a trace~$t_2 = (\window{u_{m_2}}{j})(\window{u_{m'_2}'}{j'})[b_{r_2},b_{r'_2}]\$^\omega$ of type two are \emph{$u$-compatible} if $u_{m_1} = u_{m_2}$ and $r_1 = r_2$ and \emph{$u'$-compatible} if $u'_{m_1} = u'_{m'_2}$ and $r_1 = r'_2$.

Fix a trace~$t_0 = (\window{u_{m}}{j})(\window{u_{m'}'}{j'})[b_r,b_{r'}]\$^\omega$ of type two. 
It is \emph{final} if $j = \size{u_{m}}-1$, $j' = \size{u_{m'}'}-1$ and $b_r = b_{r'}$.
If it is not final, then a trace~$t_1$ of type two is a \emph{successor} of $t_0$, if one of the following holds:
\begin{itemize}
    
    \item $j < \size{u_{m}}-1$ and $j' < \size{u_{m'}'}-1$ and $t_1 = (\window{u_{m}}{j+1})(\window{u_{m'}'}{j'+1})[b_{r},b_{r'}]\$^\omega$.
    
    \item $j = \size{u_{m}}-1$ and $j' < \size{u_{m'}'}-1$ and $t_1 = (\window{u_{m^*}}{0})(\window{u_{m'}'}{j'+1})[b_{r+1},b_{r'}]\$^\omega$ for some $1\leq m^* \leq n$.
    
    \item $j < \size{u_{m}}-1$ and $j' = \size{u_{m'}'}-1$ and $t_1 = (\window{u_{m}}{j+1})(\window{u_{m^*}'}{0})[b_{r},b_{r'+1}]\$^\omega$ for some $1\leq m^* \leq n$.
    
    \item $j = \size{u_{m}}-1$ and $j' = \size{u_{m'}'}-1$ (which implies $r \neq r'$, as $t_0$ is not final) and $t_1 = (\window{u_{m^*}}{0})(\window{u_{{m^*}'}'}{0})[b_{r+1},b_{r'+1}]\$^\omega$ for some $1\leq m^*,{m^*}' \leq n$.

\end{itemize}

Now, we construct $\phi$. It is the conjunction of sentences expressing the following statements, which can be brought into prenex normal form. In order to improve readability, the construction of the sentences expressing each requirement is left to the reader.

\begin{enumerate}
    
    \item\label{kperiod_singleprop} Exactly one atomic proposition is satisfied at each position of each trace.
    
    \item\label{kperiod_typeness} Every trace is either of type one or of type two.
    
    \item\label{kperiod_ranks} No two distinct type one traces have the same rank.
    
    \item\label{kperiod_predecessor} For every type one trace $t$ of positive rank $r$ there exists a type one trace $t'$ of rank $r-1$.
    
    \item\label{kperiod_initial} There exists a type two trace of the form $(\window{u_{m}}{0})(\window{u_{m}'}{0})[b_0,b_0]\$^\omega$ for some $m$.

    \item\label{kperiod_successor} For every type two trace $t_0$ there exists a type two trace $t_1$ such that either $t_0$ is final or $t_1$ is a successor of $t_0$.
    
    \item\label{kperiod_consistent} For every type two trace~$t_2$ there exist two type one traces~$t_1$ and $t_1'$ such that $t_2$ and $t_1$ are $u$-compatible and $t_2$ and $t_1'$ are $u'$-compatible.
    
    \item\label{kperiod_sameletters} In every trace~$t = (\window{u_{m}}{j})(\window{u_{m'}'}{j'})[b_{r},b_{r'}]\$^\omega$  of type two, the letters from $\barsig$ in $u_{m}$ and $u_{m'}'$ are equal.

\end{enumerate}

As an example, consider the following instance of PCP
\begin{center}
\begin{tabular}{ll}
$u_1 = b$ &\mbox{}\qquad\qquad $u_2 = aa$\\
$u_1' = aba $&\mbox{}\qquad\qquad $u_2' = a$
\end{tabular}
\end{center}
with solution~$212$. We encode the solution by the set~$T_1 \cup T_2$ depicted below (where we drop the suffixes~$\$^\omega$ for readability).
\begin{center}
    \begin{tikzpicture}
    
        \draw[rounded corners, fill=black!10, thick] (-6,.-.5) rectangle (4,.5);

        \draw[rounded corners, fill=black!10, thick] (-7.15,.-2.5) rectangle (6.65,-1.25);

\node at (-5.25, 0) {$T_1$:};
\node at (-6.75, -1.5) {$T_2$:};

    \node (10) at (-3.75,0) {$ aa\#a\#\#0 $};
    \node (11) at (-.75,0) {$b\#\#aba1 $};
    \node (12) at (2.25,0) {$aa\# a\#\#01 $};
    
    \node (20) at (-6,-2) {\footnotesize$ \overline{a}a\#\overline{a}\#\#(0,0)  $};
    \node (21) at (-3.25,-2) {\footnotesize$ a\overline{a}\#\overline{a}ba(0,1)  $};
    \node (22) at (-.75,-2) {\footnotesize$ \overline{b}\#a\overline{b}a(1,1)  $ };
    \node (23) at (2,-2) {\footnotesize$ \overline{a}a\#ab\overline{a}(0,1)(1,0)  $ };
    \node (24) at (5.25,-2) {\footnotesize$ a\overline{a}\#\overline{a}\#\#(0,0)(1,1)  $ };
    
    \path[-stealth,very thick]
    (11.west) edge[] (10.east)
    (12.west) edge[] (11.east)
    (20.east) edge[] (21.west)
    (21.east) edge[] (22.west)
    (22.east) edge[] (23.west)
    (23.east) edge[] (24.west)
    ([xshift=-10pt]20.north) edge[] ([xshift=-10pt]10.south)
    ([xshift=-10pt]21.north) edge[] ([xshift=10pt]10.south)
    ([xshift=-10pt]22.north) edge[] ([xshift=-5pt]11.south)
    ([xshift=10pt]23.north) edge[] ([xshift=-10pt]12.south)
    ([xshift=-10pt]24.north) edge[] (12.south)
    ([xshift=10pt]20.north) edge[dashed] (10.south)
    ([xshift=10pt]21.north) edge[dashed] ([xshift=-15pt]11.south)
    ([xshift=10pt]22.north) edge[dashed] ([xshift=5pt]11.south)
    ([xshift=-10pt]23.north) edge[dashed] ([xshift=15pt]11.south)
    ([xshift=10pt]24.north) edge[dashed] ([xshift=10pt]12.south)
;

    \end{tikzpicture}
\end{center}
Note that $T_1$ contains type one traces while $T_2$ contains type two traces.
We claim that $T_1 \cup T_2$ satisfies all eight requirement listed above. Requirements~\ref{kperiod_singleprop}, \ref{kperiod_typeness}, \ref{kperiod_ranks}, and \ref{kperiod_sameletters} are straightforward to verify. The relation between type one traces specified by Requirement~\ref{kperiod_predecessor} is given by the arrows between the type one traces in $T_1$. The initial trace as in Requirement~\ref{kperiod_initial} is the left most type two trace in $T_2$ and the successor relation on type two traces as in Requirement~\ref{kperiod_successor} is indicated by the arrows between the type two traces. The compatible type one traces as in Requirement~\ref{kperiod_consistent} are given by the arrows from $T_2$ to $T_1$ with the solid arrows denoting the $u$-compatible type one traces and the dashed arrows denoting the $u'$-compatible ones.

After this example, we prove that $\varphi$  has the desired properties in general.
To this end, fix $k = \pow{n}+2m+1$, which can be encoded in binary with polynomially many bits in $n+m$.

First, assume there is a solution~$s \in \set{1, \ldots, n}^+$, i.e., we have $h(s) = h'(s)$, where $h$ and $h'$ are the homomorphisms induced by mapping $m$ to $u_m$ and $u_m'$, respectively. 
We define 
\[
T_1 = \set{ \pad{u_{s(r)}} \pad{u_{s(r)}'} b_r \$^\omega \mid 0 \leq r < \size{s}}
\] 
and 
\[
T_2 = \set{ t_{i} \mid 0 \le i < \size{h(s)} = \size{h'(s)} }
\]
where 
\[
t_{i} = (\window{u_{s(r)}}{j})(\window{u_{s(r')}'}{j'})[b_r,b_{r'}]\$^\omega
\]
where $r, j$ are the unique indices such that $r < \size{s}$, $j < \size{u_{s(r)}}$, and
\[
 \sum_{r_0 < r} \size{h(s(r_0))} + j = i
\]
and, similarly, $r', j'$ are the unique indices such that $r' < \size{s}$, $j' < \size{u'_{s(r')}}$, and
\[
 \sum_{r_0 < r'} \size{h'(s(r_0))} +j' = i .
\]
Intuitively, for every $i$, we pick the pair~$(u_{s(r)}, u'_{s(r')})$ of words of the instance the $i$-th letter of $h(s)$ is in, and mark its position correctly in those words.

Traces of $T_1$ are of type one and traces of $T_2$ are of type two. It is then easy to check that Requirements \ref{kperiod_singleprop} to \ref{kperiod_sameletters} are satisfied by $T = T_1 \cup T_2$. Furthermore, as $\size{s} \leq \pow{\pow{n}}$, the size of the $b_r$ is at most $\pow{n}$. Thus, all words of $T$ are of the form $x\$^\omega$ with $\size{x} \leq \pow{n}+2m = k-1$, as required.

Conversely, assume $\phi$ is satisfied by a set~$T \subseteq \set{xy^\omega \mid \size{x}+\size{y} \leq k}$ of traces. Then, by Requirement~\ref{kperiod_typeness}, we have $T = T_1 \cup T_2$ where $T_1$ is a set of type one traces and $T_2$ is a set of type two traces. Let $r^*$ be the maximal rank of a trace in $T_1$ (which is well-defined as $T_1$ is finite). Then, by Requirements~\ref{kperiod_ranks} and \ref{kperiod_predecessor}, there exists exactly one trace $t^1_r \in T_1$ such that $t^1_r$ is of rank $r$, for all $0 \leq r \leq r^*$. Let $s^* \in \set{1,...,n}^{r^*+1}$ be such that for all $r \leq r^*$, $t^1_r = \pad{u_{s^*(r)}}\pad{u'_{s^*(r)}}b_r\$^\omega$. Note that as $t^1_{r^*} = \pad{u_{s({r^*})}}\pad{u'_{s({r^*})}}b_{r^*}\$^\omega \in T$, we must have $2m + \size{b_{r^*}}+1 \leq k$. Thus, $\size{b_{r^*}} \leq \pow{n}$, and $r^* < \pow{\pow{n}}$. 

By Requirements~\ref{kperiod_initial} and \ref{kperiod_successor}, there exists a sequence of traces $t^2_0,t^2_1,\ldots,t^2_{p} \in T_2$ such that $t^2_{i +1}$ is a successor of $t^2_{i}$ for all $0 \le i \le p$, such that $t^2_0 = (\window{u_{m}}{0})(\window{u'_{m}}{0})[b_0,b_0]\$^\omega$ for some $m$, and such that $t^2_{p}$ is final.

An induction over $i$, using Requirements \ref{kperiod_successor} and \ref{kperiod_consistent}, shows that the first letter with a bar in $t^2_{i}$ is the $i$-th letter of $u_{s^*(0)}\cdots u_{s^*(r^*)}$ and the second one is the $i$-th letter of $u'_{s^*(0)}\cdots u'_{s^*(r^*)}$. Requirement \ref{kperiod_sameletters} ensures that those two letters are always equal. As $t^2_{p}$ is final, there exists a prefix $s$ of $s^*$ such that $\size{u_{s(0)}\cdots u_{s(\size{s}-1)}} = \size{u'_{s(0)}\cdots u'_{s(\size{s}-1)}} = p+1$ and $u_{s(0)}\cdots u_{s(\size{s}-1)}(i) = u'_{s(0)}\cdots u'_{s(\size{s}-1)}(i)$ for all $i$, i.e., the two words are equal. As $\size{s} \leq \size{s^*} \leq \pow{\pow{n}}$, we therefore obtain a solution to the PCP instance. 
\qed
\end{proof}

As expected, the complexity of the satisfiability problem increases the more traces one has at hand to encode computations. 
In Theorem~\ref{ktraces}, we have exponentially many; in Theorem~\ref{kperiod}, we have doubly-exponentially many.
In our last theorem, we consider infinite sets of traces that are finitely representable by finite-state systems. 
Here, satisfiability becomes intractable, yet still decidable, even when restricted to formulas of temporal depth one.

Formally, a \emph{Kripke structure}~$\K = (Q, \delta, q_0, \lambda)$ consists of a finite set~$Q$ of states, an initial state~$q_0 \in Q$, a transition function~$\delta \colon Q \rightarrow 2^Q \setminus \set{\emptyset}$ for all $q$, and a labelling function~$\lambda \colon Q \rightarrow \pow{\ap}$. A \emph{run} of $\K$ is an infinite sequence~$q_0q_1q_2\cdots$ of states starting with  $q_0$ and such that $q_{j+1} \in \delta(q_j)$ for all $j \in \nats$. A trace of $\K$ is the sequence of labels~$\lambda(q_0) \lambda(q_1) \lambda(q_2)\cdots$ associated to a run $q_0q_1q_2\cdots$ of $\K$. The set of traces of $\K$ is denoted by~$\traces{\K}$.

\begin{theorem}\label{kstates}
The following problem is \tower-complete: 
Given a \hyltl sentence $\phi$ and $k \in \nats$ (in binary), does $\varphi$ have a model~$T(\K)$ for some Kripke structure~$\K$ with at most $k$ states?
\end{theorem}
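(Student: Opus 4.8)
The plan is to prove membership in \tower and \tower-hardness separately, the hardness being the substantial part.

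For the \tower upper bound, fix the input $(\phi,k)$ of size $n$; assuming w.l.o.g.\ that every atomic proposition occurs in $\phi$, we have $\size{\ap}\le\size{\phi}\le n$ and $k\le 2^n$. Up to renaming states we may fix the initial state, so there are at most $(2^{\size{\ap}})^{k}\cdot(2^{k}-1)^{k} = 2^{2^{O(n)}}$ Kripke structures with exactly $k$ states over $\ap$, hence doubly-exponentially many with at most $k$ states, each of size $2^{O(n)}$. The algorithm enumerates all of them and, for each such $\K$, decides $\traces{\K}\models\phi$ using the \hyltl model-checking procedure of \cite{DBLP:conf/post/ClarksonFKMRS14}, accepting iff some structure passes. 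Model checking a structure of size $2^{O(n)}$ against a formula of size at most $n$ runs in time $\twr(O(\AD(\phi)),2^{O(n)})\le\twr(O(n),1)$, a tower of elementary height; invoking it $2^{2^{O(n)}}$ times (absorb the outer factor into the tower) stays within such a bound, so the whole procedure is in \tower.

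For \tower-hardness I would reduce from the bounded variant of PCP in which a solution $s\in\set{1,\dots,n}^{+}$ is required to satisfy $\size{s}\le\twr(n,1)$ — a tower of height $n$; exactly as in the proof of Theorem~\ref{kperiod} (where the bound is $2^{2^{n}}$), adapting the undecidability proof of PCP makes this variant \tower-hard, and the same construction reduces tower-bounded Turing-machine acceptance. Given such an instance of size $n$, one outputs $\phi$ and $k$ polynomial in $n$, with $k$ exponential in $n$ so that a Kripke structure with at most $k$ states can realise a cyclic level-$1$ counter over $2^{\Theta(n)}$ values. The intended model then has, as in Theorem~\ref{kperiod}, traces of two types encoding positions of the solution (carrying the pair $(u_{s(r)},u'_{s(r)})$) and positions of the concatenations $h(s),h'(s)$ (carrying a marked letter), except that the index/``rank'' $r$ is now written not as an $O(n)$-bit number but as a counter of level $\Theta(n)$: a level-$(j{+}1)$ counter value is a block of level-$j$ counter values, one per bit, so level $j$ reaches $\twr(j,2^{\Theta(n)})$. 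The level-$1$ counter is provided by the structure itself, while comparing two level-$(j{+}1)$ values is reduced via \hyltl trace quantification to comparing their level-$j$ digits; recycling a single comparison sub-formula across all levels by threading in an extra quantified trace that names the current level keeps $\phi$ polynomial, and each level's work is only a $\forall\exists$ block with bodies using $\G,\F$ at nesting depth one, so the alternation depth is $\Theta(n)$ (whence the tower) and the temporal depth is one. The sentence $\phi$ asserts, mirroring Theorem~\ref{kperiod}, that every trace is a well-formed encoding of one of the two types, that every rank below the maximal one has a predecessor trace, that an initial type-two trace exists, that every non-final type-two trace has a successor, that the two types are compatible (all using the nested-counter comparison to recognise ``successor rank''), that the two marked letters always agree, and that a final type-two trace exists. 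One then argues as in Theorem~\ref{kperiod}: if a solution exists, the set of all encodings of a fixed solution equals $\traces{\K}$ for a structure $\K$ with at most $k$ states — it is the set of paths through a small counter gadget whose only nondeterminism fills in the finitely many letter labels — so $\phi$ has a model of the required form; conversely, from any model $\traces{\K}$ of $\phi$ with at most $k$ states one reads off a PCP solution, the bound $k$ capping the length of the level-$1$ counters and hence all higher counters, and the solution length, by $\twr(n,1)$. Finally, the structure forced by $\phi$ can be taken from a fixed family — essentially a fixed structure over a fixed alphabet, with the instance data pushed into $\phi$ — so the reduction doubles as a \tower lower bound for \hyltl model checking against a fixed Kripke structure.

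I expect the main obstacle to lie entirely in the lower bound, and within it in two coupled points: \textbf{(a)} designing the cell/position encoding so that the \emph{entire} set of encodings of some solution is generated by a $k$-state structure (forcing a ``path-like'' layout in which the structure's nondeterminism is cheap) while the encoding is still rich enough that the consistency checks pin down a genuine solution rather than some spurious trace set; and \textbf{(b)} implementing the $\Theta(n)$-level counter comparison so that it remains polynomial in size \emph{and} of temporal depth one, i.e.\ realising all quantification over astronomically many positions through position-marking traces rather than through long conjunctions or iterated $\X$. Getting (a) and (b) to coexist — a small generating structure together with a shallow, small, but deeply self-referential formula — is the crux; the soundness and completeness arguments themselves should then follow the template of the proof of Theorem~\ref{kperiod}.
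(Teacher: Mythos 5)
Your upper bound is correct and is essentially the paper's: enumerate the (at most exponentially many in $k$) Kripke structures with at most $k$ states and model-check each against $\phi$ with the \hyctlstar algorithm of Clarkson et al., whose tower height is governed by the alternation depth, so the whole procedure stays in \tower.

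The lower bound, however, has a genuine gap, and it sits exactly where you flag ``the crux''. You transplant the Theorem~\ref{kperiod} encoding, replacing binary ranks by level-$\Theta(n)$ nested counters, and assert that the intended model --- the set of encodings of one fixed solution --- equals $\traces{\K}$ for some $\K$ with $k = 2^{O(n)}$ states. This runs into a dichotomy you do not resolve. To distinguish $\twr(n)$ many ranks, each rank encoding has nonelementary length, and your intended model contains one type-one trace per rank and one type-two trace per position of $h(s)$, i.e.\ nonelementarily many pairwise distinct traces. If $\K$ is to generate \emph{exactly} these traces, it is essentially a disjoint union of that many lassos and needs nonelementarily many states, not $2^{O(n)}$. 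If instead, as your ``counter gadget'' suggests, the structure's nondeterminism generates \emph{all} well-formed rank encodings and letter labellings, then $\traces{\K}$ contains many traces outside the intended model, and the Theorem~\ref{kperiod}-style requirements (exactly one trace per rank, predecessor/successor chains pinning down a single solution, compatibility forcing the two marked letters to track one solution) are violated by the full trace set --- and the model must be exactly $\traces{\K}$; you cannot pass to a subset. So the formula would have to be redesigned from scratch to be satisfied by the closed, branching-complete trace set of a small structure, and that redesign, together with the claimed polynomial-size, temporal-depth-one, ``level-threaded'' comparison of level-$\Theta(n)$ counters, is asserted but never constructed. The closing claim that the structure can be taken fixed is also in tension with your own need for $k$ exponential in $n$ to host the level-one counter.

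For contrast, the paper's lower bound avoids all of this: it reduces from universality of star-free regular expressions with complementation (\tower-complete by Stockmeyer--Meyer), uses a single \emph{fixed} five-state Kripke structure whose trace set is the simple language $l^\omega + l^*(a+b)^\omega + l^*(a+b)^*r^\omega + l^*\#r^\omega$, and builds by induction on the expression $e$ a temporal-depth-one sentence $\phi_e$ that holds on this full trace set if and only if $e$ is universal; the quantifier alternation driving \tower-hardness comes for free from the complementation operators, and the fixed structure immediately yields the model-checking corollary you were aiming for.
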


\begin{proof}
 Clarkson et al.\ presented a model-checking algorithm for \hyctlstar (and thus for \hyltl, which is a fragment of \hyctlstar), and showed that its complexity is a tower of exponentials whose height is the alternation depth of the input sentence~\cite{DBLP:conf/post/ClarksonFKMRS14}. Thus, one can enumerate all Kripke structures with at most $k$ states (up to isomorphism) and model-check them one by one in $\tower$. This yields the desired upper bound, as there are \myquot{only} exponentially many (in $k$) Kripke structures with $k$ states.
 
 The lower bound is obtained by a reduction from the universality problem for star-free regular expressions with complementation. The equivalence problem for those expressions is \tower-complete (under elementary reductions, which is standard for $\tower$-complete problems), even for two-letter alphabets \cite{DBLP:journals/toct/Schmitz16,DBLP:conf/stoc/StockmeyerM73}. As those expressions are closed by complementation and union, the universality problem is \tower-complete as well. 
 
 Star-free expressions with complementation over~$\set{a,b}$ are generated by the grammar
\[e \cceq a \mid  b  \mid \epsilon \mid \emptyset \mid e+e \mid ee \mid \neg e \]
and have the obvious semantics inducing a language over $\set{a,b}^*$, denoted by $e$ as well.

Let~$e$ be such an expression. We construct a \hyltl sentence~$\phi_e$ and a Kripke structure~$\K$ such that $T(\K)$ is a model of $\phi_e$ if and only if $e$ is universal. $\K$ does not depend on $e$ and is shown in Figure \ref{fig:KripkeTOWER}. 
As all sets of variables in $\K$ are singletons, we indifferently use the notation~$a$ for the letter~$a$ and the singleton~$\set{a}$. The set of traces induced by this Kripke structure is 
\[T(\K)=l^\omega + l^*(a+b)^\omega + l^*(a+b)^*r^\omega + l^*\#r^\omega.\] 

\begin{figure}[t]
    \centering
\begin{tikzpicture}[->,>=stealth',shorten >=1pt,auto,node distance=2cm,
                    thick]

  \node[state]         (A) at (0,1.05)   {$a$};
  \node[state]         (B) at (0,-1.05){$b$};
  \node[state]         (L) at (-3,0)  {$l$};
  \node[state]         (R) at (3,0)   {$r$};
  \node[state]         (S) at (0,2.1)   {$\#$};

  \path (A) edge [loop below]   node {} (A)
        (A.south east) edge [bend left]    node {} (B.north east)
        (A)    edge                node {} (R)
        (B) edge [loop above]   node {} (B)
        (B) edge [bend left]    node {} (A.south west)
        (B) edge []   node {} (R)
        (L) edge [loop left]    node {} (L)

            edge                node {} (A)
            edge []   node {} (B)
        (L.south) edge [bend right=45] node {} (R.south)
        (L.north)    edge [bend left]    node {} (S.west)
        (R) edge [loop right]   node {} (R)
        (S.east) edge [bend left]    node {} (R.north);
\end{tikzpicture}
    \caption{The Kripke structure $\K$ (all states are initial).}
    \label{fig:KripkeTOWER}
\end{figure}
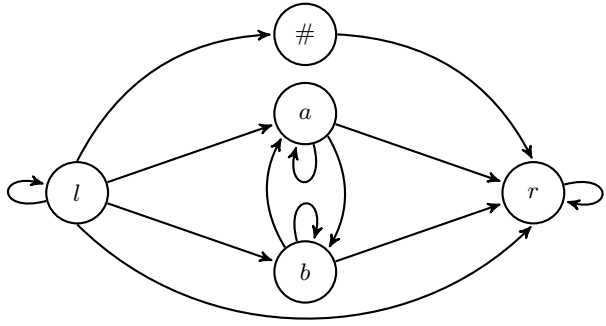

Given an expression~$e$ and a trace variable~$\pi$, we inductively define a formula $\psi_{e,\pi}$ which expresses that when $\pi$ is mapped by a trace assignment $\Pi$ to a trace of $\K$ of the form $l^nwr^\omega$ with $w\in \set{a,b}^*$, then $w \in e$ if and only if $(T(\K), \Pi) \models \psi_{e,\pi}$.

\begin{itemize}
    \item $\psi_{\emptyset,\pi} = a_\pi \land \neg a_\pi $:
    No trace assignment satisfies $\psi_{\emptyset,\pi}$, just as the language of $\emptyset$ does not contain any word.
    
    \item $\psi_{\epsilon,\pi} = \G(l_\pi \lor r_\pi)$:
    $(T(\K),\Pi)$ with $\Pi(\pi) = l^nwr^\omega$ satisfies $\psi_{\epsilon,\pi}$ if and only if $w = \epsilon$.
    
    \item $\psi_{a,\pi} = \exists \tau. (\F \#_\tau) \land \F(a_\pi) \land \G(l_\tau \Leftrightarrow l_\pi \land r_\tau \Leftrightarrow r_\pi)$ :
    The traces of $\K$ with an occurrence of $\#$ are the traces of the form $l^*\#r^\omega$.  Thus, $(T(\K),\Pi)$ with $\Pi(\pi) = l^nwr^\omega$ satisfies $\psi_{a,\pi}$ if and only if $l^nwr^\omega$ is a copy of such a trace with $\#$ replaced by $a$, i.e., if and only if $w = a$.
    
    \item $\psi_{b,\pi} = \exists \tau. (\F \#_\tau) \land \F(b_\pi) \land \G(l_\tau \Leftrightarrow l_\pi \land r_\tau \Leftrightarrow r_\pi)$: Similarly to $\psi_{a,\pi}$.
    
    \item $\psi_{e_1+e_2,\pi} = \psi_{e_1,\pi} \lor \psi_{e_2,\pi} $.
    
    \item $\psi_{e_1e_2,\pi} = \exists \pi_1. \exists \pi_2. \psi \wedge \psi'$ with
    \[\psi = \F r_{\pi_1} \land \F r_{\pi_2} \land \G (\neg \#_{\pi_1} \land \neg \#_{\pi_2}) \land \psi_{e_1,\pi_1} \land \psi_{e_2,\pi_2}
    \]
    expressing that $\pi_1$ and $\pi_2$ are of the form $l^{n_1}w_1r^\omega$ and $l^{n_2}w_2r^\omega$ with $w_1 \in e_1$ and $w_2 \in e_2$, and with
    \[\psi' = \G(l_{\pi_2} \Leftrightarrow \neg r_{\pi_1})\land \G(a_\pi \Leftrightarrow (a_{\pi_1} \lor a_{\pi_2}) \land b_\pi \Leftrightarrow (b_{\pi_1} \lor b_{\pi_2}))
    \]
    expressing that $n_2=n_1+\size{w_1}$ and that $w = w_1w_2$, where $\Pi(\pi) = l^nwr^\omega$.
 Thus, $(T(\K),\Pi)$ satisfies $\psi_{e_1e_2,\pi}$ if and only if there exist $w_1 \in e_1, w_2 \in e_2$ such that $w = w_1w_2$.
    
    \item $\psi_{\neg e,\pi} = \neg \psi_{e,\pi}$.
\end{itemize}

Although this inductive definition does not necessarily give a formula in prenex normal form, one can easily check that no quantifier is in the scope of a temporal operator, thus the resulting formula can be turned into a \hyltl formula. 

To conclude, consider the sentence~$\phi_e = \forall \pi. \G \neg r_\pi \lor \F\#_\pi \lor \psi_{e,\pi}$, which can again be brought into prenex normal form. Further, note that no temporal operator is in the scope of another one, thus $\phi_e$ has temporal depth one. The set~$T(\K)$ is a model of $\phi_e$ if and only if all its traces are in $\set{a,b,l}^\omega$, in $l^*\#r^\omega$, or of the form~$l^*wr^\omega$ with $w \in e$. 
This is the case if and only if all words $w \in \set{a,b}^*$ are in the language of $e$, i.e., if and only if $e$ is universal.
\qed
\end{proof}

As the Kripke structure~$\K$ in the lower bound proof above is fixed, we also obtain a novel hardness result for model-checking. 

\begin{corollary}\label{MCtower}
\hyltl model-checking a fixed Kripke structure with five states is \tower-complete, even for sentences of temporal depth one.
\end{corollary}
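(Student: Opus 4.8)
The plan is to observe that both directions are already essentially contained in the proof of Theorem~\ref{kstates}. For the upper bound I would invoke the \hyctlstar\ model-checking algorithm of Clarkson et al.~\cite{DBLP:conf/post/ClarksonFKMRS14}: since \hyltl is a fragment of \hyctlstar, model-checking any fixed Kripke structure against a \hyltl sentence can be carried out within \tower, and in particular this holds for the fixed five-state structure considered here.

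For the lower bound I would simply inspect the reduction built in the proof of Theorem~\ref{kstates}. There, from a star-free regular expression with complementation~$e$ over~$\set{a,b}$ one constructs, in polynomial time, a \hyltl sentence~$\phi_e$ of temporal depth one together with the Kripke structure~$\K$ of Figure~\ref{fig:KripkeTOWER}, such that $T(\K)$ is a model of~$\phi_e$ if and only if~$e$ is universal. The two crucial observations are that (i) $\K$ does \emph{not} depend on~$e$ --- it is a single fixed structure --- and (ii) $\K$ has exactly five states, namely~$l$, $a$, $b$, $r$, and~$\#$. Since the universality problem for star-free expressions with complementation over a two-letter alphabet is \tower-complete under elementary reductions~\cite{DBLP:journals/toct/Schmitz16,DBLP:conf/stoc/StockmeyerM73}, and $e \mapsto \phi_e$ is a polynomial-time (hence elementary) map, this exhibits an elementary reduction from a \tower-complete problem to the problem of model-checking the fixed five-state~$\K$ against \hyltl sentences of temporal depth one. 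Hence the latter problem is \tower-hard, and together with the upper bound it is \tower-complete.

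The only point that needs a moment's care is that, as is standard for \tower-complete problems, hardness is stated with respect to elementary reductions; since $e \mapsto \phi_e$ is computable in polynomial time, its composition with the elementary reduction underlying the \tower-completeness of universality is still elementary, which is all that is required. There is no genuine obstacle: the corollary is obtained essentially for free from the fact that the Kripke structure used in the lower bound of Theorem~\ref{kstates} is fixed and has only five states, while the formula produced already has temporal depth one.
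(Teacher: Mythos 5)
Your proposal is correct and matches the paper's own justification: the corollary is stated there precisely because the Kripke structure $\K$ in the lower bound of Theorem~\ref{kstates} is fixed, has five states, and the sentences~$\phi_e$ have temporal depth one, while the \tower upper bound comes from the \hyctlstar model-checking algorithm of Clarkson et al. Your remark about composing the polynomial-time map $e \mapsto \phi_e$ with the elementary reduction underlying \tower-hardness of universality is exactly the right (and only) point of care, and it is handled correctly.
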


Note that one could already infer the \tower-completeness of the model-checking problem by carefully examining the proof of Theorem 5 of  \cite{DBLP:conf/post/ClarksonFKMRS14} concerning \hyctlstar model-checking. The reduction from the satisfiability problem for \qptl presented there also works for \hyltl, albeit with temporal depth larger than one. Interestingly, both reductions use a fixed Kripke structure, meaning in particular that the encoding of $k$ has no impact on the asymptotic complexity.


\section{Satisfiability for Restricted Classes of Formulas}
\label{sec_formulas}

After studying the \hyltl satisfiability problem for classes of restricted models, but arbitrary formulas, we now consider restrictions on formulas, but arbitrary models. 
Recall that Finkbeiner and Hahn presented a complete picture in terms of quantifier prefixes:
Satisfiability is $\pspace$-complete for the alternation-free fragments $\exists^*$ and $\forall^*$ as well as $\exps$-complete for $\exists^*\forall^*$.
In all other cases, the problem is undecidable, i.e., as soon as there is a single universal quantifier in front of existential ones. 

In a sense, the decidable fragments are variants of \ltl: Both alternation-free fragments can easily be reduced to \ltl satisfiability while the $\exists^*\forall^*$ one is easily reducible to the $\exists^*$ fragment, with an exponential blowup. 
Thus, the decidable fragments barely exceed the realm of \ltl.

In this section, we consider another dimension to measure the complexity of formulas, temporal depth, i.e., we restrict the nesting of temporal operators. 
The hope is that in this setting, we can obtain decidability for larger quantifier prefix classes.
However, a slight adaptation of Finkbeiner and Hahn's undecidability result for $\forall\exists^*$, along with an application of Lemma \ref{smallTD} proven below, already shows undecidability for $\forall\exists^*$ formulas of temporal depth two and without untils. 

Thus, we have to restrict our search to fragments of temporal depth one, which contain most of the information flow policies expressible in \hyltl~\cite{DBLP:conf/post/ClarksonFKMRS14}.
And indeed, we prove satisfiability decidable for $\exists^* \forall \exists^*$ \hyltlfgone formulas.
Thus, if the temporal depth is one and untils are excluded, then one can allow a universal quantifier in front of existential ones without losing decidability.
This fragment includes, for example, the noninference property~\cite{DBLP:conf/sp/McLean94}.

However, even allowing the smallest possible extension, i.e., adding a second universal quantifier, leads again to undecidability: \hyltl satisfiability is undecidable for $\forall^2\exists^*$ formulas of temporal depth one using only $\F$ as temporal operator. 
Thus, satisfiability remains hard, even when severely restricting the temporal depth of formulas. 
Our results for temporal depth one are summarized in Table~\ref{table_formulas}. 

\begin{table}
\caption{Complexity of \hyltl satisfiability in terms of quantifier prefixes and temporal depth. An asterisk~$\mbox{}^\ast$ denotes that the upper bound only holds for until-free formulas. All lower bounds in the second column already hold for temporal depth two.}
\label{table_formulas}
\centering
\begin{tabular}{lll}
 &\quad{} temporal depth one\quad{} & arbitrary temporal depth \\
 \toprule
 $ \exists^*$ / $\forall^*$ & $\np$-complete (\cite{DBLP:journals/iandc/DemriS02}+\cite{DBLP:conf/concur/FinkbeinerH16}) & $\pspace$-complete (\cite{DBLP:conf/concur/FinkbeinerH16}+\cite{Sistla:1985:CPL:3828.3837})  \\
$\exists^*\forall^*$ & $\nexpt$-complete (Thm.~\ref{nexp}) & $\exps$-complete (\cite{DBLP:conf/concur/FinkbeinerH16})\\
$\exists^*\forall \exists^*$ & in $\ntwoexpt$$^\ast$ (Thm.~\ref{sat1forall}) & undecidable (\cite{DBLP:conf/concur/FinkbeinerH16})  \\
$\forall^2 \exists^*$ & undecidable (Thm.~\ref{SatFX1}) &  undecidable \\

\end{tabular}
\end{table}

We begin this section by showing that every \hyltl formula can be transformed in polynomial time into an equisatisfiable one with quantifier prefix~$\forall^2\exists^*$ with temporal depth two.
Thus, this fragment already captures the full complexity of the satisfiability problem.
This transformation is later used in several proofs.

\begin{theorem}\label{smalldepth}
For every \hyltl sentence one can compute in polynomial time an equisatisfiable sentence of the form~$\forall^2 \exists^*$ with temporal depth at most two.
\end{theorem}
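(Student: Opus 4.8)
The plan is to start from an arbitrary \hyltl sentence $\phi$ in prenex normal form, say $\phi = Q_1 \pi_1 \cdots Q_n \pi_n.\, \psi$ with $\psi$ quantifier-free, and produce an equisatisfiable $\forall^2\exists^*$ sentence $\phi'$ of temporal depth at most two. There are two orthogonal things to fix: the quantifier prefix (we may only use two universal quantifiers, and they must come first) and the temporal depth (at most two). I would handle the temporal depth first, as a preprocessing step, and then the prefix.

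\textbf{Step 1: flattening the temporal depth.} For each temporal subformula $\theta$ of $\psi$ of positive depth, introduce a fresh atomic proposition~$x_\theta$ (attached to one designated trace variable, or, more cleanly, reason over a single fresh trace that encodes the truth values of all these subformulas at all positions — this is the standard temporal-hierarchy / renaming trick, cf.\ the characteristic-formula construction). Working bottom-up, replace $\theta$ by $x_\theta$ in its parent, and add a conjunct asserting $\G$ of the local equivalence defining $x_\theta$ from its immediate subterms; e.g.\ for $\theta = \theta_1 \U \theta_2$ one asserts $\G\big(x_\theta \Leftrightarrow (x_{\theta_2} \vee (x_{\theta_1} \wedge \X x_\theta))\big)$, and for $\theta = \X \theta_1$ one asserts $\G(x_\theta \Leftrightarrow \X x_{\theta_1})$. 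Each such conjunct has temporal depth two (one $\G$ over a formula with at most one $\X$), and the top-level formula, now propositional in the $x_\theta$'s, has temporal depth zero after the renaming; the whole conjunction therefore has temporal depth two. This is all linear in $\size{\phi}$. The subtlety here is that the fresh propositions must be constrained on \emph{every} trace of the model and synchronized across trace variables, which is exactly what the universal quantifiers we are about to add will be used for.

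\textbf{Step 2: collapsing the quantifier prefix to $\forall^2\exists^*$.} This is the main obstacle, and it is where Finkbeiner and Hahn's idea of simulating a quantifier alternation inside the model is reused in reverse: a $\forall\pi$ under existential quantifiers cannot be removed syntactically, so instead one records, in the model itself, an explicit enumeration of "relevant" traces and uses a single outermost $\forall$ pair to range over them. Concretely, I would add a fresh proposition marking a distinguished "selector" trace(s); the outer $\forall \pi_a.\forall\pi_b$ guesses two traces, a case split (via a propositional formula on markers) checks whether they are the selector traces, and if so the body has to hold; then the original internal alternations $\forall \pi_i$ are re-expressed by requiring the existentially guessed witnesses to be consistent with \emph{all} traces in the model at the marked positions. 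In effect, one universal quantifier simulates the whole stack of internal universal quantifiers by quantifying over the (necessarily present in the model) traces that certify them, and the second universal quantifier plus the $\exists^*$ suffix patch up the remaining bookkeeping — this is the technically fiddly part, and getting the equisatisfiability argument right (every model of $\phi$ can be padded into a model of $\phi'$, and every model of $\phi'$ projects to a model of $\phi$) is where the real work lies. Since each added constraint is again of the form $\G(\text{propositional})$ or propositional, the temporal depth stays at two, and the construction is polynomial. I expect Step 2's correctness proof — in particular showing a model of the transformed sentence yields a model of the original, since the transformed sentence only ever "sees" finitely many traces per universal choice — to be the crux; the rest is routine bookkeeping.
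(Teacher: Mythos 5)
There is a genuine gap, concentrated in your Step 2 (and a smaller one in Step 1). In Step 1, the renaming clause you give for until, $\G\bigl(x_\theta \Leftrightarrow (x_{\theta_2} \vee (x_{\theta_1} \wedge \X x_\theta))\bigr)$, only pins $x_\theta$ down as \emph{a} fixed point of the unfolding, so it also admits the greatest fixed point (weak until); you would need an additional eventuality conjunct such as $\G(x_\theta \Rightarrow \F x_{\theta_2})$, or simply assert $\G\bigl(x_\theta \Leftrightarrow (x_{\theta_1} \U x_{\theta_2})\bigr)$ directly, which is still depth two and is what the paper does. More importantly, the truth value of a subformula depends on the \emph{tuple} of traces assigned to all trace variables, not on any single trace, so the marker propositions must live on a witness trace chosen \emph{per tuple}: the paper adds an innermost $\exists \tau$ after the original prefix and requires the model to contain a witness trace for every tuple. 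Your remark that the fresh propositions are "constrained on every trace ... by the universal quantifiers we are about to add" points in the wrong direction; no such synchronization across all traces is needed or wanted.

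Step 2 is where the real content of the theorem lies, and your sketch does not supply a mechanism for either of the two hard transformations. First, eliminating existential quantifiers that have universals in their scope (turning $\exists\forall$ into $\forall\exists$) is done in the paper by explicitly encoding a Skolem function: fresh propositions $m^1,\ldots,m^{n+1}$ mark, at a single position determined by an injection from tuples to $\nats$ (which exists because satisfiable \hyltl sentences have countable models), exactly those tuples $(t_1,\ldots,t_{n+1})$ in the graph of the Skolem function, and the formula $\F\bigl(\bigwedge_i m^i_{\pi_i}\bigr)$ is used once existentially (the function is total) and once as a guard under universal quantification (the function is correct). Nothing in your "selector trace" description plays this role. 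Second, compressing $\forall^*$ into $\forall^2$ requires merging $n$-tuples of traces into single traces over the product alphabet $\ap \times \set{1,\ldots,n}$, with the two universal quantifiers used to force the merged model to be closed under recombination of components; your claim that "one universal quantifier simulates the whole stack of internal universal quantifiers by quantifying over the traces that certify them" cannot work as stated, because a trace quantifier ranges over traces, not over $n$-tuples, and without the product-alphabet merging there is no way for a single universally quantified trace to carry an arbitrary tuple. You correctly identify this step as the crux, but the proposal leaves it as an intention rather than a construction, so the equisatisfiability argument (in particular, that a model of the transformed sentence projects back to a model of the original) cannot be carried out from what is written.
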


We decompose the proof into three steps, formalized by the following three lemmas.
We begin by reducing the temporal depth to at most two by adapting a construction of Demri and Schnoebelen, which associates to every \ltl formula an equisatisfiable formula with temporal depth at most two \cite{DBLP:journals/iandc/DemriS02}.

The idea is to add atomic propositions to witness the satisfaction of subformulas $\psi'$ of $\psi$. We express the existence, for every $n$-tuple of traces $(t_1,\ldots,t_n)$ of the model, of a \textit{witness trace}. For all $j \in \nats$, for all subformulas~$\psi'$ of $\psi$, the valuation $[(\pi_i \rightarrow \suffix{t_i}{j})_{1\leq i \leq n}]$ satisfies $\psi'$ if and only if the associated atomic proposition is satisfied at position~$j$ of the witness trace. 

\begin{lemma}\label{smallTD}
For every \hyltl sentence $Q_1 \pi_1 \ldots Q_n \pi_n. \psi$ with quantifier-free $\psi$, one can compute in polynomial time an equisatisfiable sentence $Q_1 \pi_1 \ldots Q_n \pi_n. \exists \pi. \psi'$ with quantifier-free $\psi'$ and temporal depth at most two.
\end{lemma}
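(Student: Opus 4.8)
The plan is to adapt the Demri--Schnoebelen flattening to the hyper setting by forcing a single extra, existentially quantified \emph{witness trace} (bound to a fresh variable $\pi$) to record the truth value of every subformula of $\psi$ at every position. Concretely, I would introduce a fresh atomic proposition $p_\chi$ for each subformula $\chi$ of $\psi$, set $\ap' = \ap \cup \{ p_\chi \mid \chi \text{ a subformula of } \psi \}$, and take as output the sentence $Q_1 \pi_1 \ldots Q_n \pi_n.\, \exists \pi.\, \psi'$ where
\[
\psi' \;=\; (p_\psi)_\pi \;\wedge\; \bigwedge_{\chi}\, \delta_\chi
\]
and $\delta_\chi$ is the ``one level down'' consistency requirement stating that $p_\chi$ on $\pi$ mirrors the top operator of $\chi$ applied to the witness propositions of its immediate subformulas, e.g.
\[
\delta_{a_{\pi_i}} = \G\big( (p_{a_{\pi_i}})_\pi \Leftrightarrow a_{\pi_i} \big)
\qquad\text{and}\qquad
\delta_{\chi_1 \U \chi_2} = \G\big( (p_{\chi_1 \U \chi_2})_\pi \Leftrightarrow \big( (p_{\chi_1})_\pi \U (p_{\chi_2})_\pi \big) \big),
\]
with the Boolean connectives and $\X,\F,\G$ treated in exactly the same way. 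Each $\delta_\chi$ is a $\G$ around at most one further temporal operator applied to propositional arguments, so $\TD(\delta_\chi) \le 2$ and hence $\TD(\psi') \le 2$; there is one conjunct per distinct subformula of $\psi$, each adding only boundedly many new subformulas, so $\size{\psi'} = O(\size{\psi})$ and the transformation runs in polynomial time. A point worth stressing already here: by keeping the temporal operator itself inside $\delta_\chi$ rather than unfolding $\U$/$\F$ into a one-step fixpoint equation, $\delta_\chi$ pins $p_\chi$ down to the exact truth values, so no extra ``eventually'' clauses are needed to discard spurious least/greatest fixpoint solutions.

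First, for the implication that a model of the output sentence yields a model of the input sentence, I would start from a nonempty $T'$ over $\ap'$ with $T' \models Q_1\pi_1\ldots Q_n\pi_n.\exists\pi.\psi'$ and take $T$ to be the projection of $T'$ to $\ap$. The key observation is that $\psi'$ never mentions a proposition of $\ap' \setminus \ap$ on any of $\pi_1,\dots,\pi_n$ and never a proposition of $\ap$ on $\pi$, so only the $\ap$-parts of the traces bound to $\pi_1,\dots,\pi_n$ are relevant; since every trace of $T$ lifts to some trace of $T'$ and conversely projects from one, an induction along the quantifier prefix transfers satisfaction from $T'$ to $T$. For the quantifier-free core, once $\pi_1,\dots,\pi_n$ are bound to $s_1,\dots,s_n \in T'$ and $\pi$ to the witness $w \in T'$ guaranteed by $\exists\pi.\psi'$, a structural induction on $\chi$ using $\delta_\chi$ establishes $p_\chi \in w(j) \iff [(\pi_i \to \suffix{(s_i|_\ap)}{j})_{1\le i\le n}] \models \chi$ for all $j$ — the $\U$ and $\F$ cases being immediate precisely because $\U$/$\F$ occurs in $\delta_\chi$ with its standard semantics — and then $(p_\psi)_\pi$ at position $0$ gives that $(s_1|_\ap,\dots,s_n|_\ap)$ satisfies $\psi$.

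Conversely, from a nonempty model $T$ of $Q_1\pi_1\ldots Q_n\pi_n.\psi$ I would build $T'$ over $\ap'$ as the set of all traces of $T$ carrying no proposition of $\ap' \setminus \ap$, together with, for every tuple $\bar t = (t_1,\dots,t_n) \in T^n$, a witness trace $w_{\bar t}$ whose $\ap$-projection equals $t_1$ and that has $p_\chi \in w_{\bar t}(j) \iff [(\pi_i \to \suffix{t_i}{j})_{1\le i\le n}] \models \chi$ for all $j$ and all subformulas $\chi$. By construction every $\delta_\chi$ holds along $w_{\bar t}$, and $(p_\psi)_\pi$ holds at $0$ exactly when $[(\pi_i \to t_i)_{1\le i\le n}] \models \psi$; an induction along the quantifier prefix — in which a $\forall$-move over $T'$ has its $\ap$-projection already handled by the corresponding move over $T$, and an $\exists$-move of $T$ is realised by the corresponding unannotated trace of $T'$ — then yields $T' \models Q_1\pi_1\ldots Q_n\pi_n.\exists\pi.\psi'$. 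I expect the main obstacle to lie exactly in this last direction: adding the witness traces must not falsify the leading universal quantifiers, which is why each $w_{\bar t}$ is given the $\ap$-projection of a genuine trace of $T$ (so that $T$ coincides with the $\ap$-projection of $T'$) and why $\psi'$ is arranged never to read a new proposition off $\pi_1,\dots,\pi_n$, making it harmless when a universal quantifier over $T'$ binds some $\pi_i$ to a witness trace.
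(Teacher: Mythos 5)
Your proposal is correct and follows essentially the same route as the paper's proof: fresh propositions per subformula, a single existentially quantified witness trace whose markings are tied down by depth-two consistency formulas of the form $\G(\cdot \Leftrightarrow \cdot)$, projection to the original propositions in one direction and explicit witness traces (with first component as the underlying trace) in the other. The only deviation is cosmetic: in the converse direction you add unannotated copies of the original traces alongside the witnesses, whereas the paper uses the witnesses $\mathrm{wtns}(t,\ldots,t)$ themselves as representatives of the original traces; both work because the new formula reads only original propositions off the originally quantified variables.
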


\begin{proof}

Let $\phi = Q_1 \pi_1 \ldots Q_n \pi_n. \psi$ with quantifier-free $\psi$ and $Q_i \in \set{\exists, \forall}$ for all $i$. We denote by $\psi' \sqsubseteq \psi$ the fact that $\psi'$ is a subformula of $\psi$ and introduce a fresh atomic proposition~$m^{\psi'}$ for every $\psi' \sqsubseteq \psi$. To every trace assignment $\Pi = [(\pi_i \rightarrow t_i)_{1 \le i \le n}]$ we associate a \emph{witness trace} $\witness{t_1,\ldots,t_n}$ such that:
\begin{itemize}
    
    \item For all $a \in \ap$ and all $j \in \nats$, $a \in \witness{t_1,\ldots,t_n}(j)$ if and only if $a \in t_1(j)$, i.e., the projection of  $\witness{t_1,\ldots,t_n}$ to $\ap$ is equal to $t_1$.
    
    \item For all $\psi' \sqsubseteq \psi$ and all $j \in \nats$, $m^{\psi'} \in \witness{t_1,\ldots,t_n}(j)$ if and only if $(T,\suffix{\Pi}{j}) \models \psi'$.
\end{itemize}
Thus, $\witness{t_1,\ldots,t_n}$ is a copy of $t_1$ on which we use the $m^{\psi'}$ to mark the positions at which $\psi'$ is satisfied. 
Now, define 
\[
\phi' = Q_1 \pi_1 \ldots Q_n \pi_n. \exists \tau. m^\psi_\tau \land \bigwedge_{\psi' \sqsubseteq \psi} \overline{\psi'}.
\]
Intuitively, $\phi'$ has an additional existentially quantified variable~$\tau$ that acts as a witness and we ensure that its marking is consistent with the semantics of \hyltl using a formula $\overline{\psi'}$ for each~$\psi' \sqsubseteq \psi$, which is defined as follows:
\begin{itemize}
    \item $\overline{a_\pi} = \G(m^{a_\pi}_\tau \Leftrightarrow a_\pi)$
    \item $\overline{\psi_1 \lor \psi_2} = \G(m^{\psi_1 \lor \psi_2}_\tau \Leftrightarrow (m^{\psi_1}_\tau \lor m^{\psi_2}_\tau))$
    \item $\overline{\neg \psi_1} = \G(m^{\neg \psi_1}_\tau \Leftrightarrow \neg m^{\psi_1}_\tau)$
    \item $\overline{\X \psi_1} = \G(m^{\X \psi_1}_\tau \Leftrightarrow \X m^{\psi_1}_\tau)$
    \item $\overline{\psi_1 \U \psi_2} = \G(m^{\psi_1 \U \psi_2}_\tau \Leftrightarrow m^{\psi_1}_\tau \U m^{\psi_2}_\tau)$
\end{itemize}
 We claim that $\phi'$ has the desired properties.

Suppose~$\phi'$ is satisfied by a model~$T'$. Then, for all $t' \in T'$ let~$t$ be the projection of $t'$ to $\ap$, and let~$T$ be the set of those traces~$t$. An induction on $\psi$ allows one to prove that if \[(T', \Pi') \models \exists \tau.\, m^\psi_\tau \wedge \bigwedge_{\psi' \sqsubseteq \psi} \overline{\psi'}\] for some  trace assignment~$\Pi'$ over $T'$, then $(T',\Pi') \models \psi$. As $\psi$ only contains propositions from $\ap$, this implies $(T,\Pi)\models \psi$, where for all $\pi$, $\Pi(\pi)$ is the projection of $\Pi'(\pi)$ to $\ap$.

Then, an induction on the number of quantifiers allows to generalize this: For all $0 \leq i \leq n-1$, for all $\Pi'$, if 
\[(T',\Pi') \models Q_{n-i} \pi_{n-i} \ldots Q_n \pi_n. \exists \tau.\, m^\psi_\tau \land \bigwedge_{\psi' \sqsubseteq \psi} \overline{\psi'}
\] then $(T,\Pi) \models Q_{n-i} \pi_{n-i} \ldots Q_n \pi_n.\psi$ 
where $\Pi(\pi)$ is again the projection of $\Pi'(\pi)$ to $\ap$.
Thus, as $T'$ satisfies $\phi'$, $T$ satisfies~$\phi$. 

Now suppose $\phi$ is satisfied by a model $T$.
Let $T' = \set{\witness{t_1,\ldots,t_n} \mid t_1,\ldots,t_n \in T}$. As the projection of $\witness{t_1,\ldots,t_n}$ to $\ap$ is $t_1$, $T$ is the set of projections of traces of $T'$ to $\ap$. Furthermore, $T'$ contains a witness for all the tuples of traces of $T$, allowing us to prove that $T'$ satisfies $\phi'$ as follows.

For all $\Pi$, if $(T,\Pi) \models \psi$, then let $\Pi'$ be such that $\Pi'(\pi_i) = \witness{\Pi(\pi_i),\ldots,\Pi(\pi_i)}$ (whose projection to $\ap$ is $\Pi(\pi_i)$) for all $i$ and $\Pi'(\tau) = \witness{\Pi(\pi_1),\ldots,\Pi(\pi_n)}$. An induction on $\psi$ shows that $(T',\Pi') \models m^{\psi}_\tau \land \bigwedge_{\psi' \sqsubseteq \psi} \overline{\psi'}$. Hence, $(T',\Pi') \models \exists \tau. m^{\psi}_\tau \bigwedge_{\psi' \sqsubseteq \psi} \overline{\psi'}$.

Another induction on the number of quantifiers yields that, as $T$ satisfies $\phi$, $T'$ satisfies~$\phi'$.
Therefore, $\phi$ and $\phi'$ are equisatisfiable.
\qed
\end{proof}

Next, we turn the quantifier prefix into the form~$\forall^*\exists^*$ without increasing the temporal depth.

Here the key idea is to move existential quantifiers in the scope of universal ones after marking them with fresh atomic propositions: We can replace an $\exists \forall$ by a $\forall \exists$ if we require that the existentially quantified variable is now uniquely marked by a proposition (and therefore cannot depend on the universally quantified variable).

\begin{lemma}\label{smallAD}
For every \hyltl sentence $\phi$, one can compute in polynomial time an equisatisfiable sentence $\phi'$ of the form $\forall^*\exists^*$ with $\TD(\phi') = \max(\TD(\phi),1)$.
\end{lemma}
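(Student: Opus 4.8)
The plan is to push every existential quantifier past every universal one without changing satisfiability, using fresh atomic propositions to record, for each existentially quantified variable, which traces it is allowed to depend on. The naive shortcut --- forcing each existential witness to be the \emph{unique} trace carrying a fresh marker, so that it can depend on nothing --- is unsound: the sentence $\forall \sigma. \exists \pi. \forall \tau.\ (c_\pi \oplus c_\sigma)$ over $\ap = \set{c}$ is satisfiable (take any model with one trace having $c$ at position $0$ and one without it), yet no model satisfies it with a witness for $\pi$ that is independent of $\sigma$, because that witness would have to disagree with itself. So the construction must \emph{preserve} the dependency of each existential $\pi_j$ on the universally quantified variables preceding it in the prefix, while forbidding any dependency on those following it.

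Concretely, write $\phi = Q_1 \pi_1 \ldots Q_n \pi_n.\, \psi$ with quantifier-free $\psi$, let $U = \set{i : Q_i = \forall}$, $E = \set{i : Q_i = \exists}$, and $B(j) = \set{i \in U : i < j}$ for $j \in E$. I would add a fresh proposition $q$ (marking the ``genuine'' traces of the model), fresh propositions $p_j$ for $j \in E$, and for each $i \in U$ a disjoint copy $\ap_i$ of $\ap$ --- $O(n \cdot \size{\ap})$ new propositions. The sentence $\phi'$ quantifies $\forall$ over all $\pi_i$ with $i \in U$, then $\exists$ over all $\pi_j$ with $j \in E$ together with fresh ``twin'' variables $\rho_j$; its matrix is the conjunction of (i) a relativization clause saying that \emph{if} all $\pi_i$ ($i \in U$) carry $q$ then $\psi$ holds and, for each $j \in E$: $\pi_j$ carries $p_j$ but neither $q$ nor any $p_{j'}$ ($j' \neq j$); $\pi_j$'s $\ap_i$-copy mirrors $\pi_i$ on $\ap$ for each $i \in B(j)$ (a $\G$-formula); and $\rho_j$ carries $q$ with the same $\ap$-content as $\pi_j$ (another $\G$-formula); and (ii) for each $j \in E$ a \emph{relative uniqueness} clause --- a closed $\forall^2$ sentence --- stating that two $p_j$-traces which agree on all $\ap_i$ ($i \in B(j)$) are equal. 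Since the uniqueness clauses are closed and the existential block pulls out of the consequent of the relativizing implication, $\phi'$ prenexes to a $\forall^*\exists^*$ sentence; every new conjunct has temporal depth at most one, so $\TD(\phi') = \max(\TD(\phi),1)$; and $\phi'$ is computable in polynomial time.

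For correctness I would prove equisatisfiability both ways. If $T \models \phi$, fix Skolem functions $g_j$ for the existentials (each sending a tuple of $\size{B(j)}$ traces of $T$ to a trace of $T$, jointly witnessing $\psi$ for all choices of the universals) and let $T'$ contain (a) for each $t \in T$ the trace equal to $t$ on $\ap$, carrying $q$, empty elsewhere, and (b) for each $j \in E$ and tuple $\overline{t} \in T^{B(j)}$ the trace equal to $g_j(\overline{t})$ on $\ap$, encoding $\overline{t}$ faithfully on the copies $\ap_i$, carrying $p_j$, empty elsewhere. Then $T' \models \phi'$: the guarded universals range exactly over the $q$-traces, i.e.\ over copies of $T$; given their values the type-(b) traces supply the $\pi_j$ and the $q$-traces the twins $\rho_j$; and the uniqueness clauses hold since distinct tuples get distinct $\ap_i$-encodings. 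Conversely, if $T'' \models \phi'$, let $T_0$ be the set of $\ap$-projections of its $q$-traces; relative uniqueness forces each $\pi_j$ to be the unique $p_j$-trace whose $\ap_i$-parts equal those of the $\pi_i$ ($i \in B(j)$), hence a function of those $\pi_i$ only, and its twin $\rho_j$ shows its $\ap$-content lies in $T_0$; since $\psi$ reads $\pi_j$ only through its $\ap$-part, replacing $\pi_j$ by $\rho_j$ preserves $\psi$, and re-interleaving the quantifiers --- legitimate because each witness depends only on the universals preceding it in the \emph{original} prefix --- yields $T_0 \models \phi$.

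The main obstacle is calibrating the marking. Global uniqueness of $p_j$ (the first thing one tries) is too strong and destroys equisatisfiability, as the example in the first paragraph shows, so uniqueness must be stated \emph{relative} to the $\ap_i$-tags; symmetrically, the twin variables $\rho_j$ and the $q$-relativization are exactly what makes the backward direction go through, by pinning the witnesses inside the genuine part $T_0$ of the model. The rest --- the two inductions showing $\phi$ and $\phi'$ have the same models up to the auxiliary propositions, and in particular the soundness of the quantifier re-interleaving --- is routine but somewhat fiddly.
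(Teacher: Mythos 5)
Your construction takes a genuinely different route from the paper's. The paper eliminates one \emph{critical} existential quantifier at a time and encodes the corresponding Skolem function positionally: fresh markers $m^1,\ldots,m^{n+1}$ are placed at a common time point determined by an injection $T^{n+1}\to\nats$, which is why it first passes to a countable model. You instead Skolemize all existentials in one shot and encode the \emph{arguments} of each Skolem function directly on the witness trace via the disjoint copies $\ap_i$ of $\ap$, with the relative-uniqueness clauses forcing the witness for $\pi_j$ to depend only on the universals in $B(j)$; the twins $\rho_j$ then pin the witness values inside the genuine part $T_0$ of the model. Modulo the defect below, this works: the forward direction, the well-definedness of the induced Skolem functions from relative uniqueness, and the re-interleaving step all check out, the blow-up is polynomial, the new conjuncts have temporal depth one, and you avoid the countable-model detour altogether.

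The defect is that nothing in $\phi'$ forces a model to contain a $q$-trace. Clause (i) is vacuously true on any assignment of the universals that includes a trace without $q$, and clause (ii) is vacuous in the absence of $p_j$-traces, so the single-trace model $\set{\emptyset^\omega}$ (no markers at all) satisfies $\phi'$ for \emph{every} input $\phi$ --- in particular for the unsatisfiable sentence $\forall\pi.\, a_\pi \wedge \neg a_\pi$. Since satisfiability in this paper means having a nonempty model, equisatisfiability as you state it fails, and your backward direction silently assumes $T_0 \neq \emptyset$: you need at least one $q$-trace both to make $T_0$ a legitimate model and to extend a partial assignment of $(\pi_i)_{i\in B(j)}$ to all universals when establishing that every tuple has a $p_j$-witness. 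The repair is local --- add one trailing existential asserting $q$ on its trace (this preserves the $\forall^*\exists^*$ shape and temporal depth) --- but as written the construction is not equisatisfiable. The paper sidesteps this issue because it does not relativize: its recovered model is the $\ap$-projection of \emph{all} traces, and its unguarded conjunct demanding marked witnesses already rules out markerless models.
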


\begin{proof}

We call an existential quantifier in a \hyltl formula \textit{critical} if there is a universal quantifier in its scope. Formulas without critical existential quantifiers are exactly formulas of the form~$\forall^* \exists^*$. In what follows, we describe a polynomial time transformation to decrease the number of critical existential quantifiers of a formula. By iterating this procedure, we obtain a formula of the desired form.

Let $\phi$ be a \hyltl formula with at least one critical existential quantifier, i.e., $\phi$ is of the form
\[
 \forall \pi_1 \ldots \forall \pi_n. \exists \pi_{n+1} Q_1 \tau_1 \ldots Q_p \tau_p. \psi\]
 with $n \geq 0$, $p>0$, quantifier-free $\psi$, $Q_i \in \set{\exists, \forall}$, and $Q_i = \forall$ for some $i$.

If $\phi$ has a model $T$, then the quantifier prefix~$\forall \pi_1 \ldots \forall \pi_n. \exists \pi_{n+1}$ of $\phi$ induces a Skolem function~$f$ associating to every $n$-tuple~$(t_1,\ldots,t_n) $ of traces in $ T$ a trace $f(t_1,\ldots,t_n) = t_{n+1}$ in $T$ such that $(T,\Pi) \models Q_1 \tau_1 \ldots Q_p \tau_p. \psi$, where $\Pi$ is the trace assignment with $\Pi(\pi_i) = t_i$ for all $1\leq i \leq n+1$.

We introduce $n+1$ fresh propositions $m^1, \ldots, m^{n+1}$ to mark the traces~$t_1,\ldots,t_{n+1}$ such that $f(t_1,\ldots,t_n) = t_{n+1}$. 
Let $\phi''$ be the conjunction of the sentences
\[ \phi_1 =  \forall \pi_1 \ldots \forall \pi_n. \exists \pi_{n+1}. \F\big(\bigwedge_{i=1}^{n+1}m^i_{\pi_i}\big),  \]
which states the existence of the Skolem function, and
\[ \phi_2 =  \forall \pi_1 \ldots \forall \pi_n. \forall \pi_{n+1}.  Q_1 \tau_1 \ldots Q_p \tau_p. \F\big(\bigwedge_{i=1}^{n+1}m^i_{\pi_i}\big) \Rightarrow \psi,\]
which ensures that the Skolem function is correct. Note that we replaced the existential quantifier on $\pi{n+1}$ by a universal one.
Note that the conjunction of those sentences can be turned into a \hyltl sentence by renaming the trace variables $\pi_i$ in the first one into $\pi'_i$ and then merging them in the following way:
\[\forall \pi'_1 \ldots \forall \pi'_n. \forall \pi_1 \ldots \forall \pi_n. \forall \pi_{n+1}.  Q_1 \tau_1 \ldots Q_p \tau_p. \exists \pi'_{n+1}. \F\big(\bigwedge_{i=1}^{n+1}m^i_{\pi'_i}\big) \land \big[\F\big(\bigwedge_{i=1}^{n+1}m^i_{\pi_i}\big) \Rightarrow \psi\big].\] This formula has one less critical existential quantifier than $\phi$.
Hence, it only remains to prove that $\phi$ and $\phi''$ are equisatisfiable. 

Suppose $\phi$ is satisfiable. Then, it has a countable (possibly finite) model~$T$~\cite{DBLP:conf/stacs/Finkbeiner017}. Thus, for all $t_1,\ldots,t_n \in T$, there exists $t_{n+1} \in T$ with $(T,[(\pi_i \rightarrow t_i)_{1\le i \le n+1}]) \models Q_1 \tau_1, \ldots, Q_p \tau_p, \psi$. Thus, for all $t_1,\ldots,t_n \in T$ there exists a trace assignment~$\Pi_{t_1,\ldots,t_n}$ over $T$ with domain~$\set{\pi_1, \ldots, \pi_n,\pi_{n+1} }$ such that
\begin{itemize}
    \item $(T,\Pi_{t_1,\ldots,t_n}) \models Q_1 \tau_1, \ldots, Q_p \tau_p, \psi$, and
    \item $\Pi_{t_1,\ldots,t_n}(\pi_i) = t_i$ for all $1 \leq i \leq n$.
\end{itemize}

Also, fix an injection~$g \colon T^{n+1} \rightarrow \nats$ (which exists as $T$ is countable).
Now, let $T'' = \set{t'' \mid t \in T} \subseteq (\pow{\ap \cup \set{m^i \mid 1 \le i \le n+1}})^\omega$ where $t''$ is obtained from $t$ as follows:
\begin{itemize}

    \item The projection of $t''$ to $\ap$ is $t$, and
    
    \item for all $1 \leq i \leq n+1$: $m^i \in t''(j)$ if and only if there exist~$t_1, \ldots, t_{n+1}$ such that $t = t_i$, $j = g(t_1,...,t_{n+1})$, and $t_{n+1} = \Pi_{t_1,\ldots,t_n}(\pi_{n+1})$, i.e., we encode the Skolem function using the marks~$m^i$ at the position given by the injection~$g$ to each $(n+1)$-tuple of traces. 
\end{itemize}

We now prove that $T''$ is a model of $\phi''$.
To this end, fix some $t''_1,\ldots,t''_{n+1} \in T''$, and let $\Pi''=[(\pi_i \rightarrow t_i'')_{1 \le i \le n+1}]$. There exist $t_1,\ldots,t_{n+1} \in T$ such that $t_i$ is the projection of $t''_i$ to $\ap$ for all $i$. We denote $\Pi_{t_1,\ldots,t_n}$ by $\Pi$. If $\Pi(\pi_{n+1}) \neq t_{n+1}$ then $m^i \notin t''_i(g(t_1,...,t_n,t_{n+1}))$ for some $i$. 
By injectivity of $g$ and by definition of $T''$, there does not exist any other $j \in \nats$ such that $m^i \in t''_i(j)$ for all $1 \leq i \leq n+1$. Thus, $(T'',\Pi'') \not\models\F(\bigwedge_{i=1}^{n+1}m^i_{\pi_i})$ i.e., $(T'',\Pi'') \models Q_1 \tau_1 \ldots Q_p \tau_p. \F(\bigwedge_{i=1}^{n+1}m^i_{\pi_i}) \Rightarrow \psi$.

If $\Pi(\pi_{n+1}) = t_{n+1}$, i.e., $\Pi(\pi_{i}) = t_{i}$ for all $i$, then by definition of $\Pi = \Pi_{t_1,\ldots,t_n}$, $(T,\Pi) \models Q_1 \tau_1 \ldots Q_p \tau_p. \psi$. As $\psi$ only contains propositions from $\ap$ and as $t_i$ is the projection of $t''_i$ to $\ap$ for all $i$, $(T'',\Pi'') \models Q_1 \tau_1 \ldots Q_p \tau_p. \psi$. Thus, $(T'',\Pi'') \models Q_1 \tau_1 \ldots Q_p \tau_p. \F(\bigwedge_{i=1}^{n+1}m^i_{\pi_i}) \Rightarrow \psi$ as well.
As we have picked $t_0, \ldots, t_{n+1}$ arbitrarily, we have shown $T'' \models \phi_2$.

Hence, it remains to consider $\phi_1$. Recall that in this formula, $\pi_{n+1}$ is quantified existentially. 
Thus, fix some $t''_1, \ldots, t''_n$ in $T''$ and let $t_1, \ldots, t_n$ be their projections to $\ap$. 
As argued above, there is a trace~$t_{n+1}$ in $T$ such that $\Pi_{t_1,\ldots,t_n}(\pi_{n+1})=t_{n+1}$. By definition of $g$ and $T''$, $m^i \in t''_i(g(t_1,\ldots,t_{n+1}))$ for all $i$. Therefore, $(T'',[(\pi_i \rightarrow t''_i)_{1 \leq i \leq n+1}]) \models \F(\bigwedge_{i=1}^{n+1}m^i_{\pi_i})$.
Thus, $T''$ is also a model of $\phi_1$.
Altogether, $T''$ is a model of $\phi''$. 

Conversely, suppose $\phi''$ has a model $T''$. For all $t''\in T''$, let $t$ be its projection to $\ap$. Further, let $T=\set{t \mid t'' \in T''}$. 
Fix some $t_1,\ldots,t_n \in T$. Let $t''_i \in T''$ be such that $t_i$ is the projection of $t''_i$ to $\ap$ for all $i$. As $T''$ models $\phi''$, there exists $t''_{n+1} \in T''$ such that $(T'',[(\pi_i \rightarrow t''_i)_{1 \le i \le n+1}]) \models \F(\bigwedge_{i=1}^{n+1}m^i_{\pi_i})$.

Also, as $T''$ models $\phi''$, $(T'',[(\pi_i \rightarrow t'_i)_{1 \le i \le n+1}]) \models Q_1 \tau_1 \ldots Q_p \tau_p. \F(\bigwedge_{i=1}^{n+1}m^i_{\pi_i}) \Rightarrow \psi$. 
Now, as $\psi$ only contains variables of $\ap$, $(T,[(\pi_i \rightarrow t_i)_{1 \le i \le n+1}]) \models Q_1 \tau_1 \ldots Q_p \tau_p. \psi$.

As we have picked~$t_1,\ldots,t_n$ arbitrarily, we have shown that $T$ is a model of $\phi$.
\qed
\end{proof}

The construction presented in the proof of Lemma~\ref{smallAD} may increase the number of universally quantified variables, but we can decrease that number to two without increasing the temporal or alternation depth.
This step also completes the proof of Theorem~\ref{smalldepth}

This transformation can be achieved by merging several traces into one. To this end, we increase the set of atomic propositions by considering as new atomic propositions tuples of the previous atomic propositions, i.e., one trace now encodes a tuple of traces. However we cannot decrease the number of universal quantifiers below two this way, as we need two universal quantifiers to ensure that every possible combination of traces is represented in the model, i.e., any model of the resulting formula is the set of mergings of traces of another model.

\begin{lemma}\label{fewvar}
For every \hyltl sentence $\phi$ of the form $\forall^*\exists^* \psi$ with quantifier-free $\psi$, one can compute in polynomial time an equisatisfiable sentence $\phi'$ of the form $\forall^2 \exists^* \psi'$ where $\psi'$ is quantifier-free and $\TD(\phi') = \max(\TD(\phi),1)$.
\end{lemma}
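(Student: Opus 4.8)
The plan is to collapse the block of universal quantifiers into a block of two by letting a single trace encode an $n$-tuple of traces. Write $\phi = \forall \pi_1 \cdots \forall \pi_n \exists \pi_{n+1} \cdots \exists \pi_{n+m}.\, \psi$ with quantifier-free $\psi$; the degenerate cases $n \le 1$ are handled by padding the prefix with dummy universal quantifiers and, if $\TD(\phi)=0$, conjoining a temporal-depth-one tautology, so assume $n \ge 2$. I work over the enlarged set $\ap' = \ap \times \set{1,\ldots,n}$ of atomic propositions. A valuation over $\ap'$ is literally an $n$-tuple of valuations over $\ap$, hence a trace over $\ap'$ is an $n$-tuple of traces over $\ap$; for traces $t_1,\ldots,t_n$ over $\ap$ let $\merge(t_1,\ldots,t_n)$ be the corresponding trace over $\ap'$, i.e.\ $(a,i) \in \merge(t_1,\ldots,t_n)(j)$ if and only if $a \in t_i(j)$, and for a set $T$ of traces over $\ap$ put $\merge(T) = \set{ \merge(t_1,\ldots,t_n) \mid t_1,\ldots,t_n \in T }$. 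The intended model of the sentence $\phi'$ to be constructed is $\merge(T)$ for a model $T$ of $\phi$.

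I define $\phi'$ as the conjunction of the following sentences, each of temporal depth at most $\max(\TD(\psi),1)$: \textbf{(i)} the translated matrix $\forall \sigma_1. \exists \sigma_{n+1} \cdots \exists \sigma_{n+m}.\, \tilde\psi$, where $\tilde\psi$ arises from $\psi$ by replacing $a_{\pi_i}$ with $(a,i)_{\sigma_1}$ for $i \le n$ and with $(a,1)_{\sigma_i}$ for $i > n$, so the $i$-th component of the single universal trace plays the role of $\pi_i$ and the first component of each fresh existential trace plays the role of the $i$-th existential; \textbf{(ii)} projection equality: for every $i \in \set{2,\ldots,n}$ the sentences $\forall \sigma_1. \exists \sigma'.\, \G \bigwedge_{a \in \ap}\big((a,i)_{\sigma_1} \Leftrightarrow (a,1)_{\sigma'}\big)$ and $\forall \sigma_1. \exists \sigma'.\, \G \bigwedge_{a \in \ap}\big((a,1)_{\sigma_1} \Leftrightarrow (a,i)_{\sigma'}\big)$, which force the $n$ component-projections of the model to coincide; \textbf{(iii)} swap closure: for every $i \in \set{1,\ldots,n}$ the sentence $\forall \sigma_1 \forall \sigma_2. \exists \sigma''.\, \G\big( \bigwedge_{a}((a,i)_{\sigma''} \Leftrightarrow (a,i)_{\sigma_2}) \wedge \bigwedge_{k \neq i}\bigwedge_{a}((a,k)_{\sigma''} \Leftrightarrow (a,k)_{\sigma_1}) \big)$, stating that the model is closed under replacing the $i$-th component of one of its traces by the $i$-th component of another. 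All conjuncts have quantifier prefix $\forall^{\le 2}\exists^*$, so by reusing the \emph{same} two universal variables $\sigma_1,\sigma_2$ across all of them (a conjunct with only one universal quantifier gets a dummy $\forall \sigma_2$) and renaming the existential variables apart, the conjunction is equivalent to a single $\forall^2\exists^*$ sentence $\phi'$ with quantifier-free matrix and $\TD(\phi') = \max(\TD(\psi),1) = \max(\TD(\phi),1)$; the whole construction is plainly polynomial.

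For equisatisfiability, the structural fact I would isolate first is that \emph{every} model $\tilde T$ of (ii) and (iii) equals $\merge(T)$, where $T$ is its common first-component projection: the inclusion $\subseteq$ is immediate from (ii), since then every component of every trace of $\tilde T$ lies in $T$; for $\supseteq$, given $(t_1,\ldots,t_n) \in T^n$ pick $\tilde t^{(i)} \in \tilde T$ with $i$-th component $t_i$ (these exist by (ii)) and apply (iii) for $i = 2,3,\ldots,n$ in turn, starting from $\tilde t^{(1)}$, to assemble $\merge(t_1,\ldots,t_n) \in \tilde T$. Granting this, a routine induction on subformulas of $\psi$ (matching $(a,i) \in \merge(t_1,\ldots,t_n)(j)$ with $a \in t_i(j)$, and using that suffixing commutes with $\merge$) shows that $T \models \phi$ implies $\merge(T) \models \phi'$ and, conversely, that if $\tilde T \models \phi'$ then its common projection $T$ satisfies $\phi$; in the first direction the existential witnesses for $\pi_{n+l}$ are realized by any trace of $\merge(T)$ with the right first component, in the second they are read off as first components of the witnesses supplied by (i).

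The step I expect to be the main obstacle is the prenex-normalization bookkeeping under the alternation bound: one must combine the translated matrix (with a single universal quantifier) and the swap-closure conjuncts (each genuinely binary, with its own existential witness) into one $\forall^2$ block without letting the universal quantifiers accumulate and without creating a spurious $\exists\forall$ alternation — which is precisely the point where two universal quantifiers become unavoidable and the construction cannot be pushed below that.
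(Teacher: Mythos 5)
Your proposal is correct and follows essentially the same route as the paper: encode $n$-tuples of traces as single traces over $\ap \times \set{1,\ldots,n}$, use two universal quantifiers together with existentially quantified witnesses to force the model to be closed under component swaps (hence to equal $\merge(T)$ for its projection $T$), and translate the matrix so that the universal trace carries the $n$ universally quantified components and the existential traces carry their witnesses in the first component. The only difference is cosmetic: the paper enforces closure with a single family of cross-index swap witnesses~$\tau_{(i_1,i_2)}$ inside one prenex $\forall^2\exists^*$ formula, whereas you split this into projection-equality and same-index swap-closure conjuncts and then merge the $\forall^{\le 2}\exists^*$ conjuncts by sharing the universal variables, which is a valid prenexing step.
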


\begin{proof}
Let $\phi = \forall \pi_1\ldots\forall \pi_n \exists \pi_{n+1}\ldots\exists \pi_{n+n'} \psi$ with quantifier-free $\psi$, i.e., $\phi$ starts with $n$ universal quantifiers. 
We only consider the case~$n>2$, as otherwise we can just choose $\phi' = \phi$.

We construct a sentence~$\phi'$ over $\ap\times \set{1,\ldots,n}$, i.e., a single trace  represents a tuple of $n$ traces over $\ap$. 
The two universally quantified variables are employed to ensure that every possible tuple is represented by a trace: Given two traces, we state the existence of $n^2$ other traces that are shufflings of those two. We also simulate the universal quantifiers of $\phi$ using the first of the two universal quantifiers in $\phi'$: instead of quantifying $n$ times a single trace, we quantify an $n$-tuple of traces once.

Formally, we define 
\[\phi' = \forall \pi. \forall \pi'. \exists \pi_{n+1}\ldots \exists \pi_{n+n'}. \exists \tau_{(1,1)}\exists \tau_{(1,2)}\ldots\exists \tau_{(n-1,n)}\exists \tau_{(n,n)}. \psi_1 \land \psi_2\] 
where 
\[\psi_1 = \bigwedge_{i_1=1}^n \bigwedge_{i_2 = 1}^n \bigwedge_{a \in \ap} \G\Big[\big((a,i_1)_{\tau_{(i_1,i_2)}} \Leftrightarrow (a,i_2)_{\pi'}\big) \land \bigwedge_{j\neq i_1} \big( (a,j)_{\tau_{(i_1,i_2)}} \Leftrightarrow (a,j)_\pi\big)\Big]
\] expresses that $\pi$ takes all the tuples of traces of the model as values: For any two tuples $(t_1,\ldots,t_n)$, $(t'_1,\ldots,t'_n)$ and for all $1 \leq i_1,i_2 \leq n$ we make sure that the trace $(t_1,\ldots,t_{i_1-1},t'_{i_2},t_{i_1+1},\ldots,t_n)$ also appears. This property implies that all tuples of traces are taken into account when $\pi$ is ranging over them.

Furthermore, $\psi_2$ is obtained from $\psi$ by replacing each atomic formula~$a_{\pi_j}$ for $1 \le j \le n$ with  $(a,j)_\pi$ and every $a_{\pi_j}$ for $n+1 \le j \le n+n'$ by $(a,1)_{\pi_j}$, i.e., we identify the universally quantified traces with the components of a tuple of traces assigned to $\pi$ in $\psi_2$ and the existentially ones with the first components of the tuples assigned to the $\pi_i$ in $\psi_2$.

Suppose $\phi$ has a model $T$. Then, for all $t_1,\ldots,t_n \in T$ let $\merge(t_1,\ldots,t_n)$ be the trace over $\ap \times \set{1,\ldots,n}$ such that $\merge(t_1,\ldots,t_n)(j) = \bigcup_{i=1}^n t_i(j)\times\set{i}$ for all $j \in \nats$. Let $\merge(T) = \set{\merge(t_1,\ldots,t_n) \mid t_1,\ldots,t_n \in T}$. Note that there is a bijection between $n$-tuples over traces from $T$ and traces in $\merge(T)$.

Let $t' \in \merge(T)$, there exist $t_1,\ldots,t_n \in T$ such that $t' = \merge(t_1,\ldots,t_n)$. As $T$ is a model of $\phi$, there exist $t_{n+1},\ldots,t_{n+n'}$ such that $(T, [(\pi_i \rightarrow t_i)_{1 \le i \le n+n'}]) \models \psi$.
Further, let $t'_1,\ldots,t'_n \in T$ be arbitrary. Then, one can check that the trace assignment associating 
\begin{itemize}

    \item $\merge(t_1,\ldots,t_n)$ to $\pi$,
    
    \item $\merge(t'_1,\ldots,t'_n)$ to $\pi'$,
    
    \item $\merge(t_i,\ldots,t_i)$ to $\pi_i$ with $n+1 \le i \le n+n'$, and

    \item $\merge(t_1,\ldots,t_{i_1-1},t'_{i_2},t_{i_1+1},\ldots,t_n)$ to $\tau_{(i_1,i_2)}$
    
\end{itemize}
satisfies both $\psi_1$ and $\psi_2$. Thus, as we have picked the $t_i$ and $t_i'$ for $1 \le i \le n$ arbitrarily, we conclude that $\merge(T)$ is a model of $\phi'$.

Conversely, suppose $\phi'$ has a model $T'$. For all $t' \in T'$, let $t$ be the trace over $\ap$ such that $t(j) = \set{a \mid (a,1) \in t'(j)}$  for all $j \in \nats$, and let $T = \set{t \mid t' \in T'}$.
As $T'$ satisfies $\phi'$, for all $t'_1,t'_2 \in T$, for all $1\leq i_1,i_2 \leq n$ there exist $t'_{i_1,i_2}$ which is identical to $t'_1$ on all $\ap \times \set{i}$ except for $\ap \times \set{i_1}$, for which it is identical to $t'_2$ on $\ap \times \set{i_2}$. Because of this property, one can see that $T'$ is equal to $\set{\merge(t_1,\ldots,t_n) \mid t_1,\ldots,t_n \in T}$.

Now, an induction on the construction of $\psi_2$ shows that, as $T'$ satisfies $\phi'$, $T$ satisfies~$\phi$.
\qed
\end{proof}

Thus, $\forall^2\exists^*$ formulas with temporal depth two capture the complete complexity of the satisfiability problem for \hyltl. 
As the latter problem is undecidable and as all reduction presented above are effective, we immediately obtain that  satisfiability for $\forall^2\exists^*$ formulas with temporal depth two is also undecidable.

As alluded to above, an even stronger result can be obtained by strengthening the proof of Hahn and Finkbeiner for $\forall \exists^*$ formulas to only use temporal depth two. \footnote{Alternatively, one can also obtain a direct reduction from the Turing machine immortality problem~\cite{DBLP:journals/jacm/Hooper66a} to satisfiability of $\forall \exists$ sentences of temporal depth two.}
Thus, only formulas of temporal depth one remain to consider.

Before we start investigating this class let us quickly comment on why we disregard temporal depth zero: every such sentence can easily be turned to an equisatisfiable instance of \qbf, which is known to be solvable in polynomial space.

Thus, it only remains to consider formulas with arbitrary quantifier prefixes, but temporal depth one.
Our main result of this section shows that even this problem is undecidable, even for \hyltlfgone formulas with alternation depth one. Due to the restriction on the temporal depth, our encoding of a Minsky machine is more complicated than it would be with arbitrary temporal depth.

\begin{theorem}\label{SatFX1}
The following problem is undecidable: 
Given a $\forall^2\exists^*$ \hyltlfgone sentence~$\phi$, is $\phi$ satisfiable?
\end{theorem}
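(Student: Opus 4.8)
The plan is to reduce the halting problem for Minsky (two-counter) machines to satisfiability of $\forall^2\exists^*$ \hyltlfgone sentences. Given a Minsky machine $\M$, I would construct a sentence $\phi_\M$ that is satisfiable if and only if $\M$ halts, where each configuration of $\M$ (a control state together with the two counter values) is encoded by a single trace, and a model encodes an entire (halting) run. The fundamental tension is that with temporal depth one we cannot walk along a trace step by step to compare counters; instead, following the pattern of Theorem~\ref{kperiod} and Theorem~\ref{kstates}, counter values must be represented \emph{spatially} — by the positions of distinguished marks within a trace — and arithmetic relations between successive configurations must be enforced by relating several traces to each other via $\forall\pi.\forall\pi'.\exists\dots$ statements that compare them position-by-position under a single $\G$. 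Concretely, a configuration trace will carry the current instruction label as a proposition that holds everywhere, a block of $c_1$-cells followed by a block of $c_2$-cells (so the counter value is the length of the block), plus bookkeeping propositions $\inmark,\nimark$ marking where the "active" cell sits, and a time-stamp / rank proposition (as in Theorem~\ref{kperiod}) so that a trace knows its position in the run. The universal pair $\forall\pi.\forall\pi'$ lets us say: if $\pi$ is a configuration at time $t$ and $\pi'$ is a configuration at time $t+1$, then the existential witnesses force the $c_i$-blocks of $\pi'$ to be the $c_i$-blocks of $\pi$ incremented, decremented, or unchanged, exactly according to the instruction labelling $\pi$; zero-tests are expressed by asserting (non)existence of a $c_i$-cell, which is a Boolean combination of $\F$-formulas.

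The key steps, in order, are: (1) fix the alphabet of atomic propositions — instruction labels of $\M$, the cell markers $c_1,c_2$, padding/end symbols, the position markers $\inmark,\nimark$, and a binary rank encoding $\{0,1\}$ and its "primed-by-one" pair $\{0,1\}^2$ as in the proof of Theorem~\ref{kperiod}; (2) write the "well-formedness" conjuncts (temporal depth one, only $\F,\G$): exactly one proposition per position, each trace has the shape (label)$^1$ then a $c_1$-block then a $c_2$-block then $\#^\omega$, ranks are consistent, there is a unique trace of each rank, etc.; (3) write the "initial" conjunct asserting a rank-$0$ trace with the start label and both counters zero, and the "halting" conjunct asserting some trace carries the halting label; (4) write the core "successor" conjunct: a $\forall\pi.\forall\pi'$ statement saying that whenever $\pi'$ has rank one more than $\pi$, there exist $\exists$-witnessed auxiliary traces letting us verify, via $\G$ over a position-wise comparison of $\pi$, $\pi'$ and the witnesses, that the transition encoded in $\pi$'s label produces $\pi'$ — incrementing a block is "copy the block and stick one extra cell on before $\#$"; decrementing is the reverse; zero-branching is a disjunction guarded by $\F c_i$ vs. $\G\neg c_i$; (5) prove correctness: from a halting run build the obvious model, and conversely from any model extract a halting run by induction on the rank, exactly mirroring the $T_1$/$T_2$ argument in Theorem~\ref{kperiod}. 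Finally, since Theorem~\ref{smalldepth} already reduces everything to $\forall^2\exists^*$ but \emph{two} temporal depth, the point of this theorem is that the more delicate Minsky encoding above keeps temporal depth \emph{one} at the cost of staying undecidable; I would end by invoking undecidability of the two-counter halting problem.

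The main obstacle I anticipate is engineering the increment/decrement comparison of two counter-blocks using only temporal depth one. One cannot say "the $c_1$-block of $\pi'$ is the $c_1$-block of $\pi$ with one cell appended" by a next-step walk; instead one must align the two blocks by their positions. The trick — reused from Theorems~\ref{kperiod} and \ref{kstates} — is to introduce an existentially quantified auxiliary trace $\tau$ that is a "copy of $\pi$ shifted / overlaid on $\pi'$", so that a single $\G(\cdots\Leftrightarrow\cdots)$ comparing $\tau$, $\pi$, and $\pi'$ position-by-position pins down the arithmetic relation; getting the shift to be expressible (which in the Kripke-structure proof was achieved by a clever $\G(l_{\pi_2}\Leftrightarrow\neg r_{\pi_1})$-style identity) without a model to pre-constrain the traces, i.e. using only the formula, is the part that requires care. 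A secondary subtlety is ensuring the rank mechanism forces a genuinely \emph{linear} run (no rank skipped, no two traces of equal rank with different content), which the uniqueness-of-rank and predecessor-existence conjuncts handle just as in Theorem~\ref{kperiod}. Once these comparison gadgets are in place, the rest is routine bookkeeping.
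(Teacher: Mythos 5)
There is a genuine gap, and it sits exactly where you flag it: the increment/decrement gadget. You propose to encode a counter as the \emph{length} of a block of $c_i$-cells inside a trace and to enforce increment as ``the block of $\pi'$ is the block of $\pi$ with one extra cell appended,'' hoping that an existentially quantified shifted/overlaid auxiliary trace plus a single $\G(\cdots\Leftrightarrow\cdots)$ can pin this down. With temporal depth one and only $\F,\G$ this cannot work: for a fixed tuple of traces, satisfaction of a quantifier-free \hyltlfgone formula depends only on the \emph{set} of tuples of valuations occurring along the traces (this is precisely the abstraction the paper exploits in Theorem~\ref{sat1forall}), so $\set{c}^{n}\emptyset^\omega$ versus $\set{c}^{n+2}\emptyset^\omega$ paired with $\set{c}^{n+1}\emptyset^\omega$ produce the same abstraction and an ``exactly one longer'' relation is not expressible, no matter how many existential auxiliary traces you add --- the shift/carry constraints you would need are next-step relations, which is exactly what depth one forbids (and the $\X^n$-prefixes of \hyltlfgxone do not help under a $\G$). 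The same objection applies to your binary time-stamp mechanism borrowed from Theorem~\ref{kperiod}: the requirements there are written in unrestricted \hyltl, and binary $+1$ comparison also needs a positionwise carry recurrence of depth two.

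The paper's proof avoids absolute arithmetic altogether: it reduces from the \emph{non-halting} (infinite-computation) problem, encodes a configuration $(q,n_1,n_2)$ by a trace whose $i$-counter value is its \emph{rank} among all traces of the model with respect to the inclusion order on the sets $S_i(t)=\set{j \mid i \in t(j)}$ (the formula $\forall\pi.\forall\pi'.\,\G(i_\pi\Rightarrow i_{\pi'})\lor\G(i_{\pi'}\Rightarrow i_\pi)$ forces these sets to be linearly ordered), and expresses increment/decrement as ``immediate successor/predecessor in that order,'' i.e.\ $\pi <_i \pi'$ and $\forall\pi''.\,(\pi''\leq_i\pi \lor \pi'\leq_i\pi'')$ --- a purely relative, order-theoretic condition that is depth one. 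Zero-tests are $\G(\neg i_\pi\land\neg i_{\pi'})$, as in your sketch, and no time-stamps are needed because the run is extracted by following the $\forall\pi.\exists\pi'$ successor requirement. Finally, the raw formula is not $\forall^2\exists^*$ (the immediate-successor condition introduces an extra universal), and the paper obtains that prefix by applying Lemmas~\ref{smallAD} and~\ref{fewvar}, which preserve temporal depth one; your outline would need the same step but does not address it. So the skeleton (initial trace, successor requirement, two-directional correctness) is right, but the core arithmetic mechanism must be the relative rank/order construction, not positionwise block comparison.
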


\begin{proof}
We reduce from the (non)-halting problem for 2-counter Minsky machines. Recall that such a machine can be seen as a tuple $\M = (Q,\Delta,q_0)$ where $Q$ is a finite set of states, $q_0 \in Q$ an initial state, and $\Delta \subseteq Q \times \set{1,2} \times OP \times Q $ a set of transition rules, where $OP = \set{\incop , \decop , \testop}$. A configuration of $\M$ is an element of $Q \times \nats \times \nats$. For all $n, n' \in \nats$, $op \in OP$ we write $n \xrightarrow{op} n'$ if:
\begin{itemize}
    \item $op$ is $\incop$ and $n' = n+1$.
    \item $op$ is $\decop$ and $n' = n-1$ (note that this operation is only applicable if $n>0$).
    \item $op$ is $\testop$ and $n' = n = 0$.
\end{itemize}
There is a transition from $(q,n_1,n_2)$ to $(q',n'_1,n'_2)$ if and only if there exists $i \in \set{1,2}$ and $op\in OP$ such that $(q,i,op,q') \in \Delta$, $n_{3-i} = n_{3-i}'$ and $n_i \xrightarrow{op} n'_i$. Given such a machine, it is undecidable whether it has an infinite computation $(q_0,0,0) \rightarrow (q_1,n_1^1, n_2^1) \rightarrow (q_2,n_1^2, n_2^2) \rightarrow \cdots$.

Let $\M = (Q,\Delta,q_0)$ be a 2-counter Minsky machine. We use $\ap = Q \cup \set{1,2}$ as atomic propositions. Given $i \in \set{1,2}$, we denote by $\otherone{i}$ the other proposition.
Consider the formula $\psi_1 = \forall \pi. \forall \pi'. \G (1_\pi \Rightarrow 1_{\pi'}) \lor \G (1_{\pi'} \Rightarrow 1_\pi)$. We define $\psi_2$ with $2 \in\ap$ analogously. In the following, we only consider sets of traces that satisfy $\psi_1 \wedge\psi_2$. 

 For each trace $t \in (\pow{\ap})^\omega$ and $i\in\set{1,2}$, we define the \emph{$i$-set} of $t$ as $S_i(t) = \set{j \in \nats \mid i \in t(j)}$. Now fix $T \subseteq (\pow{\ap})^\omega$ that satisfies $\psi_1 \wedge\psi_2$. We define the order~$\leq_i$ on $T$ as follows: for all $t,t' \in T$, $t \leq_i t'$ if and only if $S_i(t) \subseteq S_i(t')$. We write $t <_i t'$ if $S_i(t) \subsetneq S_i(t')$. As $T$ satisfies $\psi_1 \wedge\psi_2$, the $\leq_i$ are total orders on $T$. We also define for all $t \in T$ and $i \in \set{1,2}$, the \emph{rank} of $t$ with respect to $i$ as $ \rk{i}{t} = \size{\set{S_i(t') \mid t' \in T\text{ and } t' <_i t}} $, which may be infinite. Note that if $S_i(t) = \emptyset$ then $\rk{i}{t} = 0$, and that if $S_i(t) = S_i(t')$ then $\rk{i}{t} = \rk{i}{t'}$. 
Finally, as $\leq_i$ is a total order, if we have $t <_i t'$, but there is no $t''$ with $t <_i t'' <_i t'$, then $\rk{i}{t'} = \rk{i}{t} +1$. Note that this holds even when $\rk{i}{t}$ is infinite, assuming $\infty + 1 = \infty$.

We construct a \hyltlfgone formula~$\phi$ that encodes the existence of an infinite computation~$(q_0,0,0) \rightarrow (q_1,n_1^1, n_2^1) \rightarrow (q_2,n_1^2, n_2^2)\rightarrow\cdots$ of $\M$.
In a model $T$ of $\phi$, a configuration $(q,n_1,n_2)$ is encoded by a trace $t$ with $t(0) \cap Q = \set{q}$ and for $i \in \set{1,2}$, $rk_i(t) = n_i$. 
Then, $\phi$ states the existence of an initial trace~$t_0$, representing the configuration $(q_0,0,0)$, as well as the existence of a successor $t'$ encoding $(q', n_1', n_2')$ for each trace $t$ encoding $(q, n_1, n_2)$, i.e., we require $(q, n_1, n_2) \rightarrow (q', n_1', n_2')$.
The latter is witnessed by the existence of a  transition~$(q,i,op,q')$ such that:
\begin{enumerate}
    
    \item $t(0) \cap Q = \set{q}$ and $t'(0) \cap Q = \set{q'}$, i.e., $t$ and $t'$ indeed encode the states of their respective configurations correctly.
    
    \item  For all $j \in \nats$, $\otherone{i} \in t(j)$ if and only if $\otherone{i} \in t'(j)$, i.e. $S_{\otherone{i}}(t) = S_{\otherone{i}}(t')$. Thus, as argued above, $\rk{\otherone{i}}{t} = \rk{\otherone{i}}{t'}$, which implies $n_{\otherone{i}} = n'_{\otherone{i}}$.
    
    \item If $op$ is $\incop$ then $t <_i t'$ and there does not exist any $t''$ such that $t <_i t'' <_i t'$, i.e., $\rk{i}{t'} = \rk{i}{t} +1$, as $\leq_i$ is a total order.
    Then, we have $n_i' = n_i +1$.
    
    \item If $op$ is $\decop$ then $t >_i t'$ and there does not exist any $t''$ such that $t >_i t'' >_i t'$, i.e., $\rk{i}{t'} = \rk{i}{t} -1$, as $\leq_i$ is a total order.
    Then, we have $n_i' = n_i -1$.
        
    \item If $op$ is $\testop$ then for all $j \in \nats$, $i \notin t(j)$  and $i \notin t'(j)$.
    Hence, $S_i(t) = S_i(t')= \emptyset$, i.e., $\rk{i}{t}= \rk{i}{t'} =0$, which implies $n_i = n_i' = 0$.
\end{enumerate}

We encode those conditions in $\phi$, which is the conjunction of the following three sentences
and of $\psi_1 \wedge\psi_2$:
\begin{itemize}

\item $\phi_1 = \forall \pi. \bigwedge_{q \neq q' \in Q} q_\pi \Rightarrow \neg q'_\pi$ expresses that a trace is associated to at most one state.

\item $\phi_2 = \exists \pi_0. {(q_0)}_{\pi_0} \land \G(\neg 1_{\pi_0} \land \neg 2_{\pi_0})$ expresses the existence of a trace representing the initial configuration $(q_0,0,0)$.

\item $\phi_3 = \forall \pi. \exists \pi'. \bigvee_{(q,i,op,q') \in \Delta} q_\pi \land q'_{\pi'} \land \phi_{i,op} \land \G(\otherone{i}_\pi \Leftrightarrow \otherone{i}_{\pi'})$ expresses that all traces have a successor obtained by faithfully simulating a transition of the machine.
\end{itemize}
Here, we use the formulas
\begin{itemize} 
\item $\phi_{1,\incop} = \forall \pi''. \pi <_1 \pi' \land (\pi'' \leq_1 \pi \lor \pi' \leq_1 \pi'')$,  
\item $\phi_{1,\decop} = \forall \pi''. \pi >_1 \pi' \land (\pi'' \geq_1 \pi \lor \pi' \geq_1 \pi'')  $, and 
\item $\phi_{1,\testop} = \G(\neg 1_\pi \land \neg 1_{\pi'}) $,
\end{itemize}
 where
$\pi \leq_1 \pi' = \G(1_\pi \Rightarrow 1_{\pi'}) $ and $\pi <_1 \pi' = \pi \leq_1 \pi' \land \F(\neg 1_{\pi} \land 1_{\pi'}) $. Finally, we define the formulas~$\leq_2$, $<_2$, and $\phi_{2,op}$ analogously.

The sentence~$\phi$ is not in prenex normal form. However, as no quantifier appears in the scope of a temporal operator, it can be put in that form. Further, it is not of the form $\forall^2\exists^*$, but we can apply Lemmas~\ref{smallAD} and \ref{fewvar} to bring it into this form while preserving the temporal depth, which is already one. We claim that $\phi$ is satisfiable if and only if $\M$ has an infinite computation starting in $(q_0,0,0)$.

Suppose $\phi$ is satisfied by a model $T$. The subformulas $\phi_1$ and $\phi_2$ enforce that $T$ contains a trace $t_0$ encoding the initial configuration $(q_0, 0 , 0)$ of $\M$. Further, $\phi_3$ expresses that every trace $t$ encoding a configuration~$(q,n_1,n_2)$ has a successor $t'$ encoding a configuration~$(q', n_1', n_2')$ with $(q,n_1,n_2) \rightarrow (q',n_1',n_2')$.
Thus, there exists an infinite sequence~$t_0,t_1,t_2,\ldots$ of traces encoding an infinite run of $\M$. 

Conversely, suppose $\M$ has an infinite run $(q_0,0,0) \rightarrow (q_1,n_1^1,n_2^1) \rightarrow(q_2,n_1^2,n_2^2)\cdots$, then for all $j$ let $t_j$ be the trace whose projection to $Q$ is $\set{q_j}\emptyset^\omega$, and whose projection to $\set{i}$ is $\set{i}^{n_i^j}\emptyset^\omega$ for $i \in \set{1,2}$.
One can then easily check that $\set{t_j \mid j \in \nats}$ is a model of~$\phi$.
\qed
\end{proof}

Thus, two universal quantifiers before some existential ones and using only $\F$ and $\G$ without nesting yields undecidable satisfiability. Our next result shows that removing one of the two universal quantifiers allows us to recover decidability, even when allowing nested next operators and leading existential quantifiers.

As a first step in the proof, we show that the nested next operators can be eliminated without introducing additional universal quantifiers. This is true, as we are only interested in satisfiability.

\begin{lemma}\label{xelim}
For every $\exists^* \forall \exists^*$ \hyltlfgxone sentence, one can construct in polynomial time an equisatisfiable $\exists^* \forall \exists^* $ \hyltlfgone sentence.
\end{lemma}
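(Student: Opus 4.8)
My plan is to eliminate the iterated next operators one exponent at a time. Since only finitely many distinct exponents $n_1 < \dots < n_r$ (with $r \le \size{\phi}$) occur in the sentence, and each atom $\X^{n}\psi'$ merely asserts a propositional condition at the single absolute position $n$, I would introduce fresh \emph{marker} propositions $m^1,\dots,m^r$ together with a few auxiliary trace variables placed at the front of the leading existential block (so that they are in scope of the whole formula), with the intention that $m^\ell$ sits at exactly position $n_\ell$ on every trace of the model. One of the auxiliary variables, say $\rho$, serves as a reference trace; each occurrence of $\X^{n_\ell}\psi'$ is then rewritten to $\G\big(m^\ell_\rho \Rightarrow \psi'\big)$. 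No quantifier ends up in the scope of a temporal operator, the prefix is extended only by leading existential quantifiers, and the rewriting is computable in polynomial time, so the result is again an $\exists^*\forall\exists^*$ sentence using only $\F$ and $\G$ and of temporal depth one, i.e.\ a \hyltlfgone sentence.

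All the difficulty sits in constraining the markers with the meagre resources available. With only $\F$ and $\G$ at temporal depth one one cannot pin down an absolute position on a single trace — not even ``holds at most once'' — so the clock structure must be enforced across traces, and only up to the slack afforded by asking for equisatisfiability rather than equivalence. Concretely, I would add conjuncts (still $\exists^*\forall\exists^*$, still $\F/\G$, still depth one) forcing, via $\G$-formulas relating $\rho$ and the universally quantified trace, that every trace of the model carries the same marker pattern as $\rho$, that each $m^\ell$ occurs ($\F m^\ell_\rho$), and that the marker position-sets are linearly arrangeable, using inclusion orderings on auxiliary traces in the spirit of the proof of Theorem~\ref{SatFX1}. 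The single available universal quantifier must be shared between this bookkeeping and the simulation of the original universal quantifier of $\phi$; organising this sharing is the delicate point, and it is the reason the present lemma is subtler than the $\forall^2\exists^*$ situation of Theorem~\ref{SatFX1}, where two universal quantifiers are at hand.

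I would then establish equisatisfiability in the two directions. From a model of the \hyltlfgxone sentence, decorating every trace with the markers placed exactly at position $n_\ell$ and adding the auxiliary traces with the prescribed patterns yields a model of the new sentence, and $\G(m^\ell_\rho \Rightarrow \psi')$ then holds exactly when $\X^{n_\ell}\psi'$ does, because each marker occurs only once. For the converse, I would use that the new sentence mentions only $\F$, $\G$ and propositional formulas, so the truth of every atom is preserved under any monotone re-timing of all traces performed uniformly (inserting stuttering positions while fixing position $0$). Since the new sentence forces each marker to occur and forces the markers to be linearly arrangeable, one can pick a single such re-timing that moves, for every $\ell$, some occurrence of $m^\ell$ to position $n_\ell$; after re-timing, $\G(m^\ell_\rho \Rightarrow \psi')$ implies $\X^{n_\ell}\psi'$, so projecting away the fresh propositions and the auxiliary traces gives a model of the original sentence.

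The step I expect to be the main obstacle is designing the gadget of the second paragraph: forcing enough of the clock structure with temporal depth one, only $\F$ and $\G$, and only one universal quantifier, so that its necessarily imperfect cross-trace enforcement of the marker positions is nonetheless repaired by the re-timing argument; the rewriting, the forward construction, and the re-timing itself are then routine.
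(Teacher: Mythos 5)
There is a genuine gap: the entire correctness of your construction rests on the ``gadget'' of your second paragraph (forcing every trace to carry $\rho$'s marker pattern, and forcing the marker position-sets to be ``linearly arrangeable'' via inclusion orderings), and you explicitly leave that gadget undesigned. It is also doubtful that it can be designed as you describe: the ordering machinery of Theorem~\ref{SatFX1} crucially uses \emph{two} universal quantifiers (e.g.\ $\forall\pi.\forall\pi'.\,\G(1_\pi\Rightarrow 1_{\pi'})\lor\G(1_{\pi'}\Rightarrow 1_\pi)$ to make the inclusion order total), whereas here you have exactly one, already needed to simulate the universal quantifier of the input sentence. Your need for arrangeability comes from restricting yourself to \emph{monotone} re-timings (stuttering insertions fixing position $0$): such re-timings can only push positions later and cannot reorder or duplicate them, so you additionally need marker occurrences at suitably early positions and in the right relative order --- none of which is enforceable with one universal quantifier, $\F/\G$ only, and temporal depth one.

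The paper's proof shows that all of this bookkeeping is unnecessary. It adds a \emph{single} fresh existentially quantified trace $\tau_0$ carrying the markers, requires only $\bigwedge_{k}\F m^k_{\tau_0}$ (each marker occurs somewhere, in any order and with any multiplicity, on $\tau_0$ alone --- no other trace is marked), rewrites $\X^k\beta$ to $\G(m^k_{\tau_0}\Rightarrow\beta)$ as you do, and guards the translated matrix by $\F\bigl(\bigvee_{a} a_\pi\oplus a_{\tau_0}\bigr)\Rightarrow\cdots$ so that the new trace $\tau_0$ itself is exempt from the original obligations when the universal variable ranges over it. In the converse direction it does not re-time monotonically: picking positions $j_0,\ldots,j_d$ of $\tau_0$ where the markers occur, it replaces every trace $t'$ by the projection of $t'(j_0)\cdots t'(j_d)\,t'$, i.e.\ it \emph{prepends} copies of those positions, uniformly across all traces, possibly out of order and with repetition. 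Since the matrix is a Boolean combination of $\F\beta$ and $\G\beta$ with propositional $\beta$, its truth depends only on the set of valuation tuples occurring along the assigned traces, which this operation preserves; and position $k$ of the new traces is a marked position, so $\G(m^k_{\tau_0}\Rightarrow\beta)$ yields $\X^k\beta$. This is exactly the flexibility your monotone-stuttering argument forgoes, and it is what lets the paper dispense with the cross-trace clock structure whose enforcement you correctly identify as the obstacle but do not overcome.
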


\begin{proof}
Let $\phi = \exists \tau_1\ldots\exists \tau_n. \forall \pi . \exists \tau_{n+1}\ldots\exists \tau_{n+n'}. \psi$ be a \hyltlfgxone sentence. We assume w.l.o.g., that $\psi$ is a Boolean combination of formulas of the form $\F \beta$ and $\X^k \beta$ where $\beta$ is a Boolean combination of atomic propositions. Let $d$ be the maximal integer such that $\psi$ contains a subformula of the form $\X^d \beta$ (which is $0$, if there is no $\X$ in $\psi$).

We extend the set of atomic propositions with fresh ones $m^0,\ldots,m^d$ and define
\[\phi' = \exists \tau_0. \exists \tau_1\ldots\exists \tau_n. \forall \pi . \exists \tau_{n+1}\ldots\exists \tau_{n+n'}. \psi'\] (i.e., we just add one existentially quantified variable~$\tau_0$) with 
\[\psi' = \bigwedge_{k=0}^d \Big(\F m^k_{\tau_0}\Big) \land \Big(\F\big(\bigvee_{a \in \ap \cup \set{m^0,\ldots,m^d}} a_\pi \oplus a_{\tau_0}\big) \Rightarrow \psi''\Big)\] where  $\psi''$ is $\psi$ where every maximal subformula of the form $\X^k \beta$ has been replaced with $\G({m^k}_{\tau_0} \Rightarrow \beta )$. Intuitively, we state the existence of a special trace on which the atomic propositions $m^0,\ldots,m^d$ mark some positions which will simulate the positions from $0$ up to $d$ in any order with any multiplicity. As we are dealing with satisfiability, and as the order of the valuations at these positions is irrelevant for the satisfaction of a \hyltlfgone formula, we obtain an equisatisfiable formula. 

If $\phi$ has a model $T$, then let $T' = T \cup \set{t_0}$, with $t_0 = \set{m^0}\ldots\set{m^d}\emptyset^\omega$. Here every $m^k$ appears once, at position $k$, thus, $\X^k \beta$ is satisfied by some assignment over $T$ if and only $\G (m^k \Rightarrow \beta)$ is satisfied by the same assignment over $T'$. Then, one can show by a simple induction over the construction of $\psi'$ that for all $t_1,\ldots,t_{n+n'},t$, $(T,[(\tau_i \rightarrow t_i)_{1\leq i \leq n+n'}, \pi \rightarrow t])$ satisfies $\psi$ if and only if $(T',[(\tau_i \rightarrow t_i)_{0\leq i \leq n+n'}, \pi \rightarrow t])$ satisfies $\psi'$. Since $\phi$ and $\phi'$ have the same quantifier prefix (but for the quantification of $\tau_0$), and as $T$ satisfies $\phi$, $T'$ satisfies $\phi'$.

Conversely, if $\phi'$ has a model $T'$, then let $t'_0 \in T'$ be such that $(T',[\tau_0 \rightarrow t'_0])$ satisfies \[\exists \tau_1\ldots\exists \tau_m. \forall \pi . \exists \tau_{n+1}\ldots\exists \tau_{n+n'}. \psi'.\] In particular $(T',[\tau_0 \rightarrow t'_0])$ satisfies $\bigwedge_{k=0}^d \F m^k_{\tau_0}$. Thus, there exist $j_0,\ldots,j_k \in \nats$ such that $m^k \in t'_0(j_k)$ for all $k$. For all $t' \in T'\setminus \set{t'_0}$ let $t$ be the projection to $\ap$ of $t'(j_0)\cdots t'(j_d)t'$, and let $T = \set{t \mid t'\in T'\setminus \set{t'_0}}$. Thus, if a trace assignment over $T'$ satisfies $\G (m^k_{\tau_0} \Rightarrow \beta )$ then the associated trace assignment over $T$ satisfies $\X^k\beta$. Furthermore, if a trace assignment over $T'$ satisfies $\F \beta$ (respectively $\G \beta$) then so does the associated trace assignment over $T$ (we have not modified the set of combinations of valuations that eventually appear, just their order). Hence, as $\psi$ can be seen as a positive Boolean combination of such formulas, an induction over $\phi$ shows that, as $\phi'$ is satisfiable, so is $\phi$.
\qed
\end{proof}

Now, we are ready to prove our main decidability result in this section. Note that we do not claim a matching lower bound here. 
We comment on this gap in the conclusion.

\begin{theorem}\label{sat1forall}
The following problem is in \ntwoexpt: 
Given a \hyltlfgxone sentence~$\phi$ of the form~$\exists^* \forall \exists^*$, is $\phi$ satisfiable?
\end{theorem}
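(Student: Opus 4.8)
The plan is to reduce to temporal depth one via Lemma~\ref{xelim} and then exploit that a depth-one matrix "sees'' very little of a trace. By Lemma~\ref{xelim} we may assume $\phi = \exists\tau_1\cdots\exists\tau_n.\,\forall\pi.\,\exists\tau_{n+1}\cdots\exists\tau_{n+n'}.\,\psi$ is a \hyltlfgone sentence, so that after pushing negations inward $\psi$ is a Boolean combination of formulas $\F\beta$, $\G\beta$ and of propositional $\beta$, where each $\beta$ is a Boolean combination of atoms $a_\rho$. The key observation is that, for a trace assignment $\Pi$ with domain $\{\rho_1,\dots,\rho_m\}$, whether $\Pi\models\psi$ depends only on the pair $\bigl(v_0(\Pi),\mathrm{occ}(\Pi)\bigr)$, where $v_0(\Pi)=(\Pi(\rho_1)(0),\dots,\Pi(\rho_m)(0))$ and $\mathrm{occ}(\Pi)=\{(\Pi(\rho_1)(j),\dots,\Pi(\rho_m)(j)) : j\in\nats\}\subseteq(\pow{\ap})^m$: a subformula $\F\beta$ holds iff some tuple in $\mathrm{occ}(\Pi)$ satisfies $\beta$, $\G\beta$ iff all of them do, and a propositional $\beta$ iff $v_0(\Pi)$ does. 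I call this pair the \emph{profile} of $\Pi$; note that $\mathrm{occ}(\Pi)$ has at most $2^{|\ap|m}\le 2^{\size{\phi}^2}$ elements.

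Next I would prove a small-model property: if $\phi$ is satisfiable then it has a model $T$ in which every trace is ultimately periodic of the form $xy^\omega$ with $\size{x}+\size{y}$ bounded exponentially in $\size{\phi}$ (so $T$ has doubly-exponentially many traces). One starts from an arbitrary model $T_0\models\phi$, which may be taken countable~\cite{DBLP:conf/stacs/Finkbeiner017}, with leading witnesses $s_1,\dots,s_n\in T_0$ and a Skolem function $f$ assigning to each $t\in T_0$ inner witnesses $f(t)\in T_0^{n'}$ with $[\tau_1\to s_1,\dots,\tau_n\to s_n,\pi\to t,(\tau_{n+i}\to f(t)_i)_i]\models\psi$. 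To every trace one assigns a finite \emph{type} recording enough of its occurrence behaviour, jointly with $s_1,\dots,s_n$, to determine both whether it can serve as a value of $\pi$ (i.e.\ whether suitable inner witnesses of prescribed types exist) and whether it can serve as an inner witness; since a type is a subset of a set of size $2^{O(\size{\phi}^2)}$, there are only doubly-exponentially many of them. Folding $T_0$ onto one representative per realised type gives a finite model, and one then replaces each representative by a short ultimately periodic trace realising the same type, choosing the periodic parts coherently so that the finitely many resulting traces exhibit exactly the joint occurrence sets predicted by their types.

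Given the small-model property, a nondeterministic algorithm guesses the types of the leading witnesses $s_1,\dots,s_n$ and then reasons entirely at the level of the finite types above, for which the joint profile of any $m$-tuple is determined: by a fixed-point iteration starting from the set of all types it computes the largest set $\Theta$ such that every $\theta\in\Theta$ admits inner-witness types in $\Theta$ for which $\psi$ comes out true on the induced joint profile, and it accepts iff the guessed leading types lie in $\Theta$. A type is described with $2^{O(\size{\phi}^2)}$ bits, there are doubly-exponentially many of them, evaluating $\psi$ on a joint profile takes exponential time (iterate over the joint occurrence set), and the iteration stabilises after doubly-exponentially many rounds; so the whole procedure runs in nondeterministic doubly-exponential time. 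Soundness uses that in a normalised model the joint profiles are indeed those predicted by the types; completeness reads off an explicit normalised model realising the types in $\Theta$.

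The main obstacle is the second step, and within it the design of the notion of \emph{type}. The profile of a single trace does not suffice, because whether $\psi$ holds depends on how a trace co-occurs with $s_1,\dots,s_n$ and with the other inner witnesses, and the joint occurrence set of three or more traces is not determined by their pairwise co-occurrences; on the other hand the types must remain doubly-exponentially bounded and the normalised traces exponentially short, or the complexity bound is lost. Striking this balance — while the matrix depends monotonically on $\mathrm{occ}$ through $\F$ but anti-monotonically through $\G$, so one cannot simply make all joint occurrence sets maximal — is where the real work lies, and it is also why only an upper bound is obtained here.
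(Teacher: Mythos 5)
You start exactly as the paper does (Lemma~\ref{xelim}, then the observation that satisfaction of a depth-one $\F/\G$ matrix by a tuple of traces depends only on the set of valuation tuples occurring along it), but from there you take a different route -- a small-model property with short ultimately periodic traces plus a fixed-point computation over per-trace ``types'' -- and you yourself flag that the central ingredient of that route, the definition of the type, is unresolved: you note that the joint occurrence set of three or more traces is not determined by pairwise co-occurrences, and you leave ``striking this balance'' as the place ``where the real work lies.'' That is a genuine gap, not a presentational one. Both your small-model folding step (choosing the periodic parts ``coherently so that the finitely many resulting traces exhibit exactly the joint occurrence sets predicted by their types'') and your fixed-point step (``$\psi$ comes out true on the induced joint profile'') presuppose precisely the property you admit you cannot guarantee: that a doubly-exponentially bounded per-trace abstraction determines the joint profile of a full $(n+n'+1)$-tuple. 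In a folded model the inner witnesses must be reused among finitely many representatives, so the coherence you need is a global simultaneous condition over all tuples sharing traces, and nothing in your sketch shows it can be met within exponential trace length; as written, neither the small-model lemma nor the acceptance condition of your algorithm is well-defined.

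The paper sidesteps this difficulty rather than solving it: it never assigns types to individual traces and never needs a small or ultimately periodic model. Instead it guesses directly a set $S$ of \emph{full} $(n+n'+1)$-tuple occurrence sets -- one element of $S$ per value of the universal variable together with the fixed leading witnesses and its Skolem inner witnesses -- and imposes two structural requirements: all elements of $S$ project identically onto the first $n$ coordinates, and for every $V \in S$ and every coordinate $i$, the projection of $V$ onto coordinates $(1,\dots,n,i)$ reappears as the $(1,\dots,n+1)$-projection of some $V' \in S$ (so every trace occurring anywhere can itself serve as a value of $\pi$). Completeness is then shown by building a countable, generally infinite, model in stages: the traces are constructed so that every relevant valuation tuple occurs infinitely often, which leaves enough room to attach \emph{fresh} inner witnesses to any already-built trace realising exactly the guessed joint occurrence set -- so the question of whether higher-arity joint profiles are determined by lower-arity data never arises. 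If you want to salvage your route, the fix is to make the guessed object the set of full-tuple occurrence sets (as in the paper) rather than per-trace types, and to drop the insistence on short ultimately periodic traces, which the $\ntwoexpt$ bound does not require.
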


\begin{proof}
Let $\phi = \exists \tau_1\ldots \tau_n. \forall \pi. \exists \tau_{n+1}\ldots\exists \tau_{n+n'}. \psi$ be a \hyltlfgxone sentence with quantifier-free~$\psi$. Due to Lemma \ref{xelim}, it is enough to consider the case where $\psi$ is a Boolean combination of formulas of the form $\F \beta$ for a Boolean combination $\beta$ of atomic propositions.

To every tuple~$(t_1,\ldots,t_k)$ of traces~$t_i \in (\pow{\ap})^\omega$, we associate a finite set of tuples of valuations $V(t_1,\ldots,t_k) = \set{(t_1(j),\ldots,t_k(j)) \mid j \in \nats} \subseteq (\pow{\ap})^k $, i.e., the set all the tuples of valuations that appear eventually.
The cardinality of $V(t_1,\ldots,t_k)$ is at most $\pow{k\size{\ap}}$. 

Let $\beta$ be a Boolean combination of atomic propositions over trace variables $\pi_1,\ldots, \pi_k$. Then, a trace assignment $[(\pi_i \rightarrow t_i)_{1 \leq i \leq k}]$ satisfies $\F \beta$ if and only if there exists $j \in \nats$ such that $\beta$ is satisfied at position $j$ of $(t_1,\ldots,t_k)$, i.e., there exists $(v_1,\ldots,v_k) \in V(t_1,\ldots,t_k)$ such that $(v_1,\ldots,v_k)$ satisfies $\beta$ (in the sense that any trace assignment $\Pi$ such that $\Pi(\pi_i)(0) = v_i$ for all $i$ satisfies $\beta$). Intuitively, we abstract a tuple of traces into a finite set of tuples of valuations, and then abstract a model as a set of such finite representations. Then, we show that satisfiability can be decided using such abstractions. 

As a consequence, whether a given trace assignment $[(\pi_i \rightarrow t_i)_{1 \le i \le k}]$ satisfies a given Boolean combination $\psi$ of formulas~$\F \beta$ only depends on $V(t_1,\ldots,t_k)$, and given $V \subseteq (\pow{\ap})^k$, one can check in polynomial time whether a trace assignment yielding $V$ satisfies $\psi$. If it is the case, we say that $V$ satisfies $\psi$.

To check the satisfiability of $\phi$, we start by nondeterministically guessing a set $S \subseteq \pow{(\pow{\ap})^{n+n'+1}}$ of sets of $(n+n'+1)$-tuples of valuations. This set is supposed to represent a model of $\phi$. The $n$ first valuations represent the fixed values assigned to $\tau_1,\ldots,\tau_n$. The $(n+1)$-th represents the valuation of the universally quantified variable. Thus, for every trace of the model there must exist a tuple in which that trace is represented at position $n+1$. The valuations of positions $n+2$ to $n+n'$ have to be such that $\phi$ is satisfied by all tuples.

Thus, we check the following requirements:
\begin{enumerate}

    \item\label{sat1forallconsistent} For all $V_1,V_2 \in S$, $\set{(v_1,\ldots,v_n) \mid (v_1,\ldots,v_{n+n'+1}) \in V_1} $ is equal to $ \set{(v_1,\ldots,v_n) \mid (v_1,\ldots,v_{n+n'+1}) \in V_2}$: The set of values taken by the traces assigned to $\tau_1,\ldots,\tau_n$ cannot depend on the values of the other variables. Thus, we ensure that these values are fixed in the guessed model.
    
    \item\label{sat1forallalltraces} For all $V \in S$ and $1 \leq i \leq n+n'+1$, there exists $V' \in S$ such that $\set{(v_1,\ldots,v_n,v_i) \mid (v_1,\ldots,v_{n+n'+1}) \in V} = \set{(v_1,\ldots,v_{n+1}) \mid (v_1,\ldots,v_{n+n'+1}) \in V'}$. All the values taken by the existentially quantified variables have to be taken by the universally quantified one as well. 

   \item\label{sat1forallsatisfy} For all $V \in S$, $V$ satisfies $\psi$.

\end{enumerate}

If all requirements are satisfied, we accept, otherwise we reject. This procedure requires nondeterministic doubly-exponential time as $\size{S} \leq \pow{\pow{\size{\ap}+n+n'+1}}$.

Suppose $\phi$ is satisfiable and fix a model~$T$. There exist $t_1,\ldots,t_n \in T$ such that $(T,[(\tau_i \rightarrow t_i)_{1 \leq i \leq n}]) \models \forall \pi \exists \tau_{n+1}\ldots\exists \tau_{n+n'}. \psi$. Furthermore, for a fixed $t \in T$ there exist $t_{n+1},\ldots,t_{n+n'} \in T$ such that $(T,[(\tau_i \rightarrow t_i)_{1\leq i \leq n+n'},\pi \rightarrow t]) \models \psi$. Let $V^*(t) = \set{(t_1(j),\ldots, t_n(j),t(j),t_{n+1}(j),\ldots,t_{n+n'}(j)) \mid j \in \nats}$. 

Now, one can easily check that Requirements \ref{sat1forallconsistent}, \ref{sat1forallalltraces}, and \ref{sat1forallsatisfy} are satisfied by  $ \set{V^*(t) \mid t \in T}$. Thus, the algorithm accepts $\phi$.

Conversely, suppose the algorithm accepts $\phi$.
Then, there exists some $S$ satisfying all three requirements above. We construct from $S$ a model $T$ of $\phi$.

Let $t_1,\ldots,t_n$ be traces such that for all $V \in S$, $\set{(v_1,\ldots,v_n) \mid (v_1,\ldots,v_{n+n'+1}) \in V} = V(t_1,\ldots,t_n)$, and for all $(v_1,\ldots,v_{n+n'+1}) \in V$, $(v_1,\ldots,v_n) = (t_1(j), \ldots, t_n(j))$ for infinitely many $j$, i.e., each of the valuations appears infinitely often in the traces. Those traces can be constructed due to Requirement~\ref{sat1forallconsistent}.

Let $T_0 = \set{t_1,\ldots,t_n}$. For all $\ell \in \nats$ we construct $T_\ell$ by induction on $\ell \in\nats$, while maintaining the following two invariants:
\begin{enumerate}
    \item \label{sat1forallaccurate} For all $t \in T_\ell$ there exists $V \in S$ such that $V(t_1,\ldots,t_n,t) $ is equal to $ \set{(v_1,\ldots,v_{n+1}) \mid (v_1,\ldots,v_{n+n'+1}) \in V}$, and for all $(v_1,\ldots,v_{n+n'+1}) \in V$, $(v_1,\ldots,v_{n+1}) $ is equal to $ (t_1(j),\ldots,t_n(j),t(j))$ for infinitely many $j$, where the $t_i$ are the traces in $T_0$.
    
    \item \label{sat1forallimage} If $\ell > 0$ then for every $t \in T_{\ell-1}$, there exist traces $t_{n+1},\ldots,t_{n+n'} \in T_\ell$ such that $[(\tau_i \rightarrow t_i)_{1 \leq i \leq n+n'},\pi \rightarrow t] \models \psi$.
    
\end{enumerate}

By Requirement \ref{sat1forallalltraces} and by construction, $T_0$ satisfies Invariant \ref{sat1forallaccurate}, and it clearly satisfies Invariant \ref{sat1forallimage}.
Let $\ell \in \nats$, suppose $T_\ell$ has been constructed, and that it satisfies Invariants \ref{sat1forallaccurate} and \ref{sat1forallimage}. By Invariant \ref{sat1forallaccurate}, for all $t \in T_{\ell}$ we can construct traces $t_{n+1},\ldots,t_{n+n'}$ such that $V(t_1,\ldots,t_{n},t, t_{n+1}, \ldots, t_{n+n'}) \in S$ and for all $(v_1,\ldots,v_{n},v, v_{n+1}, \ldots, v_{n+n'}) \in V(t_1,\ldots,t_{n},t, t_{n+1}, \ldots, t_{n+n'})$, it is the case that $(v_1,\ldots,v_{n},v, v_{n+1}, \ldots, v_{n+n'}) $ is equal to \newline$ (t_1(j),\ldots,t_{n}(j),t(j), t_{n+1}(j), \ldots, t_{n+n'}(j))$ for infinitely many $j$ (as all the $(v_1,\ldots,v_{n},v)$ appear infinitely many times in $(t_1,\ldots,t_n,t)$ by Invariant~\ref{sat1forallaccurate}). Let $I(t) = \set{t_{n+1},\ldots,t_{n+n'}}$. Let $T_{\ell+1} = \bigcup_{t \in T}I(t)$, which satisfies Invariant \ref{sat1forallaccurate} by Requirement \ref{sat1forallalltraces}. It also satisfies Invariant \ref{sat1forallimage} by definition. Furthermore, by Requirement \ref{sat1forallsatisfy}, $V(t_1,\ldots,t_{n},t, t_{n+1}, \ldots, t_{n+n'})$ satisfies $\psi$.

Finally, let $T = \bigcup_{\ell \in \nats} T_\ell$ and let $t \in T$. Then, there exists an $\ell$ such that $t \in T_\ell$. Thus, there also exist $t_{n+1},\ldots,t_{n+n'} \in T_{\ell+1}$ such that $[(\tau_i \rightarrow t_i)_{1 \leq i \leq n+n'}, \pi \rightarrow t]$ satisfies $\psi$. Therefore, $T$ satisfies $\phi$.
\qed
\end{proof}

Recall that satisfiability of $\exists^*\forall^*$ formulas is $\exps$-complete~\cite{DBLP:conf/concur/FinkbeinerH16}. The proof of Finkbeiner and Hahn can be slightly adapted to produce a formula of temporal depth two: their approach states the existence of a trace representing a sequence of configurations of an exponential-space bounded Turing machine. The only difficulty that can arise in expressing the correctness of the run described by that trace is relating a position of one of the configurations to the neighbouring positions in the next configuration (in order to simulate the movement of the head). One may then require to combine an until and a next in order to express this requirement, in the scope of an always expressing that it holds for every position. This nesting can be removed by adding a fresh proposition~$p$ that is satisfied on all positions of the first configuration, on none of the second one, and so on, i.e., its truth value alternates between the configurations. One can then express the previous requirement with a single until in the scope of an always, yielding temporal depth two.

Our next result shows that one obtains slightly better complexity when restricting the temporal depth of formulas to one. 

\begin{theorem}\label{nexp}
The following problem is \nexpt-complete: Given an $\exists^*\forall^*$ \hyltl sentence~$\phi$ with temporal depth one, is $\phi$ satisfiable?
\end{theorem}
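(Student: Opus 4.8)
Overall the plan is to prove the two bounds separately: the \nexpt\ upper bound by a reduction to \ltl\ satisfiability for formulas of temporal depth one (known to be in \np), and the matching lower bound by a reduction from the Bernays--Sch\"onfinkel--Ramsey fragment of first-order logic.

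\textbf{Upper bound.} Write $\phi = \exists \tau_1 \ldots \tau_n. \forall \pi_1 \ldots \pi_m. \psi$. Since $\TD(\phi)\le 1$, the quantifier-free kernel $\psi$ is (after unfolding $\F$ and $\G$) a Boolean combination of propositional formulas over the atoms $a_\rho$, of formulas $\X\beta$, and of formulas $\beta_1\U\beta_2$ with all $\beta$'s propositional. We may assume $n\ge 1$ (otherwise prepend an unused $\exists\tau_1$, which is harmless since all models are nonempty). Reading each pair $(a,i)\in\ap\times\set{1,\ldots,n}$ as a fresh atomic proposition $p_{a,i}$, I would form the \ltl\ formula
\[
\Psi \;=\; \bigwedge\nolimits_{f\colon \set{1,\ldots,m}\to\set{1,\ldots,n}}\;\psi\text{ with each }a_{\pi_j}\text{ replaced by }p_{a,f(j)} .
\]
Substitution does not change temporal depth, so $\TD(\Psi)\le 1$, and $\size{\Psi}\le n^m\size{\psi}=2^{O(\size{\phi}\log\size{\phi})}$. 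The claim is that $\phi$ and $\Psi$ are equisatisfiable: from a model $T\models\phi$ with witnesses $t_1,\ldots,t_n$ one builds the single trace $\sigma$ over $\set{p_{a,i}}$ with $p_{a,i}\in\sigma(x)\Leftrightarrow a\in t_i(x)$, and since every trace assignment $[\pi_j\mapsto t_{f(j)}]$ (mapping into $\set{t_1,\ldots,t_n}\subseteq T$) satisfies $\psi$, an easy induction shows $\sigma\models\Psi$; conversely a model $\sigma$ of $\Psi$ yields traces $t_i$ via $a\in t_i(x)\Leftrightarrow p_{a,i}\in\sigma(x)$, and $\set{t_1,\ldots,t_n}\models\phi$ because any $m$-tuple of these traces is $[\pi_j\mapsto t_{f(j)}]$ for some $f$, hence satisfies $\psi$. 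As \ltl\ satisfiability restricted to temporal depth one is in \np~\cite{DBLP:journals/iandc/DemriS02}, satisfiability of $\Psi$, and therefore of $\phi$, is decidable in nondeterministic time $2^{O(\size{\phi}\log\size{\phi})}$, i.e., in \nexpt.

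\textbf{Lower bound.} I would reduce from satisfiability of sentences $\Phi=\exists x_1\ldots x_k\forall y_1\ldots y_l.\,\chi$ with quantifier-free $\chi$ over relation symbols $R_1,\ldots,R_s$ (arities $a_1,\ldots,a_s$) and equality, which is \nexpt-complete. Such a $\Phi$ is satisfiable iff it has a model whose domain consists of the witnesses for $x_1,\ldots,x_k$ (restrict to that substructure; quantifier-free truth is preserved), hence a model of size at most $k$; fix $d=\lceil\log_2(k+1)\rceil$ so domain elements can be named by bit strings in $\set{1,\ldots,k}$, reserving $0^d$ as ``invalid''. The \hyltl\ sentence $\phi_\Phi$ uses one trace variable $\pi^x_i$ per variable $x_i$, one variable $\rho_h$ per relation $R_h$, and one variable $\pi^y_j$ per variable $y_j$, with prefix $\exists\pi^x_1\ldots\pi^x_k.\exists\rho_1\ldots\rho_s.\forall\pi^y_1\ldots\pi^y_l$. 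The atomic propositions are: a one-hot ``type'' family with values $\mathrm{elt},R_1,\ldots,R_s$ (read at position $0$); $d$ ``name'' propositions $\mathrm{id}_1,\ldots,\mathrm{id}_d$; and, for each $h$, a block of $a_h\cdot d$ ``slot'' propositions encoding one $a_h$-tuple over $\set{0,1}^d$. The intended model contains one \emph{constant} trace per domain element (type $\mathrm{elt}$, carrying its name) and, for each $h$, one ``relation trace'' $\rho_h$ of type $R_h$ whose \emph{set of slot-valuations visited} encodes exactly the interpretation of $R_h$.

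Then $\phi_\Phi$ is the conjunction of: universal well-formedness requirements (each trace has exactly one type; every trace of type $\mathrm{elt}$ is constant and carries a name in $\set{1,\ldots,k}$ --- expressible by $\G$ over propositional formulas, so temporal depth one); the requirements $\bigwedge_i(\text{type}(\pi^x_i)=\mathrm{elt})$ and $\bigwedge_h(\text{type}(\rho_h)=R_h)$; and the translation $\widehat\chi$ of $\chi$ guarded by $\bigwedge_j(\text{type}(\pi^y_j)=\mathrm{elt})$, where $\widehat\chi$ replaces each atom $z=z'$ by ``$z$ and $z'$ agree on all $\mathrm{id}$-propositions'' (propositional) and each atom $R_h(z_1,\ldots,z_{a_h})$ by $\F\bigl(\bigwedge_{b=1}^{a_h}\text{``slot $b$ of $\rho_h$ equals the $\mathrm{id}$-propositions of $z_b$''}\bigr)$ --- which is sound precisely because the $z_b$ are mapped to constant traces, so their names do not change along the trace. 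All conjuncts have temporal depth one, and prenexing gives an $\exists^*\forall^*$ sentence of polynomial size. Correctness is then routine in both directions (a first-order model of size $\le k$ gives such traces; conversely a model $T\models\phi_\Phi$ gives a first-order model whose domain is the set of names realized by type-$\mathrm{elt}$ traces, with $R_h$ read off $V(\rho_h)$, the guard making $\forall\pi^y_j$ mirror $\forall y_j$). The main obstacle --- and what dictates this particular encoding --- is that the $\exists^*\forall^*$ prefix forbids ``for all $\dots$ there is $\dots$'', and at temporal depth one the HyperLTL operators, which shift all traces simultaneously, cannot relate two \emph{distinct} positions of a single trace; hence all the exponential-size combinatorial data must be stored in the eventually-visited valuation sets of a constant number of existential traces and probed only through same-position comparisons with the per-element constant traces, while the genuinely exponential branching of the verification sits entirely in the $l$ universal quantifiers.
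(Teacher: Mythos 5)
Your proof is correct, and while your upper bound coincides with the paper's (replace the universal quantifiers by a conjunction over all maps into the existential witnesses, merge the witness traces into a single trace over product propositions, and invoke Demri--Schnoebelen's \np bound for \ltl of temporal depth one -- exactly the Finkbeiner--Hahn route the paper follows), your lower bound takes a genuinely different path. The paper reduces directly from acceptance of a nondeterministic Turing machine in time $2^n$: it encodes the exponential computation tableau on a single existentially quantified trace, addressed by binary time/space counters, and uses the $4n$ universal variables only to range over two constant traces $\set{0}^\omega$ and $\set{1}^\omega$ that select addresses. You instead reduce from the Bernays--Sch\"onfinkel--Ramsey fragment, exploiting its small-model property: domain elements become constant traces carrying logarithmic-size names, each relation becomes a single existential trace whose set of eventually visited slot-valuations is its interpretation, and the exponential work is pushed entirely into the universal block mirroring $\forall y_1\ldots y_l$. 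Your encoding is more modular (hardness is inherited from a classical \nexpt-complete problem rather than re-proved from machines) and it isolates nicely why temporal depth one suffices -- all data is probed by same-position comparisons under a single $\F$ against constant traces -- whereas the paper's construction is self-contained and stays closer to the Finkbeiner--Hahn \exps-hardness proof it adapts. A few loose ends you should tidy, none of them gaps: handle the degenerate cases $k=0$ and $l=0$ by adding dummy variables; note that constancy of the name bits is expressible at depth one simply as $\bigwedge_i (\G\, \mathrm{id}_{i,\pi} \vee \G \neg \mathrm{id}_{i,\pi})$; in the upper bound the atoms $a_{\tau_i}$ must of course also be rewritten to $p_{a,i}$; and in the forward direction of the reduction the relation trace must visit \emph{exactly} the encodings of tuples in the interpretation (padding with the reserved invalid name), so that negated relational atoms are handled soundly.
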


\begin{proof}
This proof is an adaptation of the proof by Finkbeiner and Hahn showing that the problem is \exps-complete for sentences of arbitrary temporal depth~\cite{DBLP:conf/concur/FinkbeinerH16}. For the upper bound, we use the transformation of Finkbeiner and Hahn to turn the initial sentence into an equisatisfiable exponentially larger one of the form $\exists^* \psi'$ without increasing the temporal depth, by replacing universal quantifiers with conjunctions.
Then, we turn that sentence into an equisatisfiable \ltl formula of the same size and the same temporal depth by merging traces, again due to Finkbeiner and Hahn. Finally, we apply a result of Demri and Schnoebelen stating that satisfiability of \ltl formulas of temporal depth one is in \np~\cite{DBLP:journals/iandc/DemriS02}. Altogether, we obtain the desired algorithm.

For the lower bound, we again adapt ideas of Finkbeiner and Hahn. We reduce from the following \nexpt-complete problem: Given a nondeterministic Turing machine $\M$ and an integer $n$ (in unary), does $\M$ accept the empty word in time at most $2^n$? Given such a machine $\M = (Q,\Sigma,\Delta,q_0,q_f)$, we construct a \hyltlfgone formula $\phi$ of the form $\exists^* \forall^* \psi$ such that $\phi$ is satisfiable if and only if $\M$ accepts the empty word in less than $2^n$ steps. We can assume that there is a loop on the final state, i.e., the machine accepts in at most $2^n$ steps if and only if it accepts in exactly $2^n$ steps. We also assume $\Sigma$ to contain a blank letter $B$ meaning that nothing has been written at that position yet. In particular, the initial tape contains only $B$.

In our encoding of a run of $\M$, the valuation at some position encodes the content of one cell at one moment in time.
Thus, we use the set~$Q \cup \Sigma \cup \set{\spaceprop_1,\ldots,\spaceprop_n} \cup \set{\timeprop_1,\ldots,\timeprop_n} \cup \set{h} \cup \set{0,1}$ of atomic propositions, where $h$, $\spaceprop_i$, and $\timeprop_i$ are fresh atomic propositions representing respectively whether the head is at the encoded cell, the position on the tape and time (the two latter in binary). For all $0 \leq j < 2^n$ let $time(j)$ and $space(j)$ be the valuations over the $\timeprop_i$ and $\spaceprop_i$ representing $j$.

To encode the existence of an accepting run of $\M$, we require the existence of a trace $t$ encoding such a run in the following sense:
\begin{itemize}

    \item Every valuation of $\timeprop_1,\ldots,\timeprop_k,\spaceprop_1,\ldots,\spaceprop_k$ appears eventually on $t$.
    
    \item Two positions with the same valuation of the $\timeprop_i$ and $\spaceprop_i$ also have the same valuation of the other atomic propositions.
    
    \item For all $j \in \nats$, $t(j)$ contains exactly one letter from $\Sigma$ and one state from $Q$.
    
    \item There exists a run of $\M$ such that for all $j_t, j_s$, if at the $j_t$-th step the machine is in state~$q$ and has an $a$ at position $j_s$, then for all $m \in \nats$, if $t(m) \cap \set{\timeprop_i \mid 1 \leq i \leq n} = time(j_t)$ and $t(m) \cap \set{\spaceprop_i \mid 1 \leq i \leq n} = space(j_s)$ then $a, q \in t(m)$ and $h \in t(m)$ if and only if the head is at position $j_s$ at the $j_t$-th step of the run, i.e., every trace faithfully represents the content of the cell at the step of the run and the position on the tape it is assigned to.
\end{itemize}

Now, we define $\phi$ as 
\begin{multline*}
\exists \pi. \exists \pi_0. \exists \pi_1 . \forall \tau_1 \ldots \forall \tau_{n}. \forall \tau'_1 \ldots \forall \tau'_{n}. \forall \sigma_1 \ldots \forall \sigma_{n}. \forall \sigma'_1 \ldots \sigma'_{n}.\\
\psi_{0/1} \land [\psi_{assign} \Rightarrow (\psi_{halt} \land \psi_{init} \land \psi_{cons} \land \psi_{complete} \land \bigvee_{(q,q',a,d) \in \Delta} \psi_{trans}(q,q',a,a',d))].
\end{multline*}
We explain the role of $\phi$'s subformulas below, but leave the construction of the formulas to the reader in order to improve readability.
We just remark that all can be constructed with temporal depth one.

\begin{itemize}
    \item $\psi_{0/1}$ expresses that $\pi_0$ takes value $\set{0}^\omega$ and that $\pi_1$ takes value $\set{1}^\omega$.

    \item $\psi_{assign}$ ensures that all the traces assigned to the~$\tau_i,\tau'_i,\sigma_i, \sigma'_i$ are equal to either the trace assigned to $\pi_0$ or the one assigned to $\pi_1$. The values of those traces represent two valuations of the $\timeprop_i$ ($\tau_i$ and $\tau'_i$) and two of the $\spaceprop_i$ ($\sigma_i$ and $\sigma'_i$). 
    
    \item $\psi_{halt}$ ensures that there exists a position in the trace assigned to $\pi$ containing $q_f$.
    
    \item $\psi_{init}$ ensures that all the positions having the $\timeprop_i$ representing $0$ contain a $B$ and a $q_0$, and that positions having the $\timeprop_i$ and $\spaceprop_i$ representing $0$ contain an $h$, i.e., the initial configuration is encoded correctly.
    
    \item $\psi_{cons}$ checks that all positions of the trace assigned to $\pi$ in which the valuation of the time and space propositions is the one represented by the traces assigned to the $\tau_i$ and $\sigma_i$ have the same valuation over the other variables. It also checks that there is exactly one letter and one state at each position of the trace assigned to $\pi$, and that if two positions in the trace assigned to $\pi$ have the same time valuations (represented by the trace assigned to $\tau_i$) but not the same space valuations, then at most one of the two satisfies the head proposition $h$, and they have the same state.
    
    \item $\psi_{complete}$ expresses that there exists a position in $\pi$ matching the $\timeprop_i$ and $\spaceprop_i$ with the $\tau_i$ and $\sigma_i$, i.e., that the complete computation tableau is encoded.
    
    \item $\psi_{trans}(q,q',a,a',d)$ checks that the position of the head, the state and the change in the letters are consistent with that transition. For instance, if the $\sigma_i$ and $\sigma_i'$ represent the same valuation and the $\tau'_i$ represent the successor of the $\tau_i$ and the positions in $\pi$ matching the $\tau_i$ and $\sigma_i$ contain $h$, then $\psi_{trans}$ ensures that the positions matching $\tau_i$ and $\sigma_i$ contain $a$ and $q$ and the ones matching $\tau'_i$ and $\sigma'_i$ contain $a'$ and $q'$.
\end{itemize}

One can then see that this formula is satisfiable if and only if $\M$ accepts in time at most~$2^n$, as it explicitly describes the existence of an accepting run of $\M$.
\qed
\end{proof}

We conclude by considering the satisfiability problem for \hyltlfgone with arbitrary quantifier prefixes, but restricted to models induced by finite-state systems. The undecidability of satisfiability for arbitrary formulas over finite-state systems can be easily inferred from the proof of undecidability of satisfiability of Finkbeiner and Hahn, as the formulas they construct, if satisfiable, have a finite and ultimately periodic model, which is therefore representable by a finite-state system.
For formulas of \hyltlfgone, we leave decidability open, but prove intractability.

\begin{theorem}\label{SatFX1K}
The following problem is $\tower$-hard: Given a \hyltlfgone sentence~$\phi$, does $\phi$ have a model~$T(\K)$ for some Kripke structure~$\K$?
\end{theorem}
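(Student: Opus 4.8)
The plan is to reduce from the universality problem for star-free regular expressions with complementation over $\set{a,b}$, the same $\tower$-complete problem used in the proof of Theorem~\ref{kstates}. Given such an expression $e$, I will build a \hyltlfgone sentence $\phi_e$ that has a model of the form $T(\K)$ for some Kripke structure $\K$ if and only if $e$ is universal; the construction is polynomial (elementary suffices for $\tower$-hardness).

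The starting point is that the sentence $\phi_e^{(0)} = \forall \pi.\, \G \neg r_\pi \lor \F \#_\pi \lor \psi_{e,\pi}$ built in the proof of Theorem~\ref{kstates}, together with the inductively defined $\psi_{e,\pi}$, is already a \hyltlfgone sentence: every temporal operator occurring in it is an $\F$ or a $\G$ applied to a purely propositional formula, no temporal operator lies in the scope of another, and hence its prenex normal form has the required shape. What made $\phi_e^{(0)}$ correct there was that it was evaluated over the \emph{fixed} Kripke structure, whose trace set is exactly $l^\omega + l^*(a+b)^\omega + l^*(a+b)^*r^\omega + l^*\#r^\omega$; in particular that model contains all ``word-traces'' $l^n w r^\omega$, all ``templates'' $l^n\#r^\omega$, and nothing that could mislead the auxiliary quantifiers inside $\psi_{e,\pi}$. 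Since the Kripke structure is now existentially quantified, I set $\phi_e = \phi_{\mathrm{struct}} \land \phi_{\mathrm{gen}} \land \phi_e^{(0)}$, where two additional \hyltlfgone sentences take over the role of the fixed structure. The sentence $\phi_{\mathrm{struct}}$ forces each trace to carry exactly one atomic proposition per position (a single $\G$ over a propositional formula) and forces the traces using only $l$ and $r$ (the ``rulers'') to form a $\subseteq$-chain on their $l$-sets, anchored at $r^\omega$ and closed under the immediate-successor operation ``add one more $l$''; these order and rank-successor properties are written exactly as the formulas $\pi \leq_1 \pi'$, $\pi <_1 \pi'$ and $\forall \pi''.\,\neg(\pi <_1 \pi'' <_1 \pi')$ of the proof of Theorem~\ref{SatFX1}, which use only $\F$ and $\G$. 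Then word-traces, templates and the concatenation relation are defined \emph{relative to rulers} rather than to absolute positions, e.g.\ ``$\pi$ is a word-trace over the rulers $\sigma \subseteq \sigma'$'' becomes $\G(l_\pi \Leftrightarrow l_\sigma) \land \G(r_\pi \Leftrightarrow \neg l_{\sigma'}) \land \G((l_{\sigma'} \land \neg l_\sigma) \Leftrightarrow (a_\pi \lor b_\pi)) \land \G\neg\#_\pi$; and $\phi_{\mathrm{gen}}$ uses the seed-plus-successor pattern together with the concatenation formula already appearing inside $\psi_{e_1e_2,\pi}$ in Theorem~\ref{kstates} to force the model to contain a ruler, a template and a word-trace for every ``ruler coordinate'' and every $w \in \set{a,b}^*$. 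Both $\phi_{\mathrm{struct}}$ and $\phi_{\mathrm{gen}}$ are prenex Boolean combinations of formulas $\F\beta$, $\G\beta$ with propositional $\beta$, so $\phi_e$ is \hyltlfgone.

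For correctness one argues as follows. If $e$ is universal, the fixed five-state Kripke structure $\K$ of Theorem~\ref{kstates} satisfies $\phi_{\mathrm{struct}}$ and $\phi_{\mathrm{gen}}$ — it contains all rulers, templates and word-traces and its labels are singletons — and satisfies $\phi_e^{(0)}$ because every word-trace encodes a word, which lies in $e$; hence $T(\K)$ is a finite-state model of $\phi_e$. Conversely, if $\phi_e$ has any model $T(\K')$, then $\phi_{\mathrm{gen}}$ places a word-trace for every $w \in \set{a,b}^*$ into $T(\K')$, while $\phi_{\mathrm{struct}}$ guarantees that the auxiliary quantifiers inside $\psi_{e,\pi}$ only ever see traces of the intended ruler-relative shape, so that the analysis of Theorem~\ref{kstates} showing ``$\psi_{e,\pi}$ holds iff the encoded word is in $e$'' still goes through; thus $\phi_e^{(0)}$ forces every $w \in \set{a,b}^*$ into $e$, i.e.\ $e$ is universal.

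I expect the delicate step to be the design of $\phi_{\mathrm{struct}}$, and more precisely guaranteeing that the ruler chain forces exactly the behaviour needed. With only $\F$ and $\G$ over propositional formulas one cannot express, on a single trace, ``there is exactly one occurrence of $\#$'' or ``the $l$'s precede the letters precede the $r$'s'', and the naive rendering of the latter puts a $\G$ inside a $\G$; the proposal therefore never reasons about absolute positions but only about ruler coordinates, exactly as the proof of Theorem~\ref{SatFX1} reasons about ranks rather than about the actual shape of the counter-traces. The technical heart is to check that the ruler-relative notions of word-trace, template and concatenation behave correctly in \emph{every} model of $\phi_{\mathrm{struct}}$ — in particular that between two rulers consecutive in the chain exactly one new coordinate appears, so that templates carry exactly one $\#$ — and that the concrete finite-state model $T(\K)$ of Theorem~\ref{kstates} still satisfies $\phi_{\mathrm{struct}} \land \phi_{\mathrm{gen}}$ when $e$ is universal; everything else is a routine adaptation of the gadgets already present in the proofs of Theorems~\ref{kstates} and \ref{SatFX1}.
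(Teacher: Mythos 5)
Your reduction is genuinely different from the paper's: the paper proves Theorem~\ref{SatFX1K} by a direct reduction from acceptance of a Turing machine in time $\twr(n)$, building a counter up to $\twr(n)$ from ranked type~one traces (iterated power sets, an order and a successor relation defined purely via containment of position sets) and attaching tape cells to pairs of rank-$n$ traces, whereas you try to reuse the expression-universality reduction of Theorem~\ref{kstates} and axiomatize the fixed Kripke structure inside the formula. Unfortunately, the step you yourself flag as ``the technical heart'' is a genuine gap, not a routine verification. Correctness of $\psi_{e,\pi}$ --- in particular of $\psi_{a,\pi}$, $\psi_{b,\pi}$ and $\psi_{e_1e_2,\pi}$, all of which occur under negations because $\psi_{\neg e,\pi}=\neg\psi_{e,\pi}$ --- requires that every trace the inner existential quantifiers can pick up is of the intended shape, e.g.\ that any trace containing $\#$ marks \emph{exactly one} position. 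Over the fixed structure of Theorem~\ref{kstates} this holds by construction; over an existentially quantified Kripke structure it must be imposed by $\phi_{\mathrm{struct}}$, and \hyltlfgone cannot do it: with temporal depth one and only $\F,\G$ you can relate position sets of different traces by containment, equality, disjointness and nonempty intersection, but you cannot assert that such a set is a singleton, and there is no stock of singleton-marked traces to bootstrap from. Your surrogate condition, ``chain-consecutive rulers differ in exactly one coordinate,'' is itself a cardinality constraint and is not implied by ``no ruler lies strictly between them'': an adversarial model simply omits the intermediate rulers, two consecutive rulers then differ in a large block, a ``template'' carrying $\#$ on that whole block fools $\psi_{a,\pi}$ (a two-letter word-trace is judged to match the expression $a$), and through the negation clauses this corrupts the verdict on universality in either direction, so neither soundness nor completeness of the reduction survives.

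The analogy you draw with Theorem~\ref{SatFX1} is precisely where the argument breaks: there only the \emph{rank} (the order type of the $i$-sets) matters, so order-successor is enough no matter how many positions separate two traces; in your encoding the letters of the words live at absolute positions and are compared positionwise against templates and against the decomposition witnesses of $\psi_{e_1e_2,\pi}$, so actual cardinalities matter. The paper's proof is engineered to need only containment and order, never the size of a position set, which is why it goes through. To salvage your route you would have to re-encode letters blockwise (constant letter on the whole difference block of two consecutive rulers), rewrite all of $\psi_{e,\pi}$ block-relatively, redo the negation-closed induction in arbitrary models of $\phi_{\mathrm{struct}}$, and verify that some finite-state model (the five-state $\K$, or another one --- bearing in mind the closure obstruction illustrated at the end of Section~4 for seed-plus-successor axioms) satisfies the new axioms when $e$ is universal; none of this is carried out, so the proposal as it stands does not establish the theorem.
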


\begin{proof}
Let $\M$ be a deterministic Turing machine $(Q,\Sigma, \delta, q_0,q_f)$, let $w \in \Sigma^*$, and let $n = \size{w}$. We construct a sentence~$\phi$ that is satisfiable by a Kripke structure if and only if $\M$ accepts $w$ in time $\twr(n)$, where $\twr(0)=2$ and $\twr(k+1) = \pow{\twr(k)}$ for all $k \in \nats$. This problem can easily be shown \tower-complete under elementary reductions.

To this end, we use the set \[\ap = \Sigma \cup Q \cup \set{\rankprop_0,\ldots,\rankprop_n} \cup \set{h, 0, 1, 2, \inmark, \nimark, \tmark, \smark, \ltmark , \gtmark}\] of propositions.
There are two types of traces: Type one traces, marked by the atomic proposition $1$, allow us to count up to $\twr(n)$, while type two traces simulate an accepting run of $\M$ on $w$. Each type two trace encodes one cell of the tape of $\M$ at one step of the computation.
We use the formula $\psi_{type} = \forall \pi. 1_\pi \oplus 2_\pi$ to ensure that the traces are split between the two types.

Each type one trace is assigned a \emph{rank} between $0$ and $n$, which is indicated by the propositions~$\rankprop_0,\ldots,\rankprop_n$. A trace of rank $r+1$ encodes a set of traces of rank $r$. There are two traces of rank zero, which represent zero and one. Thus, we are able to generate $\twr(r)$ traces of rank $r$ for all $0 \leq r \leq n$. For instance, at rank one we have four traces, representing all possible subsets of the two rank zero traces.

Then, we construct an order on those traces and use traces of rank $n$ to simulate the Turing machine. More precisely, each trace of type two is associated to two traces of type one, one for space and one for time, which encode its position on the tape and the step of the computation it describes.

We require that for all type one traces $t$ there is exactly one $r$ such that $\rankprop_r \in t(0)$, and we then say that $t$ is of rank $r$. 
If $t$ is of rank zero then it is either $t_0 = \set{1,r_0,0}^\omega$ or $t_1 = \set{1,r_0}^\omega$. These properties can easily be expressed in \hyltlfgone.

A type one trace $t$ of rank $r+1$ is said to contain a type one trace $t'$ of rank $r$ (denoted by $t' \in t$) if and only if there exists $j \in \nats$ such that $\inmark \in t'(j)$ and $\nimark \in t(j)$.  To every trace $t$ of rank~$r$ we associate a set $S(t)$, which is an element of the set obtained by iterating the power set operator on $\set{\set{0},\set{1}}$ $r$ times:
\begin{itemize}
    \item $S(t_0) = \set{0}$ and $S(t_1)=\set{1}$.
    
    \item Else, $S(t) =\set{S(t') \mid t' \in t}$.
\end{itemize}
The following formula is satisfied by $[\pi \rightarrow t, \pi' \rightarrow t']$ if and only if $t \in t'$: \[\psi_{in}(\pi,\pi') = 1_\pi \land 1_{\pi'} \land \F( {\nimark}_{\pi'} \land {\inmark}_{\pi} ) \land \bigvee_{r=0}^{n-1} ({\rankprop_r})_{\pi'} \land ({\rankprop_{r+1}})_\pi.\]

Furthermore, we define a strict order $\lord$ over traces of equal rank: $t_0 \lord t_1$ for the only two traces $t_0,t_1$ of rank $0$, and for all traces $t,t'$ of rank $r+1$, $t \lord t'$ if and only if there exists $t''$ of rank $r$ such that $S(t'') \in S(t')\setminus S(t)$ and for all $t^*$ of rank $r$ such that $t'' \lord t^*$, $S(t^*) \in S(t)$ if and only if $S(t^*) \in S(t')$. This is defined by analogy with the comparison of two numbers based on their binary representations (assumed to be of the same size). One  is larger than the other if there is a position at which it has a $1$, the other has a $0$, and they are equal on every bit of higher weight. Using this intuition, one can indeed show that $\lord$ is a strict order. 

To express the order~$\lord$ in \hyltl, we use atomic propositions $\ltmark$ and $\gtmark$ and the following formula to ensure that for all traces $t,t'$ in the model, $t \lord t'$ if and only if there exists $j \in \nats$ such that $\ltmark \in t(j)$ and $\gtmark \in t'(j)$:
\[\forall \pi. \forall \pi'. (1_\pi \land 1_{\pi'}) \Rightarrow [\F(\ltmark_\pi \land \gtmark_{\pi'}) \Leftrightarrow \psi_{0,1} \lor (\psi_{equalrank} \land \psi_{order})],\]
where $\psi_{0,1}$ expresses that the value taken by $\pi$ is $t_0$ and the one taken by $\pi'$ is $t_1$, $\psi_{equalrank}$ expresses that the values of $\pi$ and $\pi'$ have the same rank, and \[\psi_{order}  = \exists \tau . \forall \tau'. \psi_{in}(\tau,\pi') \land \neg \psi_{in}(\tau,\pi) \land \F(\ltmark_\tau \land \gtmark_{\tau'}) \Rightarrow (\psi_{in}(\tau,\pi) \Leftrightarrow \psi_{in}(\tau',\pi')).\]

We also require that two traces $t,t'$ of equal rank but incomparable for $\lord$ (i.e. $S(t) = S(t')$) must be equal. This is expressed by the formula \[\forall \pi. \forall \pi'. (1_\pi \land 1_{\pi'}) \Rightarrow \Big[\F(\ltmark_\pi \land \gtmark_{\pi'}) \lor \F(\gtmark_\pi \land \ltmark_{\pi'}) \lor \G \big(\bigwedge_{a \in \ap} a_\pi \Leftrightarrow a_{\pi'}\big)\Big].\]

We then define a successor relation $\suc$ on traces of equal rank. We have $\suc(t_0,t_1)$ and not $\suc(t_1,t_0)$, $\suc(t_1,t_1)$, or $\suc(t_1,t_1)$. For all traces $t,t'$ of rank $r>0$,  $\suc(t,t')$ expresses that there exists a trace $t''$ of rank $r-1$ such that, $S(t'') \in S(t')\setminus S(t)$ and for all $t^* \lord t''$, $S(t^*) \in S(t)\setminus S(t')$. Intuitively, this is analogous to the definition of the successor in binary: In order to add one to a binary number, one has to find the bit of least weight such that this bit is $0$ and all the bits of lower weights are $1$. Then one has to turn this bit to one and all the lower ones to $0$. The following formula is satisfied by $[\pi \rightarrow t, \pi' \rightarrow t']$ if and only if $\suc(t,t')$ holds: \begin{align*}
&    \psi_{\suc}(\pi,\pi') = \exists \tau . \forall \tau'. (1_\pi \land 1_{\pi'}) \land \Big[1_{\tau'}) \Rightarrow\\ 
&1_{\tau} \land \psi_{in}(\tau,\pi') \land \neg \psi_{in}(\tau,\pi) \land (\F(\ltmark_{\tau'} \land \gtmark_\tau) \Rightarrow \psi_{in}(\tau',\pi) \land \neg \psi_{in}(\tau',\pi'))\Big].
\end{align*}

We then ensure that all possible $S(t)$ are generated: We state the existence of a trace $t_0^r$ for each $1 \leq r \leq n$ such that for all $j \in \nats$, $\nimark \notin t_0^r(j)$. The trace $t_0^r$ represents the empty set at rank $r$. The following formula expresses the existence of those traces: \[\exists \tau_0^1\ldots \exists \tau_0^n .\bigwedge_{r=1}^n 1_{\tau_0^r} \land ({\rankprop_r})_{\tau_0^r} \land \G(\neg (\nimark)_{\tau_0^r}).\]

We also state that for all $r$ and for all traces $t$ of rank $r$, if $t$ is not maximal for $\lord$ (i.e., if it does not contain every trace of rank $r-1$), then there exists a trace $t'$ such that $\suc(t,t')$: 
\[\forall \pi. \exists \tau. 1_\pi \Rightarrow \psi_{\suc}(\pi,\tau) \lor \psi_{max}(\pi)\]
where 
\[\psi_{max}(\pi) = \forall \pi'. [1_{\pi'} \land \bigvee_{r=0}^{n-1} {\rankprop_r}_{\pi'} \land {\rankprop_{r+1}}_{\pi}] \Rightarrow \psi_{in}(\pi',\pi).\] 
As a result, we have ensured that all the $S(t)$ were generated. In particular, we have $\twr(n)$ traces of rank $n$, and we have a successor relation $\suc$. We can use these traces as a counter from $0$ to $\twr(n)-1$.

Finally, for convenience, we ensure that $S$ is injective, i.e., for all traces $t,t'$ of type one, if $S(t)=S(t')$ (i.e., if $t$ and $t'$ have the same rank but are incomparable for $\lord$) then $t=t'$. The following formula expresses it: \[\forall \pi. \forall \pi'. \Big[\big(\bigvee_{r=0}^n (\rankprop_r)_\pi \land (\rankprop_r)_{\pi'} \big) \land \neg \F((\ltmark_{\pi} \land \gtmark_{\pi'}) \lor (\ltmark_{\pi'} \land \gtmark_{\pi}))\Big] \Rightarrow \G\big(\bigwedge_{a \in \ap} a_{\pi} \Leftrightarrow a_{\pi'} \big) \]

 We associate to every pair $(t,t')$ of traces of type one and rank $n$ ($t$ for time and $t'$ for space) a unique type two trace representing the content of the tape at the position represented by $t_s$ at the time represented by $t_t$. We use the atomic propositions $\tmark$ and $\smark$ to encode this relation. The following formulas express it: \[\forall \pi_t. \forall \pi_s. \exists \tau. (1_{\pi_t} \land 1_{\pi_s} \land ({\rankprop_n})_{\pi_t} \land ({\rankprop_n})_{\pi_s}) \Rightarrow \F ((\smark)_{\pi_s} \land (\smark)_{\tau}) \land \F ((\tmark)_{\pi_t} \land (\tmark)_{\tau}) \land 2_\tau\]
expresses that to each pair $(t,t')$ of rank $n$ traces we associate a type two trace as explained above, and
\begin{align*}
    &\forall \pi. \forall \pi'. \forall \tau. \Big[\F((\smark)_{\pi} \land (\smark)_{\tau}) \land \F((\smark)_{\pi'} \land (\smark)_{\tau}) \land \F((\tmark)_{\pi} \land (\tmark)_{\tau}) \land \F((\tmark)_{\pi'} \land (\tmark)_{\tau})\Big]\\
    &\Rightarrow \G\big(\bigwedge_{a \in \ap} a_\pi \Leftrightarrow a_{\pi'}\big)
\end{align*}
expresses that this type two trace is unique.

We can now define a formula $\psi_{same time}(\tau,\tau')$ satisfied in the model by $[\tau \rightarrow t, \tau' \rightarrow t']$ if and only if $t$ and $t'$ are of type two and are associated to the same step of the computation: \[\psi_{same time}(\tau,\tau') = \exists \pi_t. 2_\tau \land 2_{\tau'} \land \F((\tmark)_{\pi_t} \land (\tmark)_{\tau}) \land \F((\tmark)_{\pi_t} \land (\tmark)_{\tau'})\]

We also define a formula $\psi_{next time}(\tau,\tau')$ satisfied in the model by $[\tau \rightarrow t, \tau' \rightarrow t']$ if and only if $t$ and $t'$ are of type two and the time associated to $t'$ is the successor of the time associated to $t$:
\[\psi_{next time}(\tau,\tau') = \exists \pi_t. \exists \pi_t'. 2_\tau \land 2_{\tau'} \land \psi_{next}(\pi_t,\pi'_t) \land \F((\tmark)_{\pi_t} \land (\tmark)_{\tau}) \land \F((\tmark)_{\pi'_t} \land (\tmark)_{\tau'})\]
We define $\psi_{next space}$ similarly.

With these formulas at hand, we can now simulate the run of $\M$: The consistency and the transitions of the machine can easily be implemented using the $\suc$ and $\lord$ relations. For instance, the following formula expresses that if at some step of the run the head does not point to a position, then the letter at that position should not change between this step and the next one:
\[\forall \tau. \forall \tau'. \psi_{next time}(\tau,\tau') \land \psi_{same space}(\tau,\tau') \land \neg h_{\tau} \Rightarrow \bigwedge_{a \in \Sigma} (a_\tau \Rightarrow a_{\tau'}). \] 

If the final formula has a model, then $\M$ accepts $w$ , and then we can construct a model satisfying the formula which is finite (as there are finitely many type one and type two traces) and in which all traces are ultimately periodic (as there are finitely many of them, we only need finitely many positions to encode all the relations). Thus, this model can be represented by a Kripke structure of nonelementary size. 

Conversely, if the formula has a model represented by a Kripke structure, then in particular it is satisfiable. Thus, $\M$ accepts $w$ (one can construct an accepting run of $\M$ from a model of the formula). 

Furthermore, even though the construction described in this proof may yield formulas of alternation depth greater than one, we can reduce the depth to one using Lemma \ref{smallAD}: If the final formula has a model, then it has a finite and ultimately periodic one. The construction of Lemma \ref{smallAD} encodes a finite number of new relations over the traces. Hence, the resulting $\forall^* \exists^*$ formula also has a finite and ultimately periodic model, which is therefore also the set of traces of a Kripke structure.
\qed
\end{proof}

Let us conclude by remarking that the satisfiability problem for \hyltlfgone over Kripke structures is different from the general one, i.e., there are satisfiable formulas which are not satisfied by the set of traces of any Kripke structure. Consider for instance the sentence $\forall \pi. \exists \pi'. \G(a_\pi \Rightarrow a_{\pi'}) \land \F(\neg a_\pi \land a_{\pi'})$. It is satisfied by $\set{a}^*\emptyset^\omega$.

Suppose there exists a Kripke structure $\K$ with a set of traces satisfying this sentence. We define inductively an increasing sequence of finite trace prefixes $p_n$ for $n \in \nats$ as $p_0 = \epsilon$ and
$p_{n+1} = p_n\set{a}$ if $p_n\set{a}$ is a prefix of a trace of $\K$, and $p_{n+1} = p_n\emptyset$ otherwise.
 As the set of traces of a Kripke structure is closed, the limit $t$ of this sequence is a trace of $\K$. As $\K$ satisfies the sentence, there exists $t'$ such that for all $j$ if $a \in t(j)$ then $a \in t'(j)$ and there exists $j^*$ such that $a \in t'(j^*)$ and $a \notin t(j^*)$. In particular, there exists a minimal such $j^*$. Then $p_{j^*+1} = p_{j^*}\emptyset$, but $p_{j^*}\set{a}$ is a prefix of $t'$, thereby contradicting the choice of $p_{j^*+1}$, as we prefer to extend by $\set{a}$ instead of $\emptyset$.
Thus, this satisfiable sentence is not satisfiable by the set of traces of a finite Kripke structure.

\section{Conclusion}
\label{conclusion}

We have shown that \hyltl satisfiability can be decidable, either if one restricts the space of models one is interested in to sufficiently simple ones, or if one restricts the alternation and temporal depth of the formulas under consideration.
In particular, we have investigated the formulas of temporal depth one without untils. 
An interesting open problem is to extend the decidability result presented in Theorem~\ref{sat1forall} to formulas with untils.
Also, we claimed no lower bound on the problem solved in Theorem~\ref{sat1forall}. 
We claim there is an $\exps$ lower bound obtained by encoding exponential space Turing machines, but the exact complexity of the problem is left open. 
Another interesting problem left open is the decidability of \hyltlfgone over Kripke structures. 
We have presented a \tower lower bound in Theorem~\ref{SatFX1K}, but it is open whether the problem is indeed decidable. 

In general, restricting the space of models turns out to be more fruitful than to restrict the formulas under consideration, as satisfiability is undecidable for extremely simple formulas (simplicity being measured in alternation depth and temporal depth).
An interesting challenge pertains to finding other measures of simplicity that yield larger decidable fragments.

\bibliographystyle{splncs03}
\bibliography{biblio}

\end{document}